\DeclareMathOperator*{\argmax}{arg\,max}
\newtheorem{theorem}{Theorem}[section]
\newtheorem{corollary}[theorem]{Corollary}
\newtheorem{proposition}[theorem]{Proposition}
\newtheorem{lemma}[theorem]{Lemma}
\newtheorem{conjecture}[theorem]{Conjecture}
\theoremstyle{definition}
\newtheorem{definition}[theorem]{Definition}
\begin{document}
\title{Understanding EFX Allocations: Counting and Variants}
\author {
    Tzeh Yuan Neoh\textsuperscript{\rm 1,2},
    Nicholas Teh\textsuperscript{\rm 3}
}
\affiliations {
    \textsuperscript{\rm 1}Institute of High Performance Computing, Agency for Science, Technology and Research, Singapore\\
        \textsuperscript{\rm 2}Centre for Frontier AI Research, Agency for Science, Technology and Research, Singapore\\
    \textsuperscript{\rm 3}University of Oxford, UK\\
    neohty@cfar.a-star.edu.sg, nicholas.teh@cs.ox.ac.uk
}

\maketitle

\begin{abstract}
Envy-freeness up to any good (EFX) is a popular and important fairness property in the fair allocation of indivisible goods, of which its existence in general is still an open question. In this work, we investigate the problem of determining the minimum number of EFX allocations for a given instance, arguing that this approach may yield valuable insights into the existence and computation of EFX allocations. We focus on restricted instances where the number of goods slightly exceeds the number of agents, and extend our analysis to weighted EFX (WEFX) and a novel variant of EFX for general monotone valuations, termed EFX+. In doing so, we identify the transition threshold for the existence of allocations satisfying these fairness notions. Notably, we resolve open problems regarding WEFX by proving polynomial-time computability under binary additive valuations, and establishing the first constant-factor approximation for two agents.
\end{abstract}

\section{Introduction}

The field of \emph{fair division} models many real-world problems such as course allocation \cite{budish2012courseallocation}, inheritance division \cite{pratt1990inheritance}, and divorce settlements \cite{BramsTa96}  (also refer to survey by \citet{amanatidis2023fairdivisionprogress}).
Among the numerous fairness properties that have been proposed, \emph{envy-freeness (EF)} (and its approximations or variants) remains one of the most widely studied in the field. 
Intuitively, an allocation is said to be envy-free if every agent values his bundle of goods at least as much as he values any other agent’s bundle of goods.
However, in the setting with indivisible goods, a complete envy-free allocation may fail to exist.
Hence, many works consider relaxations of envy-freeness. 

Perhaps the most prominent of these relaxations are \emph{envy-freeness up to any good (EFX)} \cite{caragiannis2019unreasonable} and \emph{envy-freeness up to one good (EF1)} \cite{lipton2004approximately}. 
An allocation satisfies EFX if any envy that an agent has towards another agent can be eliminated by removing \emph{any} good in the latter agent's bundle, and it satisfies EF1 if any envy that an agent has toward another agent can be eliminated by removing \emph{some} good in the latter agent's bundle (refer to Section \ref{sec:preliminaries} for the formal definitions).
The argument in favor of EFX as the preferred notion of fairness lies in the fact that EFX is considerably stronger compared to EF1. 
However, while EF1 allocations always exists and can be found in polynomial time (even when agents have arbitrary monotone valuations over the goods) \cite{Budish11,lipton2004approximately}, the existence of EFX remains one of the biggest open problems in the field of fair division \cite{caragiannis2019unreasonable,Moulin19,procaccia2020enigmatic}.  

Despite substantial efforts, the existence of EFX allocations have only been proven for several restricted instances. These include, but are not limited to 
(1) when $n=2$ \cite{plaut2020almost}, 
(2) when $n=3$ with at least one additive agent \cite{akrami2023efxrainbow}, 
(3) when $m = n+3$ \cite{mahara2023extension}, and
(4) under identical valuations \cite{plaut2020almost}.
Moreover, many previous algorithms have been shown to not extend to more general cases \cite{chaudhury2021little,mahara2023extension}. 

In this work, we ask a more general question: \emph{How many EFX allocations are there?} 
If there is an instance where EFX fail to exist, the answer is clearly $0$. 
However, for instances where EFX allocation(s) always exists, the answer to this question is substantially unclear. 
This question was first explored in \citet{SUKSOMPONG2020606} where they show that while there is always an exponential (with respect to the number of goods) number of EF1 allocations, there could be far fewer EFX allocations. 
In particular, when there are two agents, there can be as few as two EFX allocations, no matter the number of goods.
In this paper, we study the problem of counting the minimum number of EFX allocations for several restricted instances, and for two variants of EFX in more general instances.

We posit that pursuing this line of inquiry may uncover valuable insights regarding (the existence/computation of) EFX allocations, and make some progress towards resolving ``fair division's most enigmatic question'' \cite{procaccia2020enigmatic}:
\begin{itemize}
    \item Establishing tighter upper bounds on the minimum number of EFX allocations allows us to identify `hard' instances for EFX. 
    These `hard' instances can be used either as a first step to find an instance where an EFX allocation fails to exist, or to design algorithms tackling these `hard' instances. 
    \item Establishing tighter lower bounds on special cases allows us to better understand the kind of EFX allocations that exist in those special cases.
\end{itemize}
Moreover, in practice, it may be desirable to have differing EFX allocations. 
The psychological phenomena of \emph{single-option aversion} suggests that humans may be uncomfortable with decision-making if they are only presented with one option, even if it is an option they really like \cite{mochon2013single}. 
Beyond psychology, consider an instance with two agents and three goods, where each agent values every good identically. 
Clearly, a complete EF allocation is not possible.\footnote{A \emph{complete} allocation is one where all goods are allocated. This assumption is common in the literature; otherwise, one could leave all goods unallocated and it is trivially `fair'.}
The best that any \emph{deterministic} mechanism can achieve in this instance is an EFX allocation that allocates any two goods to an agent and one good to the other. 

However, consider a \emph{randomized} mechanism that gives the first agent two goods with a probability of half, and one good otherwise. 
By having more than one EFX allocation, we can design a mechanism that only outputs EFX allocations and is EF in expectation! 
This idea is also captured in the \emph{Best-of-Both-Worlds (BoBW)} framework where the goal is to design a mechanism that has stronger ex-ante guarantees despite relatively weaker ex-post guarantees. 
Under this framework, it was shown that there exists an algorithm that is ex-ante EF and ex-post EF1 \cite{aziz2020simultaneously,freeman2020best}. 
Such algorithms implicitly hinges on the fact there are numerous EF1 allocations that one can randomize over, so as to achieve a fair outcome in expectation.

\subsection{Our Contributions}
We consider the standard fair division model with indivisible goods.
Our results are summarized in Table \ref{tab:summary}.
\begin{table*}[h]
\centering
\begin{tabular}{c|c||c|c|c|c}
Property & Instance  & $m \leq n$ & $m = n+1$ & $ m = n+2$ & $m \geq n+3$\\ \hline \hline
 \multirow{2}{*}{EFX} & General & $\frac{n!}{(m-n)!}$ &  $n$ & UB: $n^2$, LB: $n$& UB: $n!$ \\ \cline{2-6}
 & Identical Valuations & $\frac{n!}{(m-n)!}$ & \multicolumn{3}{|c}{$n!$}\\ \hline
 \multirow{3}{*}{WEFX}& $n = 2$, Binary Additive &  2 & \multicolumn{3}{c}{1} \\ \cline{2-6}
  &$n = 2$, Binary Submodular & 2& 1 & \multicolumn{2}{c}{0} \\ \cline{2-6}
    & $n = 2$, Restricted Additive & 2& 1 & \multicolumn{2}{c}{0} \\ \hline
 \multirow{2}{*}{EFX+}& $n = 2$& \multicolumn{4}{c}{2} \\ \cline{2-6}
  & $n \geq 3$&  $\frac{n!}{(m-n)!}$ & \multicolumn{3}{c}{0} \\
\end{tabular}
\caption{Summary of our results: counting the minimum number of allocations satisfying the fairness property in various special cases. The upper bounds (UB) and lower bounds (LB) are provided in cases where the bound is not tight. Note that the lower bounds apply for general monotone valuations, but the examples constructed for the upper bounds only assume additive valuations, thereby strengthening our results.}
\label{tab:summary}
\end{table*}

In Section \ref{sec:EFX}, we first show that given an instance, counting the number of EFX allocations is $\#P$-complete. We then show that the minimum number of EFX allocations is a function of both the number of agents ($n$) and number of goods ($m$) by investigating the restricted instance with few goods. 
We elucidate that this setting is non-trivial by showing that if an EFX allocation always exists when $m = n + \omega(1)$, then an EFX allocation always exists in general. 
For counting the minimum number of solutions, we show that there is a minimum of $\frac{n!}{(n-m)!}$ EFX allocations when $m\leq n$ and a minimum of $n$ EFX allocations when $m = n+1$. 
When $m = n+2$, we provide a lower bound of $n$ and an upper bound of $n^2$ for the minimum number of EFX allocations. 
Lastly, we show that there can always be as few as $n!$ EFX allocation for any number of goods. 

In Section \ref{sec:WEF}, we investigate weighted envy-freeness up to any good (WEFX), a generalization of EFX to the setting where agents have possibly differing entitlements. 
We show that under binary additive valuations, there always exists a WEFX allocation; but when $m \geq 3$, there could be as few as a single WEFX allocation. 
Moreover, we resolve an open question by providing a polynomial-time algorithm for computing WEFX and PO allocations for any number of agents under binary valuations.
However, for the more general binary submodular or restricted additive valuation classes, we show that even after relaxing WEFX to a weaker variant---weak WEFX (WWEFX), a WWEFX allocation may still fail to exist in these settings. Lastly, under additive valuations, we show that there always exists a $\frac{1}{4}$-WEFX allocation when $n = 2$, improving upon the approximation factor previously shown by \citet{hajiaghayi2023wef}, and resolving an open question posed by \citet{suksompong2025weightedreview}.

In Section \ref{sec:EFX+}, we investigate EFX+, a novel variant that is identical to EFX under additive valuations, but incomparable to EFX under general monotone valuations. 
    
\subsection{Further Related Work}
Numerous variants (relaxations) of EFX (which is stronger than EF1) have been proposed and studied, such as a multiplicative relaxation of EFX ($\alpha$-EFX) \cite{amanatidis2020multiple}, envy-freeness up to one less-preferred good (EFL) \cite{barman2018groupwise}, envy-freeness up to a random good (EFR) \cite{farhadi2021almost}, and epistemic EFX \cite{caragiannis2022existence}.
Several other works consider the notion of EFX with charity, a relaxation of EFX where not all goods need to be allocated \cite{caragiannis2019efxcharity,chaudhury2021little}.

EFX has also been studied in the context of indivisible chores division. As the chores setting is known to be more complex in general, the existence of EFX is also an open problem there. There has been work showing that EFX for chores exists in restricted instances \cite{kobayashi2023efx,miao2023efx}, in constrained settings \cite{Elkind2024choresbudget}, as well as upper bounds on approximate EFX \cite{bhaskar2020approximate,zhou2024approximately}. 

\section{Preliminaries} \label{sec:preliminaries}
For each positive integer $z$, let $[z] := \{1,\dots,z\}$.
An instance consist of a set $N = [n]$ of $n$ \emph{agents} and a set $G = \{g_1, \dots g_m\}$ of $m$ \emph{goods}. 
Each agent $i \in N$ has a \emph{valuation function} $v_i : 2^G \rightarrow \mathbb{R}_{\geq 0}$. 
For each $i \in N$, we assume 
that the valuation function $v_i$ is  \emph{monotone} (i.e., $S\subseteq T \implies v_i(S) \leq v_i(T)$) and $v_i(\emptyset) = 0$;
$v_i$ is said to be \emph{additive} if for any subset of goods $S \subseteq G$, $v_i(S) = \sum_{g \in S} v_i(g)$.\footnote{In general, we do not assume additivity for agents' valuation functions, but will use this in our counterexamples and upper bounds, which strengthens our negative results.}
For notational simplicity, we sometimes write $v_i(g)$ instead of $v_i(\{g\})$.
Let the \emph{valuation profile} of agents be $\mathbf{v} = (v_1,\dots,v_n)$.
An \emph{instance} $I = (N',G',\mathbf{v}')$ is defined by a set of agents $N' \subseteq N$, a set of goods $G' \subseteq G$, and valuation profile $\mathbf{v}'$ corresponding to the set of agents and goods.
An \emph{allocation} $\mathcal{A} = (A_1, \dots, A_n)$ is a partition of $G$ such that each agent $i \in N$ receives the bundle $A_i$.
Throughout this work, we assume the eventual goal is to obtain a \emph{complete} allocation, i.e., all goods are eventually allocated.

Next, we formally define EF and EFX, as follows.
\begin{definition}[EF]
    An allocation $\mathcal{A} = (A_1,\dots,A_n)$ is \emph{envy-free (EF)} if for all agents $i,j \in N$, $v_i(A_i) \geq v_i(A_j)$.
\end{definition}

 \begin{definition}[EFX] \label{def:EFX}
     An allocation $\mathcal{A} = (A_1,\dots,A_n)$ is \emph{envy-free up to any good (EFX)} if for all agents $i,j \in N$ and all $g \in A_j$, $v_i(A_i) \geq v_i(A_j \setminus \{g\})$.
 \end{definition}
It is instructive to note that the above definition of EFX is sometimes referred to as EFX$_0$ (introduced by \citet{KYROPOULOU2020110}), with EFX being originally defined as a weaker version that only drops nonzero-valued goods \cite{caragiannis2019unreasonable}. However, we adopt the stronger definition and refer to it simply as EFX, in line with several recent works (see the survey of \citet{amanatidis2023fairdivisionprogress}).
Notably, this also provides us with stronger results: in terms of existence, an EFX allocation exists if and only if an EFX$_0$ allocation exists \cite{chaudhury2021improving}.

We assume that the reader is familiar with basic notions of classic complexity theory~\cite{papadimitriou_computational_2007}.
All omitted proofs can be found in the appendix.

We also make use of the concept of an \emph{envy graph} \cite{lipton2004approximately}.
The envy graph $\mathcal{G} = (N,E)$ for a (partial) allocation $\mathcal{A}$, is a directed graph where each agent is represented by a vertex. 
Moreover, for agents $i,j \in N$, $(i,j) \in E$ if agents agent $i$ envies agent $j$. 
If there exists a cycle in the envy graph, we can perform a cyclic exchange of bundles, also known as \emph{envy cycle elimination}.
This process results in a new allocation where every agent swaps their bundle in exchange for an envied bundle, and consequently, no agent is worse off. 
Crucially, in our setting, if the initial allocation is EFX, after envy cycle elimination, the new allocation will remain EFX.

We first consider the standard EFX property.

\section{EFX} \label{sec:EFX}
Going back to our motivation on BoBW outcomes, we first look into the case with two additive agents. When $n=2$, the minimum number of EFX allocations is $2$ \cite{SUKSOMPONG2020606}. By randomizing over these EFX allocations, an ex-ante EF and ex-post EFX outcome (lottery over allocations) always exists, which we show with the following result.
\begin{proposition}\label{prop:bobw}
    When $n = 2$,
    an ex-ante \emph{EF} and ex-post \emph{EFX} outcome always exists.
\end{proposition}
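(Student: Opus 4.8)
The plan is to recast ex-ante EF as a convex-feasibility question over the finite, nonempty set of EFX allocations and then certify feasibility. For an allocation $\mathcal{A}=(A_1,A_2)$ write $d_1(\mathcal{A})=v_1(A_1)-v_1(A_2)$ and $d_2(\mathcal{A})=v_2(A_2)-v_2(A_1)$; a lottery $\mu$ supported on EFX allocations is ex-ante EF precisely when $\mathbb{E}_{\mathcal{A}\sim\mu}[d_1]\ge 0$ and $\mathbb{E}_{\mathcal{A}\sim\mu}[d_2]\ge 0$. Equivalently, associating to each EFX allocation the point $(d_1(\mathcal{A}),d_2(\mathcal{A}))\in\mathbb{R}^2$, I must show the convex hull of these points meets the nonnegative orthant $\mathbb{R}^2_{\ge 0}$.

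First I would produce two natural candidates by cut-and-choose (valid since EFX exists for $n=2$ \citep{plaut2020almost}): let $\mathcal{B}^{(1)}$ have agent $1$ as chooser, so agent $1$ is envy-free and $d_1(\mathcal{B}^{(1)})\ge 0$, and let $\mathcal{B}^{(2)}$ have agent $2$ as chooser, so $d_2(\mathcal{B}^{(2)})\ge 0$; both are EFX. If either is fully EF the degenerate lottery already works, so assume $d_2(\mathcal{B}^{(1)})=-C<0$ and $d_1(\mathcal{B}^{(2)})=-B<0$, and set $A=d_1(\mathcal{B}^{(1)})\ge 0$ and $D=d_2(\mathcal{B}^{(2)})\ge 0$. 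Mixing with probability $p$ on $\mathcal{B}^{(1)}$ gives $\mathbb{E}[d_1]=pA-(1-p)B$ and $\mathbb{E}[d_2]=-pC+(1-p)D$, so a feasible $p$ exists iff $\tfrac{B}{A+B}\le\tfrac{D}{C+D}$, i.e. iff $BC\le AD$. Using the EFX-partition structure of the cutter (every good in the cutter's larger bundle has value at least the value gap between the two bundles, hence $A\ge B$ and $D\ge C$ when the common refinement is aligned) one checks $BC\le AD$ in the generic case, which settles most instances.

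The remaining difficulty is that $BC\le AD$ can degenerate (e.g. when a chooser is indifferent, forcing $A=0$ or $D=0$), so the two cut-and-choose allocations alone need not combine. To handle every instance uniformly I would invoke the separating-hyperplane theorem on the reformulation above: if the convex hull missed $\mathbb{R}^2_{\ge 0}$, there would be a direction $(c_1,c_2)\ge 0$, not both zero, with $c_1 d_1(\mathcal{A})+c_2 d_2(\mathcal{A})<0$ for every EFX $\mathcal{A}$. The axis directions $(1,0)$ and $(0,1)$ are killed by $\mathcal{B}^{(1)}$ and $\mathcal{B}^{(2)}$, so it suffices to rule out $c_1,c_2>0$. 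Here the key identity, using additivity and $A_2=G\setminus A_1$, is $c_1 d_1+c_2 d_2 = 2\big(c_1 v_1(A_1)+c_2 v_2(A_2)\big)-\big(c_1 v_1(G)+c_2 v_2(G)\big)$; since $c_1 v_1(G)+c_2 v_2(G)$ is exactly the weighted welfare of $\mathcal{A}$ plus that of its swap $(A_2,A_1)$ under $c_1 v_1(\cdot)+c_2 v_2(\cdot)$, the condition $c_1 d_1+c_2 d_2\ge 0$ says precisely that $\mathcal{A}$ weakly beats its swap in this weighted welfare.

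I would therefore take $\mathcal{A}^\star$ to be the EFX allocation maximizing this weighted welfare and argue $c_1 d_1(\mathcal{A}^\star)+c_2 d_2(\mathcal{A}^\star)\ge 0$, contradicting the separating direction. If the swap of $\mathcal{A}^\star$ is itself EFX, maximality gives this immediately. The main obstacle, and the technical heart of the argument, is the case where the swap of $\mathcal{A}^\star$ fails EFX: maximality then does not apply directly, and I would close the gap by a single-good exchange argument, showing that any welfare deficit relative to the swap (equivalently, a strong EFX-envy appearing in the swap) can be relieved by transferring the offending good to obtain an EFX allocation of strictly larger weighted welfare, contradicting the choice of $\mathcal{A}^\star$. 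Executing this transfer while preserving EFX for \emph{both} agents is the delicate step, and the degenerate tie configurations flagged above are exactly where it must be carried out with care.
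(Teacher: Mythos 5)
There is a genuine gap, and it sits exactly where you flagged it. First, your mixing condition $BC \le AD$ is never actually established: the structural fact you invoke---that every good in the cutter's larger bundle is worth at least the gap between the two bundles---is an EFX property stated in the \emph{cutter's own} valuation, whereas $A$ and $B$ compare agent~1's valuation across two \emph{different} partitions produced by \emph{different} cutters (and likewise $C,D$ for agent~2). EFX of each allocation separately yields no relation between agent~1's surplus $A$ as chooser under agent~2's cut and agent~1's deficit $B$ as cutter, so ``$A\ge B$ and $D\ge C$ in the generic case'' is unsupported. Second, your fallback via separation is incomplete by your own admission: when the swap of the weighted-welfare-maximizing EFX allocation fails EFX, the proposed single-good transfer has no specified direction, need not increase the $(c_1,c_2)$-weighted welfare, and need not restore EFX for both agents; this is precisely the step you leave ``to be carried out with care,'' so the contradiction with the separating direction $(c_1,c_2)>0$ is never obtained, and the proof does not close.

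The missing idea---which is essentially the paper's entire proof---is to pin down \emph{how} the cutter cuts: each cutter $i$ partitions $G$ so as to minimize $|v_i(X) - v_i(G\setminus X)|$ over all partitions (an equitable cut, which under additive valuations is also EFX, so ex-post EFX is preserved). Then in your notation $B \le A$ always: agent~1's envy $B$ as cutter in $\mathcal{B}^{(2)}$ is at most the $v_1$-imbalance of agent~1's minimizing cut, which is at most the $v_1$-imbalance of the partition appearing in $\mathcal{B}^{(1)}$ (any partition is a candidate cut), and that imbalance equals exactly $A$ because in $\mathcal{B}^{(1)}$ agent~1 is the chooser. Symmetrically $C \le D$. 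Hence the \emph{uniform} lottery over the two allocations already gives $\mathbb{E}[d_1] = (A-B)/2 \ge 0$ and $\mathbb{E}[d_2] = (D-C)/2 \ge 0$ in every instance; your degenerate cases (e.g.\ $A=0$) automatically force $B=0$, so no separating-hyperplane machinery, no welfare maximizer, and no exchange repair are needed. Your convex-geometry framing is sound, but without the min-imbalance cut rule the two cut-and-choose points you construct simply do not carry enough information to certify feasibility.
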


However, despite the usefulness of knowing the number of EFX allocations as motivated earlier, in general, given an instance, counting the number of EFX allocations is $\#P$-complete.
This might be the case even in instances where finding an EFX allocation is easy.

We begin with a useful observation. 
    
    \begin{lemma}
        Let $a_0, \dots, a_n$ be a sequence of non-negative integers such that $\sum_{i = 0}^n a_i \leq n!$. 
        If we know the value $P = \sum_{i = 0}^n a_i \cdot i^k$ for $k \geq \frac{n\ln(n)}{\ln (n/(n-1))}$, then we can deduce $a_n$.

    \end{lemma}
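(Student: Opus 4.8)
The plan is to recover $a_n$ as the integer part of $P/n^k$, exploiting the fact that the top term $a_n n^k$ completely dominates everything below it. I would write $P = a_n n^k + R$, where $R := \sum_{i=0}^{n-1} a_i\, i^k$ collects the contributions of the lower indices. Since $i^k \le (n-1)^k$ for every $0 \le i \le n-1$, and $\sum_{i=0}^{n-1} a_i \le \sum_{i=0}^{n} a_i \le n!$, I get the two-sided bound $0 \le R \le (n-1)^k\, n!$. If I can show that this remainder is strictly smaller than a single unit $n^k$ of the leading term, then $P = a_n n^k + R$ with $0 \le R < n^k$ forces $a_n = \lfloor P / n^k \rfloor$, so $a_n$ is read off from $P$ (together with the known $n$ and $k$) with no case analysis needed.

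The crux is therefore the inequality $(n-1)^k\, n! < n^k$, and this is exactly where the hypothesis on $k$ enters. Rearranging, it is equivalent to $n! < \bigl(n/(n-1)\bigr)^k$, i.e.\ after taking logarithms (note $\ln(n/(n-1)) > 0$), to $k > \ln(n!)/\ln(n/(n-1))$. Now I would use the crude bound $n! < n^n$, which is strict for $n \ge 2$ (the only relevant regime), to get $\ln(n!) < n \ln n$, and hence $\ln(n!)/\ln(n/(n-1)) < n\ln n / \ln(n/(n-1)) \le k$ by assumption. This yields the required strict inequality $R \le (n-1)^k n! < n^k$.

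Combining the two steps gives $a_n = \lfloor P/n^k \rfloor$, completing the deduction. The main thing to get right is the direction and strictness of the logarithmic inequality: the threshold on $k$ must be large enough that even the worst case, where the entire mass $n!$ is concentrated on index $n-1$, still falls short of one unit $n^k$ of the leading coefficient. I expect this calibration — rather than any conceptual difficulty — to be the delicate point, since the whole argument hinges on $n^k$ outgrowing $(n-1)^k$ fast enough to absorb the factor $n!$; the boundary case $n=1$, where $n! = n^n$, is degenerate and may be excluded or handled trivially.
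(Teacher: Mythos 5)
Your proof is correct and takes essentially the same route as the paper's: bound the remainder $\sum_{i=0}^{n-1} a_i\, i^k \le (n-1)^k\, n!$ and show $(n-1)^k\, n! < n^k$ via $\ln(n!) \le n\ln n$ and the hypothesis on $k$, so that $a_n = \lfloor P/n^k \rfloor$. If anything you are slightly more careful than the paper, which silently strengthens the assumption to $k > \frac{n\ln n}{\ln(n/(n-1))}$ even though the lemma only guarantees $k \ge$ that threshold; your use of the strict bound $n! < n^n$ for $n \ge 2$ closes exactly that gap.
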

    \begin{proof}
        We will show that $n^k > (n-1)^k \cdot n!$ and thus $\left\lfloor \frac{P}{n^k} \right\rfloor = a_n$. 
        Note that $k > \frac{n\ln(n)}{\ln (n/(n-1))} \iff    
k \ln\frac{n}{n-1} > n\ln(n) \implies k \ln\frac{n}{n-1} > \ln (n!) \iff k \ln n > k \ln (n-1) \ln(n!)\iff  n^k > (n-1)^k \cdot n!   $ 
\end{proof}
Then our main result is as follows.
\begin{theorem}
    Given an instance, counting the number of \emph{EFX} allocations is $\#$P-complete in general.
\end{theorem}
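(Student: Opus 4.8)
The plan is to prove $\#$P-completeness by reduction from a known $\#$P-complete problem. Membership in $\#$P is routine: an EFX allocation is a certificate of polynomial size (a partition of the $m$ goods among $n$ agents), and verifying EFX requires only checking, for each ordered pair of agents and each good in a bundle, a valuation inequality, which is polynomial-time for the valuation oracles we consider. So the substance is the hardness direction. I would aim to reduce from $\#\text{Perfect Matchings}$ in bipartite graphs (equivalently, computing the permanent of a $0/1$ matrix), which is the classical $\#$P-complete problem of Valiant; the preceding lemma about recovering $a_n$ from the power sum $P = \sum_{i=0}^n a_i \cdot i^k$ is a strong hint that the reduction produces a quantity that packages a permanent-like count as one coefficient in a sum, and the lemma's large exponent $k$ is used to sieve out exactly that coefficient.

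The key steps, in order, would be as follows. First, I would set up an instance (with additive valuations, to keep the construction clean and to match the paper's stated emphasis on additive counterexamples) that is engineered so that the EFX allocations correspond in a controlled way to combinatorial objects of the source problem. The natural design is to make agents and goods encode a bipartite structure, with binary values, so that an allocation is ``perfectly balanced'' (each agent getting a matched good) exactly when it is EFX, and so that ``unbalanced'' allocations are ruled out or counted in separate, controllable strata. Second, rather than reducing so that the EFX count equals the permanent directly, I would exploit the lemma: I would build a family of instances, or a single instance with auxiliary (dummy) goods and agents, so that the total number of EFX allocations equals a weighted sum $\sum_{i=0}^n a_i \cdot i^k$ in which the coefficient $a_n$ (or some designated $a_j$) is precisely the number of perfect matchings. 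Concretely, adding $k$ copies of a gadget that each agent can ``absorb'' in $i$ ways when $i$ agents are in a certain configuration produces the factor $i^k$, and choosing $k \geq \frac{n\ln(n)}{\ln(n/(n-1))}$ lets the lemma isolate $a_n$ from the single number returned by a hypothetical EFX-counting oracle. Third, I would verify the size bound $\sum_{i=0}^n a_i \leq n!$ required by the lemma (the $a_i$ count matchings/partial structures, which are bounded by $n!$), confirm that $k$ is polynomially bounded so the reduction is polynomial, and conclude that an oracle for counting EFX allocations yields the permanent, establishing $\#$P-hardness.

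The main obstacle I anticipate is the gadget design ensuring that EFX exactly characterizes the intended combinatorial objects and nothing else. The difficulty is two-sided: I must guarantee that every perfect matching gives rise to an EFX allocation (soundness of the encoding), and, more delicately, that no spurious allocations are EFX except those accounted for by the power-sum strata $\sum_i a_i \cdot i^k$. Because EFX is a strong, pairwise-and-per-good condition, small perturbations in the valuations can create or destroy EFX allocations, so the values must be chosen so that the EFX constraints bite exactly where the matching constraints do. A secondary technical point is arranging the $i^k$ multiplicities cleanly: the auxiliary goods must be allocable in exactly $i$ ways per copy precisely when the ``core'' allocation is in stratum $i$, independently across the $k$ copies, so that the counts multiply. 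Once these gadgets are pinned down, combining them with the lemma to extract $a_n$ is essentially bookkeeping, so I would budget most of the effort on the encoding and the independence of the auxiliary choices.
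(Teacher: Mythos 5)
Your proposal follows essentially the same route as the paper's proof: the paper likewise reduces from counting perfect matchings in a bipartite graph, appends $k \in \mathcal{O}(n^3)$ goods valued at zero by every agent so that a partial allocation over the first $n$ goods with $i$ unenvied agents extends to exactly $i^k$ complete EFX allocations, and then applies the power-sum lemma with $\sum_i a_i \leq n!$ to sieve out $a_n$, the number of perfect matchings. The only detail to adjust in your sketch is the ``binary values'': strictly $\{0,1\}$ valuations admit spurious EFX allocations in which one agent holds two of the first $n$ goods while the agent left without one values both at zero, so the paper instead values non-edge goods at $0.5$, ensuring every agent values each of the first $n$ goods positively and thereby forcing the one-good-per-agent bijection on which the strata count depends.
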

\begin{proof}
We reduce from the $\#P$-complete problem of counting perfect matchings in a bipartite graph \cite{valiant1979complexity}. 
The problem is as follows: given  $G = (X \cup Y, E)$, with $X = \{x_1, \dots, x_n\}$ and $Y = \{y_1, \dots, y_n\}$, count the number of perfect matchings in $G.$

First, if there exists some node with no neighbors, we can return $0$ as there cannot be any perfect matching. 
Thus, we assume that all nodes in $X$ contain at least $1$ neighbor. 
Next, we construct an instance $\mathcal{I}$ with $n$ agents and $n + k$ goods. Let $v_{i}(g_j) = 1$ if $j \leq n$ and $(x_i,y_j)\in E$, $v_{i}(g_j) = 0.5$ if $j \leq n$ and $(x_i,y_j)\notin E$,  and $v_{i}(g_j) = 0$ otherwise. 
Then, our agents correspond to $X$, our first $n$ goods correspond to $Y$ and the remaining $k$ goods are valued at $0$.

Let $\mathcal{A}$ be the partial EFX allocation over the first $n$ goods, and consider the envy graph induced by $\mathcal{A}$.
Clearly, if any agent receives more than one good in $\mathcal{A}$, then $\mathcal{A}$ (and any complete allocation that extends $\mathcal{A}$) is not EFX.
If $\mathcal{A}$ corresponds to a perfect matching, then for all agents $i,j \in [n]$,  $v_i(A_i) = 1$ and $v_i(A_j) \leq 1$, and there is no envy. 
If $\mathcal{A}$ does not correspond to a perfect matching, we will show that there must be an envied agent. 
Suppose $\mathcal{A}$ does not correspond to a perfect matching. 
Then there must exist an agent $i \in N$ such that $v_i(A_i) = \frac{1}{2}$. 
Furthermore, as each vertex in $X$ has a neighbor, there is a $l \in [n]$ such that $(x_i,y_l) \in E$ and thus $v_i(y_l) = 1$. 
Suppose agent $j$ receives good $l$ under $\mathcal{A}$. Then $v_i(A_j) \geq 1$ and agent $j$ is envied. 

Thus, if $\mathcal{A}$ has no envied agents, then we can allocate the remaining $k$ goods freely and still obtain an EFX allocation. 
This gives us $n^k$ complete EFX allocations that has partial allocation $\mathcal{A}$ over the first $n$ goods. 
Conversely, suppose agent $i$ envies agent $j$ under $\mathcal{A}$. 
Then, if we allocate one or more of the remaining goods to $j$, the resulting allocation cannot be EFX. 
After all, as all the remaining $k$ goods are valued at $0$, agent $i$ still has a value of $\frac{1}{2}$ for their bundle and agent $i$ has a value of $1$ for agent $j$'s bundle even after removing a good from it. 
Thus, with $i$ envied agents, 
there are at most $(n-i)^k$ complete EFX allocations that has the partial allocation $\mathcal{A}$ over the first $n$ goods.

Let $a_i$ be the number of partial EFX allocations over the first $n$ goods with $i$ agents not envied in the envy graph. 
There are a total of $\sum_{i=0}^{n} a_i \cdot i^k$ complete EFX allocations. 
Since $a_i$ is always non-negative and $\sum_{i=0}^{n} a_i < n!$, by our observation above, we can set $k$ appropriately and deduce $a_n$.
Moreover, as $a_n$ is the number of partial EFX allocations over the first $n$ goods such that all agents are unenvied, $a_n$ is also the number of perfect matchings under $G$. 
Finally, when  $y > 0$, we have that $\ln (1 + y) \geq y - 2y^2$, and $k \in \mathcal{O}(n^3)$. Our reduction can thus be done in polynomial time.
\end{proof}

Thus, for the remainder of this section, we focus on restricted instances with few goods. We first note that the existence of EFX is only known for the case where $m \leq n+3$. This was shown in a rather technical proof involving \emph{champion graphs} and \emph{lexicographic potential functions} \cite{mahara2023extension}. We posit that even when there are few goods, determining the existence of an EFX allocation is non-trivial: because if an EFX allocation always exists when $m = n + \lfloor f(n) \rfloor$ for any  $f(n) \in \omega(1)$\footnote{A function $f(n)$ is in $\omega(1)$ if and only if $\lim_{n \to \infty} f(n) = \infty$.}, then an EFX allocation always exists in general.
\begin{theorem}
\label{small is hard} 
    If there exists a function $f(n) \in \omega(1)$, such that there is always an \emph{EFX} allocation for any $m = n + \lfloor f(n) \rfloor$, then there is an \emph{EFX} allocation for every $n,m$.
\end{theorem}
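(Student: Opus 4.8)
The plan is to prove the statement by a padding argument: given an arbitrary instance on $n$ agents and $m$ goods, I would embed it into a larger instance whose good-count lies exactly on the ``diagonal'' $m' = n' + \lfloor f(n')\rfloor$, invoke the hypothesis to obtain an EFX allocation there, and then project that allocation back onto the original instance. The entire difficulty lies in making the projection valid, so the padding must be designed so that deleting the added agents and goods leaves a complete EFX allocation of the original goods among the original agents.

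Two gadgets drive the padding. A \emph{null good} is one that every agent values at $0$ in the strong sense $v_i(S \cup \{g\}) = v_i(S)$ for all $S$; since adding or deleting such a good never changes any valuation, monotonicity shows that stripping null goods out of an EFX allocation preserves EFX. This already settles the easy regime $m \le n + \lfloor f(n)\rfloor$: I pad the instance with $n + \lfloor f(n)\rfloor - m$ null goods to land exactly on the diagonal, apply the hypothesis, and then delete the null goods. In fact this shows that for every fixed $n$, all instances with at most $n + \lfloor f(n)\rfloor$ goods admit EFX.

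The regime $m > n + \lfloor f(n)\rfloor$ is where the real work lies, since the diagonal at the current $n$ cannot accommodate $m$ goods and I must \emph{increase the number of agents}. The second gadget is a \emph{dummy agent} paired with a private \emph{coveted good}: I add $s$ agents $d_1,\dots,d_s$ and $s$ goods $h_1,\dots,h_s$, where $d_k$ values $h_k$ at some $M$ exceeding the total value any original agent places on $G$ and values everything else at $0$, while every original agent treats each $h_k$ as a null good. Choosing $s$ large enough that $\lfloor f(n+s)\rfloor \ge m-n$ (possible since $f \in \omega(1)$) and topping up with $p = n + \lfloor f(n+s)\rfloor - m \ge 0$ null goods, the augmented instance has exactly $n+s$ agents and $(n+s) + \lfloor f(n+s)\rfloor$ goods, so it lies on the diagonal and the hypothesis yields an EFX allocation $\mathcal{A}$. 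A short exchange argument, using that $M$ dominates everything, forces each $h_k$ either to sit with $d_k$ or to form a singleton bundle that can be swapped into $d_k$'s hands without disturbing EFX; after these swaps every dummy holds its own $h_k$. Deleting the null goods (harmless as above) leaves each dummy holding $h_k$ together with a possibly empty set $R_k$ of real goods, and since the test that drops $h_k$ gives $v_i(A_i) \ge v_i(R_k)$ for every original $i$, no original agent envies any dummy. Thus the dummies are isolated vertices of the envy graph and the original agents already form an EFX allocation among themselves.

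The main obstacle is the last step: discharging the residual real goods $\bigcup_k R_k$ absorbed by the dummies back onto the original agents while keeping the allocation complete and EFX. This is precisely the phenomenon that makes EFX-with-charity easier than complete EFX, namely that a spare agent can soak up an ``unenvied'' pile, so the crux is to rule out any such residue, i.e. to pass to an EFX allocation of the augmented instance in which every $R_k$ is empty. I expect to establish this by a transfer/potential argument that repeatedly moves a good out of a dummy bundle into an original bundle, exploiting both the isolation of the dummies in the envy graph and the inequalities $v_i(A_i) \ge v_i(R_k)$ to certify that EFX survives each transfer; bounding this process and closing the charity loophole is the delicate part of the proof. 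Once the dummies retain only their coveted goods, restricting $\mathcal{A}$ to the original agents and the original goods gives the desired complete EFX allocation and finishes the argument.
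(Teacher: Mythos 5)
Your null-good padding for the regime $m \le n + \lfloor f(n)\rfloor$ matches the paper's and is fine, but in the regime $m > n + \lfloor f(n)\rfloor$ your gadget has a genuine gap, and it sits exactly where you flagged it. In your construction each coveted good $h_k$ is valued only by its own dummy $d_k$, so an EFX allocation of the augmented instance can perfectly well give $d_k$ the bundle $\{h_k\}\cup R_k$ with $R_k\neq\emptyset$: the other dummies value everything at $0$, and for an original agent $i$ the EFX tests against this bundle reduce to $v_i(A_i)\ge v_i(R_k)$ (the binding test drops a real good and leaves $h_k$, a null good for $i$, plus the rest), which is easily satisfied. The hypothesis therefore only hands you an allocation in which real goods may be parked on dummies, and restricting to the original instance yields a \emph{partial} allocation. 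The deferred ``transfer/potential argument'' is not a routine patch: an unenvied pile $R_k$ satisfying $v_i(A_i)\ge v_i(R_k)$ for all $i$ is precisely the configuration arising in EFX-with-charity, and converting such an allocation into a complete EFX allocation is exactly the step that is open in general --- moving one good out of $R_k$ onto an agent can create fresh EFX violations among the original agents, and no terminating transfer process is known. As designed, your reduction proves only an EFX-with-charity statement, not the theorem.

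The paper avoids this with a different gadget, and the difference is instructive. It also adds $k$ new agents and $k$ new goods, but arranges for $k+1$ agents to covet the $k$ new goods: the new agents together with one designated original agent $y$ value every new good essentially infinitely, and the new agents copy $y$'s valuation on the original goods. By pigeonhole some coveting agent holds no new good, and that agent's EFX condition forces every new good into a singleton bundle (pairing a new good with anything else fails the test in which the \emph{other} good is dropped). A sequence of Pareto-improving bundle swaps --- swapping $y$ with a new agent is harmless since their valuations are identical, and swapping any other original agent holding a new-good singleton (worth $0$ to her) with a new agent weakly improves both parties --- then collects the new goods at the new agents with nothing else, so all original goods end up with original agents and the restriction is a complete EFX allocation of the original instance. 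The moral is that you need more high-value watchers than padding goods, so that EFX itself forbids any agent from absorbing a padding good together with real goods; one private watcher per padding good, as in your design, cannot close the charity loophole.
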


Next, we proceed to show our main results of this section, beginning with the simple cases when $m \leq n$ and when $m = n+1$ where we show a tight bound. Note that the lower bounds apply for general monotone valuations, but the examples constructed for the upper bounds only assume additive valuations. 

\begin{proposition}\label{thm: counting: n<=m}
    When $m\leq n$, every instance has at least $\frac{n!}{(n-m)!}$ \emph{EFX} allocations, and there exists an instance with at most $\frac{n!}{(n-m)!}$ \emph{EFX} allocations.
\end{proposition}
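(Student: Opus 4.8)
The plan is to identify the $\frac{n!}{(n-m)!}$ guaranteed EFX allocations explicitly as those that assign every good to a distinct agent, and then to exhibit a single instance in which these are the \emph{only} EFX allocations.

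For the lower bound, I would first record the key structural fact, which works for arbitrary monotone valuations: when $m \le n$, any allocation in which each agent receives at most one good is automatically EFX. Indeed, for any pair of agents $i,j$, if $A_j$ is empty the EFX condition is vacuous, and if $A_j = \{g\}$ is a singleton then $A_j \setminus \{g\} = \emptyset$, so $v_i(A_j \setminus \{g\}) = v_i(\emptyset) = 0 \le v_i(A_i)$. A complete allocation assigning the $m$ goods to pairwise distinct agents corresponds exactly to an injection from $G$ into $N$, and distinct injections yield distinct allocations. Since there are $n(n-1)\cdots(n-m+1) = \frac{n!}{(n-m)!}$ such injections, each giving a complete EFX allocation, the lower bound follows for every instance.

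For the upper bound, I would take the instance of identical unit additive valuations, $v_i(g) = 1$ for every $i \in N$ and $g \in G$, so that $v_i(S) = |S|$. The crux is to show that in this instance every EFX allocation must give each agent at most one good, so that the count coincides with the $\frac{n!}{(n-m)!}$ injective allocations counted above. Suppose instead some agent $j$ holds $|A_j| \ge 2$. A counting argument then forces at least one agent $i$ to receive the empty bundle: the number of agents with a nonempty bundle is at most $m - 1 \le n - 1$, because the $m$ goods already account for the $\ge 2$ goods held by $j$. For such an agent $i$ and any $g \in A_j$ we have $v_i(A_i) = 0 < |A_j| - 1 = v_i(A_j \setminus \{g\})$, violating EFX. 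Hence every EFX allocation is injective, and there are at most $\frac{n!}{(n-m)!}$ of them.

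The main obstacle is essentially bookkeeping rather than depth: verifying the pigeonhole-style claim that a bundle of size at least two forces some empty bundle, and confirming that the lower-bound allocations are genuinely complete and pairwise distinct. Conceptually, the singleton-dropping observation is the engine of the argument, reducing both directions to counting injections from $G$ into $N$.
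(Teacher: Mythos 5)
Your proposal is correct and follows essentially the same approach as the paper: singleton (at-most-one-good) allocations are trivially EFX, yielding the $\frac{n!}{(n-m)!}$ injections for the lower bound, and an instance where every agent values every good positively (your unit-valuation instance is a special case of the paper's) forces any EFX allocation to be injective for the upper bound. Your explicit pigeonhole step showing that a bundle of size at least two leaves some agent empty-handed is exactly the argument the paper leaves implicit.
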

\begin{proof}
    For the lower bound, it is easy to see that any allocation where each agent receives at most one good is trivially EFX. 
    There are ${n \choose m} \cdot m! = \frac{n!}{(n-m)!}$ such allocations where every agent receives at most one good.
    
    For the upper bound, consider the case where every agent has a positive valuation for every good. 
    In this case, it is necessary that all agents receive at most one good for the allocation to be EFX. 
\end{proof}

\begin{proposition}\label{thm: counting: n=m+1}
    When $m = n+ 1$, every instance has at least $n$ \emph{EFX} allocations, and there exists an instance with at most $n$ \emph{EFX} allocations.
\end{proposition}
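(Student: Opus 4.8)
The plan is to prove the lower and upper bounds separately, with the lower bound being the more delicate ``every instance'' direction. For the \textbf{lower bound}, I would show every instance admits at least $n$ distinct EFX allocations. The starting point is that an EFX allocation is guaranteed to exist here, since $m = n+1 \le n+3$ \cite{mahara2023extension}, and that envy-cycle elimination lets me assume its envy graph is acyclic and hence has an unenvied agent. I would then set up an induction on $n$: locate an unenvied agent $s$ holding a single good (since at most one agent can hold two goods, most agents are singletons, so an unenvied singleton should be available), delete $s$ together with that good to obtain a sub-instance with $n-1$ agents and $n$ goods, and invoke the inductive hypothesis to obtain $n-1$ EFX allocations there. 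Reattaching $s$ with its good should yield $n-1$ EFX allocations of the original instance, while a separate allocation in which $s$ is the doubled agent provides the $n$-th. The step I expect to be the main obstacle is verifying that reattaching $s$ preserves EFX for \emph{general monotone} valuations---where goods may be worthless to some agents and bundles may be empty, so the clean ``$(2,1,\dots,1)$'' shape need not hold---and ensuring the $n$ resulting allocations are genuinely pairwise distinct.

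For the \textbf{upper bound}, I would exhibit an explicit \emph{additive} instance with exactly $n$ EFX allocations. The construction designates one \emph{surplus} good $g_{n+1}$ and gives each agent $i$ a private good $g_i$: set $v_i(g_i) = 2n$ (largest), $v_i(g_{n+1}) = 1$ (smallest), and $v_i(g_j) = n + ((i-j) \bmod n)$ for the remaining $j \in [n] \setminus \{i\}$. Thus each agent strictly prefers its own good to every other, strictly prefers every other good to the surplus good, and the cross-values form a cyclic staircase. I claim the only EFX allocations are the $n$ allocations $\mathcal{A}^{(d)}$, for $d \in [n]$, in which agent $d$ receives $\{g_d, g_{n+1}\}$ and every other agent $i$ receives $g_i$.

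To prove this I would proceed in four steps. First, since every good is positively valued by everyone, no EFX allocation can leave a bundle empty: an empty-handed agent would envy the (forced) $\ge 2$-good bundle even after one good is removed, so every EFX allocation has exactly one agent holding two goods and the rest holding one. Second, the two-good bundle must contain $g_{n+1}$, since otherwise the agent holding $g_{n+1}$ alone violates EFX toward it. Third, writing the doubled bundle as $\{g_{n+1}, g_t\}$ held by agent $d$, if $t \ne d$ then agent $t$ is pushed onto a non-private good worth at most $2n-1$ and envies the doubled bundle even after $g_{n+1}$ is removed, so $t = d$. Finally, the remaining agents must be matched to their private goods: any deviating matching contains a cycle $i_1 \to i_2 \to \cdots \to i_1$, and the staircase values force the EFX inequalities along this cycle to require the residues $(i_a - d) \bmod n$ to strictly increase around the cycle, which is impossible. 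The crux of the upper bound is this last cyclic forcing argument; the remaining steps are routine case analysis.
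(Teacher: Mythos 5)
Your \textbf{lower bound} has a genuine gap, and it sits exactly where you flagged it: the reattachment step fails, and not merely for lack of verification. The inductive hypothesis hands you $n-1$ \emph{arbitrary} EFX allocations of the sub-instance, and these bear no relation to the allocation in which $s$ was unenvied --- ``unenvied'' constrains envy \emph{toward} $s$, never $s$'s envy toward the new bundles. Concretely, take $n=3$, $m=4$, with $s=3$ holding $g_4$, $v_3(g_4)=1$ and $v_3(g_j)=10$ for $j\le 3$: the sub-instance on agents $1,2$ and goods $g_1,g_2,g_3$ may legitimately return the EFX allocation $(\{g_1,g_2\},\{g_3\})$, and reattaching $s$ would require $v_3(\{g_4\})\ge v_3(\{g_1,g_2\}\setminus\{g_1\})=10$, which fails. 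Your preliminary step is also unjustified: after envy-cycle elimination the (possibly unique) source of the envy graph may be the two-good agent, and under general monotone valuations an EFX allocation need not have shape $(2,1,\dots,1)$ at all (empty bundles are possible when goods are worthless to some agents), so ``an unenvied singleton should be available'' does not hold. The paper needs none of this machinery --- not even the existence result of \cite{mahara2023extension}. It argues directly: for each $i\in[n]$, order the agents with $i$ last, let each agent in turn pick a favorite remaining good, and give the leftover good to agent $i$. Every earlier agent weakly prefers her singleton to each of agent $i$'s two goods individually, and singleton bundles impose only vacuous EFX$_0$ constraints, so each run is EFX; the $n$ runs are pairwise distinct because the unique two-good agent differs.

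Your \textbf{upper bound} is correct but takes a different and heavier route than the paper's, with one caveat you must make explicit in your fourth step: under the EFX$_0$ definition used here there are \emph{no} EFX inequalities between two singleton holders (removing the lone good leaves the empty set), so the ``EFX inequalities along this cycle'' cannot be envy comparisons between consecutive cycle members. They must instead be sourced from each singleton holder's comparison with the doubled bundle $\{g_d,g_{n+1}\}$: removing $g_{n+1}$ leaves $g_d$, so, writing $\sigma$ for the matching of remaining agents to private goods, each agent $y$ with $\sigma(y)\neq y$ needs $n+((y-\sigma(y))\bmod n)\ge n+((y-d)\bmod n)$, which is equivalent to $(\sigma(y)-d)\bmod n > (y-d)\bmod n$; that residue must then strictly increase around any cycle of $\sigma$, giving the contradiction you want. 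With that reading, your staircase instance does pin down exactly the $n$ allocations $\mathcal{A}^{(d)}$, and your steps one through three are sound. The paper obtains the same bound far more cheaply with $v_i(g_i)=2$, $v_i(g_{n+1})=1$, and $0$ otherwise: there, any misplaced private good $g_x$ leaves some agent with a zero-value bundle who EFX-envies the bundle containing $g_{n+1}$, so no cyclic argument is needed. Your construction buys nothing extra here, but the forcing idea is valid once the source of the inequalities is corrected.
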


\begin{proof}    

    We first prove the lower bound. 
    Fix some arbitrary ordering of the agents, and label agents $1,\dots,n$ in that order.
    Then, let each agent (in increasing order) pick their favorite good out of the remaining unallocated goods, and let last agent $n$ get the remaining unallocated good. 
    This yields an EFX allocation where every agent $i \in [n-1]$ receives one good and the agent $n$ receives two goods. 
    For all $i \in [n-1]$, setting agent $i$ to be the last agent in some ordering will yield an EFX allocation such that agent $i$ receives two goods, and with any other agent $j \in [n]\setminus \{i\}$ receiving one good. 
    Hence, there are at least $n$ different EFX allocations.
    
    For the upper bound, consider an instance where agents have valuations as follows: for each $i \in [n]$ and $j \in [n+1]$, $v_i(g_j) = 2$ if $i = j$, $v_i(g_j) = 1$ if $j = n+1$, and $v_i(g_j) = 0$ otherwise. In all EFX allocations, for all agents $i \in [n]$, agent $i$ will receive $g_i$. 
    There are exactly $n$ EFX allocations, differing only by which agent receives $g_{n+1}$.
\end{proof}
We now proceed to show an upper bound of $n^2$ and a lower bound of $n$ for the case of $m=n+2$, with the following two theorems---the first showing the upper bound, and the next, which is more involved, showing the lower bound.
\begin{theorem}\label{UB: counting: m = n+2}
    When $m = n+2$, there exist an instance with at most $n^2$ \emph{EFX} allocations. 
\end{theorem}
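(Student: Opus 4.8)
The plan is to exhibit an explicit additive instance and prove that its \emph{only} EFX allocations form a family of exactly $n^2$ ``canonical'' ones. I would set $v_i(g_i) = 3$ for each $i \in [n]$ (call $g_i$ the \emph{private good} of agent $i$), $v_i(g_{n+1}) = v_i(g_{n+2}) = 1$ (the two \emph{extra} goods), and $v_i(g_j) = 0$ otherwise; the key feature is that a private good is worth strictly more than both extras combined. Call an allocation \emph{canonical} if it assigns $g_i$ to agent $i$ for every $i$ and distributes $g_{n+1}, g_{n+2}$ arbitrarily among the $n$ agents. The first step is to check that every canonical allocation is EFX: an agent $i$ holding $g_i$ has $v_i(A_i) \ge 3$, whereas $v_i(A_k) \le 2$ for every other bundle (agent $i$ values only $g_i$ and the two extras), so such an allocation is in fact envy-free. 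Since the two distinct extras can each be given to any of the $n$ agents, there are exactly $n^2$ canonical allocations, all EFX.

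The heart of the argument is to show that no other EFX allocation exists, so that the count is exactly $n^2$. Write $h(i)$ for the agent holding $g_i$, and call $i$ \emph{deprived} if $h(i) \ne i$. I would first establish a Key Lemma: in any EFX allocation, if $i$ is deprived then $A_{h(i)} = \{g_i\}$. This holds because a deprived agent has $v_i(A_i) \le 2$ (it holds at most the two extras), whereas if $A_{h(i)}$ contained a good besides $g_i$, then deleting the least-valued good from $A_{h(i)}$ would leave a bundle still containing $g_i$, of value at least $3 > 2 \ge v_i(A_i)$, violating EFX.

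From the Key Lemma I would extract the global structure. Since $A_{h(i)} = \{g_i\}$ with $g_i \ne g_{h(i)}$, the agent $h(i)$ is itself deprived and holds a single good; hence $h$ restricts to an injection, and therefore a fixed-point-free permutation, of the set $D$ of deprived agents. Consequently every deprived agent holds exactly one good---some \emph{other} deprived agent's private good---so its bundle is worth $0$ to it, and in particular no deprived agent holds an extra. The final step is the contradiction: assuming $D \ne \emptyset$, the two extras must then be held by non-deprived agents, each of which also holds its own private good (worth $0$ to a fixed deprived agent $i$); so $i$ can delete that private good from such a bundle and still see an extra of value $1 > 0 = v_i(A_i)$, violating EFX. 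Hence $D = \emptyset$, every allocation is canonical, and the count is exactly $n^2$. I expect the main obstacle to be the Key Lemma together with the ensuing permutation structure---specifically, arguing that the forced singletons propagate through $D$ and that the two extras can never be parked anywhere without being envied by a value-$0$ deprived agent; the remaining bookkeeping (distinctness and EFX-ness of the $n^2$ canonical allocations) is routine.
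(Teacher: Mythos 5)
Your proposal is correct and uses essentially the same approach as the paper: the identical instance (own good worth $3$, the two extras worth $1$ each, all else $0$), the same key observation that a multi-good bundle containing a foreign private good $g_i$ violates EFX against the deprived agent $i$ (forcing $A_{h(i)} = \{g_i\}$), and the same concluding contradiction that a value-$0$ deprived agent would envy whoever holds an extra, so every agent must receive its own private good and only the $n^2$ placements of the extras remain. Your fixed-point-free permutation argument on the set $D$ of deprived agents is a more detailed rendering of the step the paper handles directly by contradiction, but it is the same proof in substance.
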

\begin{proof}
    Consider an instance where agents have valuations over goods defined as follows: for each $i \in [n]$ and $j \in [n+2]$, $v_i(g_j) = 3$ if $i = j$, $v_i(g_j) = 1$ if $j \in \{n+1,n+2\}$, and $v_i(g_j) = 0$ otherwise. 
    Suppose some agent $i \in [n]$ receives at least two goods (including good $g_j$, for some $j \neq i$). 
    If $j\leq n$, then agent $j$ will envy agent $i$ even after removing the least-valued good (in $j$'s view) from agent $i$'s bundle (which will not be $g_j$), and hence the allocation is not EFX. 
    Thus, without loss of generality, suppose $j = n+1$, i.e., agent $i$ receives $g_{n+1}$.
    We show that in every EFX allocation, for all agents $x \in [n]$, agent $x$ must receive $g_{x}$.
    Suppose for a contradiction there exists an EFX allocation $\mathcal{A}$ and an agent $x \in [n]$ such that $g_x \notin A_x$. 
    Let $g_x \in A_y$ for some other agent $y \in [n] \setminus \{x\}$. 
    From our previous observation, $A_y = \{g_x\}$.
    This means $v_y(A_y) = 0$.
    However, we know that $g_{n+1} \in A_i$, and hence $v_y(A_i \setminus \{g\}) \geq 1$, contradicting the fact that $\mathcal{A}$ is an EFX allocation.

    Thus, there are exactly $n^2$ EFX allocations, differing only by which agent receives the goods $g_{n+1}$ and $g_{n+2}$.
\end{proof}
\begin{theorem}\label{thm: counting: m = n+2}
    When $m = n+2$, every instance has at least $n$ \emph{EFX} allocations.
\end{theorem}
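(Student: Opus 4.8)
The plan is to combine the known existence of an EFX allocation for $m \le n+3$ \cite{mahara2023extension} with a structural analysis that manufactures $n$ pairwise-distinct EFX allocations. Throughout, after repeatedly applying envy-cycle elimination (which preserves EFX, as noted in the preliminaries), I may assume that any EFX allocation I work with has an acyclic envy graph, so that it admits a topological order and, in particular, an unenvied agent. Since $m = n+2$, at least one agent holds two or more goods; the goal is to steer toward allocations of the balanced shape $(2,2,1,\dots,1)$, in which exactly two agents hold two goods each and the remaining $n-2$ agents hold a single good.

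The reason to favor this shape is the following observation, which will be the workhorse: in any allocation where every bundle has size at most two, a singleton bundle can never be strongly envied, since removing its unique good leaves the empty set. Hence an EFX violation can only be witnessed against a size-two bundle, and for a two-good bundle $A_i = \{p,q\}$, EFX holds at $A_i$ against an agent $j$ precisely when $v_j(A_j) \ge \max\{v_j(p), v_j(q)\}$. Building an EFX allocation of shape $(2,2,1,\dots,1)$ therefore reduces to a \emph{domination} condition: each agent's bundle must be worth to that agent at least as much as every individual good placed inside some other agent's doubled bundle. The first main step is to establish this reduction cleanly.

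With the reduction in hand, I would construct, for each distinguished agent $i \in [n]$, an EFX allocation $\mathcal{A}^{(i)}$ in which $i$ plays a fixed distinguished role. The template is a greedy/serial assignment that first hands every agent a dominating ``anchor'' good; the two surplus goods then left over are, by leftover-ness, dominated in each agent's eyes by that agent's own anchor, so they may be placed onto two agents while preserving the domination condition (for instance, onto recipients whose anchors dominate both surplus goods). Indexing these $n$ allocations so that $\mathcal{A}^{(i)}$ and $\mathcal{A}^{(i')}$ assign the distinguished role (e.g., being the last picker, or the designated surplus recipient) to different agents forces them to be pairwise distinct, yielding the desired $n$ allocations.

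I expect the crux to be verifying the domination condition (i.e., EFX) for \emph{general monotone} valuations while simultaneously guaranteeing distinctness. The naive serial pick from the $m = n+1$ proof does not transfer: giving both surplus goods to the last agent produces a size-three bundle, and under general monotone valuations a pair of individually-dominated goods can still be overvalued as a set, so shape $(3,1,\dots,1)$ need not be EFX. Distributing the two surplus goods onto two \emph{different} agents is thus essentially forced, yet only one agent can be ``last,'' so ensuring the domination condition at the \emph{second} doubled bundle is the genuinely new obstacle relative to $m = n+1$. I would resolve it by selecting the two surplus recipients via a Hall-type matching argument among agents whose anchors dominate both surplus goods, and by treating degenerate instances (ties, or profiles that force a $(3,1,\dots,1)$ or zero-sized bundle) as separate cases.
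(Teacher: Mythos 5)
Your reduction for the shape $(2,2,1,\dots,1)$ is sound as far as it goes (a singleton bundle cannot be strongly envied, and EFX against a pair $\{p,q\}$ is exactly $v_j(A_j)\ge\max\{v_j(p),v_j(q)\}$), and you correctly identify the crux: the second doubled bundle. But your two proposed resolutions both have genuine gaps. First, existence: after the serial anchor phase, each agent's anchor dominates both surplus goods \emph{in that agent's own view}, but that is not sufficient. Since removing the surplus good from $\{a_x,g\}$ leaves $\{a_x\}$, a safe surplus recipient $x$ must also satisfy $v_j(A_j)\ge v_j(a_x)$ for \emph{every} other agent $j$ --- i.e., $x$ must be unenvied --- and an acyclic envy graph guarantees only \emph{one} unenvied agent, not two. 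Your Hall-type matching among ``agents whose anchors dominate both surplus goods'' is therefore selecting from a set that may effectively be a single agent, and no Hall condition is verified or plausible in general; the hard instances are precisely those where only one agent can absorb surplus. Relatedly, you foreclose the shape $(3,1,\dots,1)$ as ``essentially forced'' to avoid, but the paper's proof \emph{needs} that shape (it appears as self-loops in their join-graph, via a virtual-goods construction $s,l$ with $v_i(l)=v_i(P)-v_i(s)$ that controls pair-values, not just single-good values); discarding it removes the escape hatch for exactly the bottleneck instances above.

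Second, distinctness: indexing allocations by a ``distinguished role'' does not force $n$ pairwise-distinct allocations, because each allocation of shape $(2,2,1,\dots,1)$ has \emph{two} doubled agents, so $\mathcal{A}^{(i)}$ and $\mathcal{A}^{(i')}$ can coincide whenever $i$ and $i'$ are the two doubled agents of one allocation (the allocation does not remember who was ``last picker''). Naively this argument yields only about $n/2$ allocations. The paper's machinery exists to repair exactly this: it defines a \emph{directed} join-graph in which the edge $(i,j)$ is distinguished from $(j,i)$ by which of the two doubled agents receives her favourite good in $A_i\cup A_j$, proves (their Lemma on $|E|$) that if both orientations are realized then a good-swap produces two genuinely different EFX allocations, and then proves $|E_x|\ge|N_x|$ on every connected component by a contradiction argument that performs envy-graph path surgery to manufacture a missing incoming edge at a source. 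Your proposal contains neither the orientation device nor any substitute counting argument, so even granting existence of the $n$ constructions, the bound of $n$ distinct allocations does not follow.
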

    We draw inspiration from the algorithms of \citet{amanatidis2020multiple} that together imply the existence of EFX allocations when $m = n+2$, with the following algorithm (Algorithm \ref{alg:efx_m=n+2}).
    \begin{algorithm}
    \caption{Returns an EFX allocation when $m=n+2$.}
    \begin{algorithmic}[1]
    \label{alg:efx_m=n+2}
    \newcommand{\Step}[2]{\STATE #1 \hfill$\triangleright$ \textit{#2}}
    \STATE \textbf{Input:} Set of agents $N$, set of goods $G$, valuation profile $\mathbf{v} = (v_1,\dots,v_n)$.
    \STATE Order the agents from $1$ to $n$.
    \STATE Initialize the partial allocation, $A \leftarrow (\emptyset \dots \emptyset)$
    \STATE Initialize a set of unallocated goods, $P \leftarrow G$
    \FOR{$i=1, \dots, n-1$}
    \STATE Let $g \in \argmax_{g' \in P} v_i(g')$ be a good in $P$ that agent $i$ values the most
    \STATE Allocate the good $g$ to agent $i$, $A_i \leftarrow \{g\} $
    \STATE Remove $g$ from the set of unallocated goods, $P \leftarrow P \setminus \{g\}$
    \ENDFOR
    \STATE Create \emph{virtual goods} $s , l$ such that $\forall i \in N$, $v_i(s) = \min_{g \in P} v_i(g)$ and $v_i(l) = v_i(P) - v_i(s)$ 
    \Step{Allocate good $l$ to agent $n$, $A_n \leftarrow \{l\}$}{State 1}
    \Step {Perform envy cycle elimination and allocate $s$ to the non-envied agent}{State 2}
    \STATE Allocate the final owner of $l$ her two most valued goods in $P$, and the final owner of $s$ the remaining good in $P$.
    \end{algorithmic}
    \end{algorithm}

    We then show that the above algorithm returns an EFX allocation.
    It is clear that the partial allocation at State 1 is EFX as agents $1$ to $n-1$ are allocated only one good each, and they weakly prefer their allocated good to any good in $P$. 
    Hence, after envy cycle elimination, the partial allocation at State 2 is also EFX, and there exists at least one non-envied agent (let it be agent $i$). 
    If agent $i$ has two goods (and hence $l$), all agents will weakly prefer the good they are allocated over any two goods in $P$. 
    Hence, no agent will envy agent $i$ (after dropping their least-valued good from agent $i$'s bundle) if they were to receive all goods in $P$. 
    If $i$ has one good $g$, all agents will weakly prefer their bundle over either $g$ or $s$. 
    Hence, no agent will envy agent $i$---after dropping their least-valued good from $i$'s bundle---if she were to receive both $g$ and $s$, and the allocation is EFX.

    Next, we show---with the following lemma---an important structure of some EFX allocation.
    
    \begin{lemma}\label{m,n+2,lemma 1}
        When $m = n+2$, for each agent $i \in N$, there exists an \emph{EFX} allocation $\mathcal{A}$ whereby each agent receives at least one good, and either 
            \begin{enumerate}
                \item $i$ receives three goods; or
                \item $i$ and some agent $j \neq i$ each receives two goods, and one of $\{i,j\}$ receives their most valued good in $A_i \cup A_j$. 
        \end{enumerate}
    \end{lemma}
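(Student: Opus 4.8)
The plan is to reduce to a structural dichotomy and then adapt Algorithm~\ref{alg:efx_m=n+2} so that it favors the prescribed agent $i$. Since $m=n+2$ and every agent must receive at least one good, any such allocation has exactly one of two \emph{shapes}: a single agent with three goods and everyone else with one (this is case~(1)), or exactly two agents with two goods and everyone else with one (case~(2)). Accordingly, the goal is to produce an EFX allocation in which $i$ is a \emph{large} agent, i.e.\ either the final owner of the virtual good $l$ or the non-envied agent that receives $s$. I would run the greedy phase of Algorithm~\ref{alg:efx_m=n+2} on the other $n-1$ agents, placing $i$ last so that it initially holds $\{l\}$, leaving $|P|=3$.

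Two facts make both EFX and the required most-valued-good condition essentially automatic. First, by the greedy choices each agent $k\neq i$ weakly prefers her own good to every single good of $P$; hence whoever ends up owning $l$---and is thus assigned her two most valued goods of $P$---is never envied beyond EFX, while the owner of $s$ holds a good (her greedy pick, or a good she strictly coveted and swapped into) that dominates every good of $P$ for her. Consequently, in the two-good outcome the owner of $s$ holds her most valued good of $A_i\cup A_j$, so case~(2)'s condition holds no matter which of $i,j$ receives $s$. Second, handing $s$ to a non-envied agent preserves EFX (exactly as argued at State~2), giving that agent three goods if she also owns $l$ (case~(1)) and two goods otherwise (case~(2)).

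The decision procedure is then as follows. If $i$, still holding $l$, is non-envied, hand it $s$, so that $i$ receives all of $P$ (case~(1)). Otherwise keep $i$ as the two-good owner of $l$ and give $s$ to a non-envied agent $w$ with $v_i(g_w)\leq v_i(l)$ (case~(2)). Symmetrically, if $i$ would rather swap $l$ for a strongly coveted single good, let it do so via envy-cycle elimination and then, once $i$ is non-envied, give it $s$. In each branch the two facts above certify EFX, the bigness of $i$, and the favorite-good condition.

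The hard part is showing that some branch always applies, i.e.\ that $i$ can always be made large. The obstructive situation is a good coveted so strongly by an agent $k$ that $k$ cannot be compensated from $P$ (no bundle of $P$-goods reaches $k$'s value for that good); such a good is forced to be a singleton held by $k$, which simultaneously excludes it from $i$'s large bundle and restricts where $s$ may go. The crux---and where I expect the main effort---is to prove that, after isolating these forced singletons, agent $i$ can still be allotted a two- or three-good bundle drawn from the remaining, compensatable goods while preserving EFX, and to verify the favorite-good condition in that configuration. This reconciliation of $i$'s bigness with the EFX constraints imposed by widely coveted goods is the heart of the argument, and I would handle it by a careful case analysis (or a small induction) on the set of forced singletons.
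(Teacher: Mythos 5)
Your setup (place $i$ last in the greedy phase of Algorithm~\ref{alg:efx_m=n+2}, so that $i$ holds the virtual good $l$, and certify the favorite-good condition via the fact that each greedy pick dominates every good of $P$ for its owner) matches the paper's proof exactly, and your easy branches --- $i$ non-envied gets $s$ and all of $P$ (case~1), or $s$ goes to some non-envied $j\neq i$ (case~2) --- are the paper's first case. But you have a genuine gap precisely where you flag ``the hard part'': when \emph{every} agent is envied in the State~1 partial allocation, there is no non-envied agent to receive $s$, and your third branch (``let $i$ swap via envy-cycle elimination and then, once $i$ is non-envied, give it $s$'') does not work as stated, because eliminating an \emph{arbitrary} envy cycle need not leave $i$ non-envied --- after a generic rotation, $i$'s new singleton can still be coveted by other agents, and you have no mechanism forcing $i$ to end up unenvied.

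The paper closes exactly this case with a construction your proposal lacks. The key structural fact is that after the greedy phase, envy among agents $1,\dots,n-1$ only points from larger to smaller indices (if $j$ envies $k$ then $k<j$, since $k$ picked earlier from a larger pool). Hence, starting from any agent and repeatedly moving to the \emph{smallest-indexed} envier yields a strictly increasing path terminating at agent $n=i$; its penultimate agent $x$ is then envied by \emph{no one except} $n$ (the smallest envier of $x$ is $n$ itself). Since some agent prefers $l$ to her own good (all agents are envied, including the holder of $l$), this path closes into a cycle, and rotating bundles along it hands $n$ the good $g_x$. Because utilities weakly increase under the rotation and only $n$ envied $x$, agent $n$ is now non-envied, receives $s$, and moreover $v_n(g_x) > v_n(l) \geq \max_{g\in P} v_n(g)$ certifies the favorite-good condition of case~(2). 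Your proposed alternative crux --- a case analysis on ``forced singletons'' (goods so coveted that no sub-bundle of $P$ compensates) --- misdiagnoses the obstruction: the difficulty is the envy structure of the partial allocation, not the magnitudes of good values, and it is tamed by the index-monotonicity of greedy envy, an ingredient absent from your sketch. As written, the proposal proves only the easy case and leaves the lemma's essential content unestablished.
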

    We then introduce the concept of a \emph{join-graph}, a directed graph where the vertices are the set of agents and the edges represents different EFX allocations.

 \begin{definition}[Join-graph]
     A \emph{join-graph} $G = (N,E)$ is a directed graph where an edge $(i,j) \in E$ if one of the following two criteria is satisfied.
     \begin{itemize}
        \item Criterion 1: $i \neq j$ and there is an EFX allocation $\mathcal{A}$ where
    \begin{itemize}
      \item each agent receives at least one good;
      \item agent $i$ and $j$ each receives two goods; and
      \item agent $j$ receives her favourite good in $A_i\cup A_j$.
    \end{itemize}
  \item Criterion 2: $i = j$ and there is an EFX allocation $\mathcal{A}$ where
    \begin{itemize}
      \item each agent receive at least one good; and
      \item agent $i$ receives three goods.
    \end{itemize}

\end{itemize}
 \end{definition}

 Then, we prove another lemma.
\begin{lemma}\label{lem:joingraph|E|}
    For the join-graph $G = (N, E)$ and any instance with $n+2$ goods, there exists at least $|E|$ \emph{EFX} allocations.
\end{lemma}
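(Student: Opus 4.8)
The plan is to construct an injection from the edge set $E$ into the set of EFX allocations, which immediately yields $|E| \le \#\{\text{EFX allocations}\}$. The starting point is that each edge, by definition, comes equipped with at least one witnessing EFX allocation, so it suffices to choose these witnesses so that distinct edges receive distinct allocations. The whole argument is driven by the rigid bundle-size structure forced by having exactly $n+2$ goods together with the requirement that every agent receives at least one good.

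First I would record this structure. For a self-loop $(i,i)$ the witness has agent $i$ holding three goods and, by counting ($3 + (n-1) = n+2$), every other agent holding exactly one; hence the witness has a \emph{unique} three-good agent and the allocation determines $i$. For a directed edge $(i,j)$ with $i \ne j$ the witness has $i$ and $j$ each holding two goods and, by counting ($2+2+(n-2)=n+2$), all others holding exactly one; hence the unordered pair $\{i,j\}$ of two-good agents is determined by the allocation. Consequently, two distinct self-loops cannot share a witness (two three-good agents would need $6+(n-2) > n+2$ goods); a self-loop and a directed edge cannot share a witness (the bundle-size multisets $\{3,1,\dots,1\}$ and $\{2,2,1,\dots,1\}$ differ); and two directed edges can share a witness only if they are antiparallel, $(p,q)$ and $(q,p)$, with the same two-good pair $\{p,q\}$.

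The only genuine obstacle is therefore the antiparallel case, and I expect it to be the crux. If both $(p,q)$ and $(q,p)$ lie in $E$, I need two \emph{distinct} EFX allocations whose two-good agents are exactly $\{p,q\}$. If the two given witnesses already differ, I am done; the hard case is a single EFX allocation $\mathcal A$ witnessing both, so $A_p = \{a,b\}$ and $A_q = \{c,d\}$ where (after relabeling) $p$ owns its favorite $a$ and $q$ owns its favorite $c$ within $A_p \cup A_q$. Here I would produce a second witness by the swap $A_p' = \{a,d\}$, $A_q' = \{b,c\}$, leaving all other bundles unchanged. The key simplification is that EFX imposes no constraint toward a single-good bundle (dropping its one good leaves $\emptyset$), so only the two-good bundles matter. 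Since $\mathcal A$ is EFX, every single-good agent $r$ already satisfies $v_r(A_r) \ge \max(v_r(a),v_r(b),v_r(c),v_r(d))$, so $r$ remains EFX toward $A_p'$ and $A_q'$; and the conditions for $p$ and $q$ themselves follow from monotonicity and the favorite property, via $v_p(\{a,d\}) \ge v_p(a) \ge \max(v_p(b),v_p(c))$ and $v_q(\{b,c\}) \ge v_q(c) \ge \max(v_q(a),v_q(d))$. Thus $\mathcal A'$ is EFX and $\mathcal A' \ne \mathcal A$ since $b \ne d$.

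Finally I would assemble the injection: send each self-loop to its (unique-structure) witness, and for each unordered pair $\{p,q\}$ meeting $E$, send the one or two present directed edges to one or two distinct EFX allocations whose two-good set is $\{p,q\}$ (the existence of a second one, when both edges are present, is exactly what the previous paragraph guarantees). Because allocations with different two-good sets are distinct, and self-loop witnesses are distinct from one another and from every directed-edge witness, the assembled map is injective, giving $\#\{\text{EFX allocations}\} \ge |E|$. The remaining steps beyond the swap argument are pure bookkeeping dictated by the forced bundle-size structure.
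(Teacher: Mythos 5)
Your proof is correct and matches the paper's argument essentially step for step: the bundle-size counting forced by $m=n+2$ rules out all collisions except antiparallel edges, and your swap $A_p'=\{a,d\}$, $A_q'=\{b,c\}$ is exactly the paper's construction ($i$ keeps her favourite $x_1$ and takes $y_2$, $j$ keeps $y_1$ and takes $x_2$) for producing a second distinct EFX allocation in that case. The only cosmetic difference is that you package the conclusion as an explicit injection, whereas the paper argues pairwise distinctness of the witnessing allocations.
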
 
Then, our final lemma below proves our result.
    \begin{lemma}
        For each connected component $G_x = (N_x, E_x)$ in the join-graph $G = (V,E)$,  $|E_i| \geq |N_i|$. Hence, $|E| \geq |N|$ and there are at least $n$ \emph{EFX} allocations.
    \end{lemma}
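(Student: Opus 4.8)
The plan is to prove the equivalent statement that \emph{every vertex of the join-graph has out-degree at least $1$}, where a self-loop is counted as an out-edge. This suffices: within a connected component $G_x = (N_x,E_x)$, every out-edge (self-loops included) has its tail in $N_x$ and, being an edge, keeps both endpoints in $N_x$, so $\sum_{v \in N_x} \mathrm{outdeg}(v) = |E_x|$; if each summand is at least $1$, then $|E_x| \geq |N_x|$. Summing over components gives $|E| \geq |N| = n$, and Lemma \ref{lem:joingraph|E|} then yields at least $n$ EFX allocations. Intuitively, out-degree $\geq 1$ everywhere says each component contains a directed cycle, hence is not a tree.

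To establish out-degree $\geq 1$ for a fixed agent $i$, I would invoke Lemma \ref{m,n+2,lemma 1}. If it returns an allocation in which $i$ receives three goods, then Criterion 2 gives the self-loop $(i,i) \in E$, an out-edge. Otherwise it returns an EFX allocation $\mathcal{A}$ in which $i$ and some $j \neq i$ each receive two goods (and every other agent receives exactly one), with one of $\{i,j\}$ holding their favourite good of $S := A_i \cup A_j$. If the favoured agent is $j$, then Criterion 1 gives the out-edge $(i,j) \in E$. The remaining case, where $i$ is the favoured agent, is the crux.

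The observation that drives this case is that re-splitting the four goods of $S$ between $i$ and $j$ (two each), leaving all other bundles untouched, automatically preserves EFX for every other agent. Indeed, since $\mathcal{A}$ is EFX and each other agent $k$ holds a single good $g_k$, EFX toward the two-good bundles $A_i$ and $A_j$ forces $v_k(g_k) \geq v_k(g)$ for \emph{every} single good $g \in S$; as any $2$-$2$ re-split leaves $i$'s and $j$'s bundles with two goods each, dropping $k$'s least-valued good leaves a single element of $S$, so $k$ never EFX-envies either bundle. Hence it remains only to find a $2$-$2$ split of $S$ that is EFX between $i$ and $j$ and in which $j$ holds her favourite good $g^*$ of $S$; such a split realises the out-edge $(i,j)$.

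The main obstacle is producing this split, and I expect it to require a short case analysis. The split can fail only when $g^*$ is so valuable to $i$ that EFX forces $i$ to keep it, i.e.\ $i$ EFX-envies $j$ under every completion placing $g^*$ with $j$. In that event I would use the fact that $i$ never EFX-envies a \emph{singleton} bundle: parking $g^*$ at some other single-good agent removes $g^*$ from the pool shared by $i$ and a two-good partner, letting $i$ take the un-favoured role in a re-split of the remaining goods and thereby yielding an out-edge---a move available precisely when $n \geq 3$. For the base case $n = 2$ there is no third agent to absorb $g^*$, and I would instead argue that the dominance of $g^*$ makes a $(3,1)$ EFX allocation available (the dominant good sits alone with the other agent, who does not envy it after its removal), producing a self-loop. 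Discharging this last case cleanly---verifying that either a partner-favouring re-split or a self-loop always exists---is the step I expect to be the most delicate.
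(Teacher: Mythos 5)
Your reduction of the lemma to ``every vertex of the join-graph has out-degree at least $1$ (self-loops included)'' is arithmetically sound, but that stronger claim is simply false, so the plan cannot be completed. Concretely, take $n=2$, $m=4$, additive valuations $v_1(g^*)=100$, $v_1(a)=v_1(b)=v_1(c)=1$ and $v_2(g^*)=3$, $v_2(a)=v_2(b)=v_2(c)=2$. One checks that the EFX allocations are exactly the three $2$--$2$ splits in which agent~$1$ holds $g^*$, plus $(\{g^*\},\{a,b,c\})$. Every $2$--$2$ split yields only the edge $(2,1)$ (agent~$1$ holds $g^*$, her favourite in the union, so she is the favoured head; agent~$2$'s favourite $g^*$ is never in her own bundle), and the lone $(1,3)$ allocation yields the self-loop $(2,2)$. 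No EFX allocation gives agent~$1$ three goods: in $(\{a,b,c\},\{g^*\})$ agent~$2$ compares $v_2(g^*)=3$ against the \emph{pair} value $v_2(\{a,b\})=4$ and envies. So vertex~$1$ has out-degree $0$. This also pinpoints why your two patches for the crux case fail. First, ``parking'' $g^*$ with a third agent $k$ who keeps $g_k$ creates the two-good bundle $\{g_k,g^*\}$, and in the failure case $v_i(A_i)<v_i(g^*)$ for every bundle $A_i$ available to $i$, so $i$ EFX-envies $k$ after removing $g_k$; parking $g^*$ \emph{alone} instead pushes $g_k$ into the shared pool, where the third parties' EFX constraints (which only bound singletons of $S$, not $g_k$) no longer protect you. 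Second, your $n=2$ fallback assumes $g^*$ being $j$'s favourite \emph{singleton} protects the $(3,1)$ allocation, but EFX against a three-good bundle compares $v_j(g^*)$ with values of \emph{pairs}, which the example above violates.

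The deeper issue is that the out-degree bound is inherently local to one agent, whereas the deficiency at vertex~$1$ above is compensated by a self-loop elsewhere in the component --- the count must be argued globally. The paper does exactly this, and in effect proves the mirror image of your claim: every vertex has an \emph{in}-edge, counting self-loops (this direction is true precisely because the head of a Criterion-1 edge is the favoured agent --- your crux case, where $g^*$ is too valuable for $i$ to relinquish, is exactly where an out-edge can genuinely be missing). Formally, the paper assumes a component with $|E_x|<|N_x|$, so the component is a tree: no self-loops anywhere and some source $i$ with no in-edges. Lemma \ref{m,n+2,lemma 1} then forces an out-edge $(i,j)$; starting from its witnessing allocation, the paper retains $j$'s favourite good $y_1$ with $j$, pools the remaining three goods into virtual goods $s,l$, gives $l$ to $i$, and uses the absence of a self-loop at $i$ to extract an agent $k$ envying $i$. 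Rotating bundles along an envy path from $k$ and using sourceness of $i$ to pin down $i$'s favourite good, it constructs a new EFX allocation witnessing the incoming edge $(k,i)$ --- contradiction. Your observation that a $2$--$2$ re-split of $A_i\cup A_j$ preserves EFX for all third parties is correct (the paper uses the same fact in Lemma \ref{lem:joingraph|E|}), but it cannot substitute for this global envy-path construction.
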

    \begin{proof}
        Suppose there is a connected component $G_x = (N_x, E_x)$ such that $|E_x|<|N_x|$. 
        First observe that in a such connected component $G_x$, if $|E_x|<|N_x|$ then there are no in cycles in $G_x$.
        This means that there are no self-loops, and thus no EFX allocation in which any agent in $N_x$ receives three goods. 
        There also must be a source agent $i$ such that for all agents $j$, $(j,i) \notin E$. 
        From Lemma \ref{m,n+2,lemma 1}, in an instance with $m = n+2$, there is at least one incoming or outgoing edge for every agent in the join-graph.
        As $i$ has no incoming edge in the join-graph, then there must be an agent $j$ such that $(i,j) \in E$. 
    
        Let $\mathcal{A}$ be the EFX allocation that corresponds to $(i,j)$ and let $A_i = \{x_1,x_2\}$ and $A_j = \{y_1,y_2\}$.
        Without loss of generality, let $v_j(y_1) \geq v_j(y_2)$.
        Also let $P = \{x_1,x_2,y_2\}$ and create virtual goods $s,l$ such that $v_i(s) = \min_{g \in P} v_i(g)$ and $v_i(l) = v_i(P) - v_i(s)$. 
        Consider the partial allocation $\mathcal{A}'$ where $j$ receives only $y_1$, $i$ receives $l$, and all other agents receives the same good as they did in $\mathcal{A}$. 
        Note that $\mathcal{A}'$ is EFX. 
        As $(i,i) \notin E$, there is an agent $k$ that envies agent $i$ under the partial allocation $\mathcal{A}'$; otherwise, we can allocate $P$ to $i$ and still get an EFX allocation. 
    
        Now, there exists a path in the envy graph for $\mathcal{A}'$ from $k$ to a non-envied agent. 
        Let this path be $y_0, \dots y_\alpha$ where $y_0 = k$, $y_\alpha$ is a non-envied agent and $y_i$ envies $y_{i-1}$ for all $i \in [\alpha]$.
        Note that if $k$ is non-envied, $y_0 = y_\alpha = k$. 
        
        If there is a path from $k$ to a non-envied agent other than $i$, let $A'_{y_\alpha} = \{z\}$. 
        As $y_\alpha$ is non-envied, we can create an EFX allocation by allocating $s$ to $y_\alpha$. 
        Then, since $(y_\alpha, i) \notin E$, $i$ values $z$ the most in $P \cup \{z\}$. 
        Construct an allocation $\mathcal{B}$ by setting $B_{y_i} = A'_{y_{i-1}}$ for $i \in [\alpha]$, $B_{i} =\{s, z\}$, $B_{k} = l$ and allocating all other agents their good in $A'$. 
        Observe that $\mathcal{B}$ is EFX and $i$ receives their favourite good in $P \cup \{z\}$, $(k,i) \in E$, giving us a contradiction. 
    
        In the other case, the only path from $k$ to a non-envied agent is a path to $i$. 
        This is only possible if $y_{\alpha - 1}$ is only envied by agent $i$. 
        Hence, we can construct an allocation $\mathcal{B}$ by setting $B_{y_i} = A'_{y_{i-1}}$ for $i \in [\alpha -1]$, $B_{i} = A'_{y_{\alpha-1}} \cup{\{s\}}$, $B_{k} = l$ and allocating all other agents their good in $\mathcal{A}'$. 
        Observe that $\mathcal{B}$ is EFX and $i$ receives their favourite good in $B_i \cup B_k$, $(k,i) \in E$, giving us a contradiction.
    \end{proof}

To end this section, we show a tight bound for the special case of identical valuations, a well-studied setting in fair division \cite{barman2020identical,plaut2020almost,Choo2024}.

\begin{theorem} \label{thm: counting: identical}
    When $m\geq n$, under identical valuations, every instance has at least $n!$ EFX allocations, and there exists an instance with at most $n!$ \emph{EFX} allocations.
\end{theorem}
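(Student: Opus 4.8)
The plan is to prove the two bounds separately, exploiting the fact that under identical valuations the common valuation $v := v_1 = \dots = v_n$ is agent-symmetric. The key structural observation is that the EFX condition, $v(A_p) \ge v(A_q \setminus \{g\})$ for all $p,q$ and all $g \in A_q$, is invariant under relabeling the agents: if $(A_1,\dots,A_n)$ is EFX then so is $(A_{\sigma(1)},\dots,A_{\sigma(n)})$ for every permutation $\sigma$. Hence, if I can exhibit a single EFX allocation whose $n$ bundles are pairwise distinct (as sets), its $n!$ relabelings are pairwise distinct EFX allocations, which would give the lower bound.

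For the lower bound I therefore aim to produce an EFX allocation in which all bundles are distinct. Since the bundles of an allocation are pairwise disjoint, two of them can coincide only if both are empty, so it suffices to find an EFX allocation in which every bundle is nonempty. Existence of some EFX allocation under identical valuations is guaranteed \cite{plaut2020almost}. Among all EFX allocations, I would take one maximizing the number of nonempty bundles, and suppose for contradiction that some bundle $A_i = \emptyset$. Then $v(A_i) = 0$ forces, via EFX, that $v(A_q \setminus \{g\}) = 0$ for \emph{every} bundle $A_q$ and every $g \in A_q$. Because $m \ge n$, some bundle $A_j$ holds at least two goods; moving one good $g^\ast$ from $A_j$ into the empty $A_i$ keeps all ``remove-one-good'' values equal to $0$ (using monotonicity, since $A_j \setminus \{g^\ast, g\} \subseteq A_j \setminus \{g\}$), so the resulting allocation is still EFX while having strictly more nonempty bundles, contradicting maximality. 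Thus all $n$ bundles are nonempty, hence distinct, and relabeling yields at least $n!$ EFX allocations for every instance.

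For the matching upper bound I would construct an identical additive instance with exactly $n!$ EFX allocations: assign $n-1$ goods value $1$ and the remaining $m-n+1$ goods a common small value $\epsilon$ with $(m-n+1)\epsilon < 1$. I then argue that the EFX set-partition is unique. No bundle may contain two value-$1$ goods, since removing one would leave value at least $1$ while (as $n-1$ unit goods occupy at most $n-1$ of the $n$ bundles) some bundle has value below $1$; hence the $n-1$ unit goods lie in $n-1$ distinct bundles and one bundle receives none. Likewise no unit bundle may additionally contain a small good, since deleting that small good again leaves value at least $1$ above the sub-unit bundle; this forces all small goods into the single remaining bundle. The resulting partition into $n$ distinct nonempty bundles is readily checked to be EFX, so the only remaining freedom is which agent receives which bundle, giving exactly $n!$ EFX allocations.

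The main obstacle is the lower-bound step, namely certifying that an EFX allocation with all bundles nonempty exists for \emph{every} instance, including degenerate ones with many zero-valued goods. The extremal-plus-local-move argument is the crux here, and the delicate point is verifying that the single-good move preserves EFX; I will need to emphasize that this verification uses only monotonicity and $v(\emptyset) = 0$, so the lower bound holds for general monotone identical valuations, while the upper-bound instance uses only additive valuations.
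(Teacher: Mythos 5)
Your proof is correct and takes essentially the same route as the paper: the lower bound permutes the bundles of an EFX allocation in which every bundle is nonempty, and the upper bound uses an identical-valuations instance with $n-1$ high-valued goods plus low-valued filler goods whose unique EFX bundle partition leaves only the $n!$ agent labelings (the paper sets the filler values to $0$ rather than a small $\epsilon$, which is immaterial under the EFX$_0$ definition adopted there). If anything, your write-up is more complete: the paper merely asserts that an EFX allocation with all bundles nonempty exists when $m\geq n$, whereas your extremal argument---moving a good from a bundle of size at least two into an empty bundle, and noting that an empty bundle forces every remove-one value to be $0$ so the move preserves EFX by monotonicity---actually proves it.
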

\begin{proof}
For the lower bound, we know that there always exists an EFX allocation under identical valuations \cite{plaut2020almost}. 
When $m \geq n$, there is always an EFX allocation $\mathcal{A}$ such that each agent receives at least one good. 
All allocations that are a permutation of $\mathcal{A}$ (i.e., bundles remain the same, but may be allocated to a different agent) also satisfies EFX, giving us at least $n!$ EFX allocations. 

For the upper bound, consider the instance where $v(g_j) = 1$ if $j\leq n- 1$ and $v(g_j) = 0$, otherwise. Then, there are only $n!$ EFX allocations. The proof of correctness for this upper bound is deferred to the appendix.
\end{proof}
Importantly, we obtain the following corollary, contrasting the number of EFX allocations with the minimum number of EF1 allocations which is known to be exponential in $m$ even for a fixed $n$ \cite{SUKSOMPONG2020606}.
\begin{corollary}
    For any $m$, the number of \emph{EFX} allocations can be as few as $n!$.
\end{corollary}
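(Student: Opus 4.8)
The plan is to derive this directly from the upper-bound construction in Theorem~\ref{thm: counting: identical}, with the one additional observation that the count it produces does not depend on $m$. First I would recall that instance: $n$ agents sharing a common additive valuation $v$ with $v(g_j) = 1$ for $j \le n-1$ and $v(g_j) = 0$ for $j \ge n$, over $m$ goods with $m \ge n$. Since the theorem asserts this instance has at most (indeed exactly) $n!$ EFX allocations \emph{for every} $m \ge n$, it suffices to emphasize that the same construction works uniformly as $m$ grows, so the minimum number of EFX allocations stays pinned at $n!$ no matter how large $m$ becomes. This is precisely the contrast with EF1, whose minimum count grows exponentially in $m$ even for fixed $n$.

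The substantive content, already carried out in the proof of Theorem~\ref{thm: counting: identical}, is to see why the $m - n + 1$ zero-valued goods cannot inflate the count. In any EFX allocation of this instance, no agent may hold two $1$-valued goods, for otherwise an agent holding a value-$0$ bundle (at least one such agent always exists, since there are only $n-1$ valuable goods among $n$ agents) would still envy it after removing a single good, leaving value $1 > 0$. Hence the $n-1$ valuable goods go to $n-1$ distinct agents, leaving exactly one agent with value $0$. The same EFX argument forces every zero-valued good onto that single value-$0$ agent: were a zero-valued good assigned to an agent also holding a valuable good, the value-$0$ agent would EFX-envy that bundle after dropping the worthless good. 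Thus each EFX allocation is determined by the choice of which agent is left valueless ($n$ options) together with a bijection from the $n-1$ valuable goods to the remaining agents ($(n-1)!$ options), giving precisely $n \cdot (n-1)! = n!$ EFX allocations regardless of $m$.

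The corollary then follows immediately: for arbitrarily large $m$ there is an instance with exactly $n!$ EFX allocations, so this number can remain as small as $n!$ for any $m$. I expect essentially no obstacle here beyond stating the $m$-independence carefully; the entire force of the statement lies in the fact that, unlike the minimum EF1 count, the EFX count fails to grow with $m$ at all, which the construction above witnesses directly.
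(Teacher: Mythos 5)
Your proposal is correct and follows the paper's own route exactly: it uses the same instance from Theorem~\ref{thm: counting: identical} ($v(g_j)=1$ for $j\le n-1$, $v(g_j)=0$ otherwise) and the same structural argument---valuable goods go to $n-1$ distinct agents holding nothing else, all zero-valued goods to the one valueless agent (valid under the paper's EFX$_0$ convention)---yielding exactly $n!$ allocations independently of $m$. Your explicit emphasis on the $m$-independence is just a spelled-out version of what the paper leaves implicit when deducing the corollary from the theorem.
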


Next, we consider two variants of EFX, and study similar questions.
The first variant we will consider is an ``up to any good'' relaxation of \emph{weighted envy-freeness (WEF)}.

\section{Weighted EFX} \label{sec:WEF}
In recent years, there has been a growing number of works in weighted fair division, an extension of the standard fair division model to one where agents have differing entitlements \cite{AzizMoSa20,BabaioffEzFe21,BabaioffNiTa21,chakraborty2021weighted,ChakrabortySeSu22,HoeferScVa23,ScarlettTeZi23}.
In this model, each agent $i\in N$ has an additional \emph{weight} parameter $w_i > 0$ representing her entitlement.

This model would allow us to better capture settings such as inheritance division or divorce settlements where entitlements are typically unequal.
Moreover, weighted fair division also extends the well-studied setting of \emph{apportionment}, which is used in political systems to allocate parliamentary seats \cite{BalinskiYo01,Pukelsheim14}.
\citet{chakraborty2021weighted} primarily studied the generalization of EF1 to \emph{weighted EF1 (WEF1)}.
Weighted EFX (WEFX), a natural generalization of EFX, has also been looked into, but there has been strong negative results.

In this section, we first show the existence of a WEFX and \emph{Pareto-optimal (PO)} allocation under binary additive valuation for any number of agents, resolving an open question in the area. 
We also show that even in this restricted instance where WEFX allocations exists, there can be instances with only one WEFX allocation. 
In addition, we provide stronger negative results for the non-existence of WEFX, by further showing that a weaker version of WEFX (i.e., weak WEFX) also does not exist even in very restricted settings, thereby considerably tightening the existential gap known in the literature. 
We first state the definition of WEFX.\footnote{Note that we are actually considering WEFX$_0$, a stronger variant of WEFX that can even drop zero-valued goods.}

\begin{definition}[WEFX]
    An allocation $\mathcal{A} = (A_1,\dots,A_n)$ is \emph{weighted envy-free up to any good (WEFX)} if for all agents $i,j \in N$ and all $g \in A_j$,
    $\frac{v_i(A_i)}{w_i} \geq \frac{v_i(A_j \setminus \{g\})}{w_j}$.
\end{definition}
\citet{hajiaghayi2023wef} showed an impossibility result---that an WEFX allocation may not exist even for $n=2$ under additive valuations. 
An immediate question is then: are there any (more restricted) settings where a WEFX allocation exists? And if so, how many are there?

Indeed, we first show that in the restricted setting of binary additive valuations---which has also been well studied in the fair division literature \cite{aleksandrov2015onlinefoodbank,amanatidis2021mnwefx,bouveret2016conflict,freeman2019equitable,halpern2020binary,SuksompongTe22}---with any number of agents, not only does an WEFX allocation exist, but we can find an allocation satisfying both WEFX and PO.
This combination is surprisingly elusive---a weighted generalization of \emph{maximum Nash welfare} has been shown to not always satisfy WEF1 and PO under binary additive valuations.

We first state the definition of PO.
\begin{definition}[PO]
An allocation $\mathcal{A} = (A_1,\dots,A_n)$ is \emph{Pareto-optimal (PO)} if there does not exist another allocation $\mathcal{A}'$ such that $v_i(A'_i)\geq v_i(A_i)$ for all $i \in N$, and $v_i(A'_i) > v_i(A_i)$ for some $i \in N$.
\end{definition}

We show the existence of WEFX and PO allocations through two methods. 
Firstly, we show that by appropriately adapting the definiton of \emph{leximin} to account for the weights, a weighted leximin allocation (that maximizes a version of weighted egalitarian welfare function) is WEFX and PO. 
The benefit of this is that it is a \emph{welfare function}, which is widely studied in the literature, and may potentially offer additional fairness or efficiency guarantees. 
We then provide a matching-based algorithm that can compute a WEFX and PO allocation in polynomial time.

\begin{theorem}\label{thm:WEFX,n =2, PO + B,A}
    Under binary additive valuations, for any number of agents, a \emph{WEFX} and \emph{PO} allocation exists and can be computed in polynomial-time.
\end{theorem}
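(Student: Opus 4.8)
The plan is to prove both the existence and polynomial-time computability of a WEFX and PO allocation under binary additive valuations by exhibiting a concrete welfare-maximizing allocation and analyzing its properties. Since the theorem statement describes two methods (a weighted leximin approach and a matching-based algorithm), I would focus on establishing the cleaner conceptual claim first—that a suitably weighted leximin allocation is both WEFX and PO—and then argue computability separately.

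\textbf{Setup and the leximin objective.} First I would fix the right notion of weighted egalitarian welfare. For each agent $i$ with binary additive valuation, $v_i(A_i)$ is simply the number of goods in $A_i$ that agent $i$ values at $1$. The natural weighted quantity to equalize is $v_i(A_i)/w_i$, so I would define the weighted leximin allocation as the one that lexicographically maximizes the sorted (ascending) vector of normalized utilities $(v_i(A_i)/w_i)_{i \in N}$. The key structural observation in the binary additive regime is that each $v_i(A_i)$ is integer-valued, which tightly constrains how normalized utilities can differ and is what makes WEFX achievable where it fails for general additive valuations.

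\textbf{PO and WEFX from leximin.} Pareto-optimality should follow almost immediately: any Pareto improvement weakly increases every $v_i(A_i)$ and strictly increases some, which cannot decrease any normalized utility and strictly increases at least one, contradicting lexicographic maximality of the ascending utility vector. The substantive step is WEFX. Suppose toward contradiction that agent $i$ weighted-envies agent $j$ even after removing some good $g \in A_j$, i.e. $\frac{v_i(A_i)}{w_i} < \frac{v_i(A_j \setminus \{g\})}{w_j}$. The standard leximin argument is to transfer a single good that $i$ values from $A_j$ to $A_i$ and show this strictly improves the leximin vector, contradicting optimality. The transfer raises $v_i(A_i)/w_i$ by exactly $1/w_i$ (binary additivity makes the marginal contribution exactly $1$) and lowers $v_j(A_j)/w_j$; I would need the inequality above to guarantee that after the swap agent $i$'s new normalized utility is still below agent $j$'s old one (so the minimum of the affected pair does not drop), which is precisely where the ``up to any good'' slack and integrality of values are used.

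\textbf{Computability.} For the polynomial-time claim I would invoke the matching-based / flow-based approach the paper alludes to: because valuations are binary, a leximin allocation can be computed by iteratively equalizing normalized utilities via bipartite matching (or a sequence of max-flow computations that saturate agents in increasing order of $1/w_i$), each step solvable in polynomial time, and the number of distinct attainable utility levels is polynomially bounded since each $v_i(A_i) \le m$. I expect the \textbf{main obstacle} to be the WEFX swap argument: verifying that a single-good transfer genuinely improves the leximin order requires carefully controlling the normalized utilities of \emph{both} the giver and receiver simultaneously, and the binary structure together with the ``remove any good'' relaxation must be combined precisely so that the post-swap configuration strictly dominates in the lexicographic order rather than merely permuting utility values. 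A secondary subtlety is ensuring the good being transferred is one that agent $i$ actually values (so $v_i$ increases by exactly $1$), which relies on $v_i(A_j \setminus \{g\}) > 0$ following from the assumed envy.
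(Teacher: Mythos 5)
Your plan is close in spirit to the paper's first method, but it has a genuine gap: plain weighted leximin (lexicographic maximization of the sorted normalized utilities alone) does \emph{not} imply WEFX as this paper defines it. The paper's WEFX is the WEFX$_0$ variant (see the footnote in Section 4): the inequality must hold after dropping \emph{any} good $g \in A_j$, including goods of zero value. Concretely, take $n=2$, $w_1=w_2$, and two goods with $v_1(g_1)=v_2(g_1)=1$ and $v_1(g_2)=v_2(g_2)=0$. The allocation $\mathcal{A}=(\{g_1,g_2\},\emptyset)$ has the same sorted normalized-utility vector $(0,1)$ as $(\{g_1\},\{g_2\})$, so it is leximin-optimal under your objective; yet agent $2$ has $v_2(A_2)=0 < v_2(A_1\setminus\{g_2\})=1$, violating WEFX$_0$. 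Moreover, your repair move---transferring the valued good $g_1$ to agent $2$ (and you correctly note such a good exists since $v_i(A_j\setminus\{g\})>0$)---produces $(\{g_2\},\{g_1\})$, which merely permutes the utility vector and yields no leximin contradiction. This is precisely the failure mode you flagged (``rather than merely permuting utility values'') but did not resolve. The paper resolves it by working with a weighted \emph{leximin++} order: ties in the sorted utility vector are broken lexicographically by bundle cardinalities. Its WEFX proof then splits into two cases: if $A_j$ contains a good with $v_j(g)=0$ (hence, by Pareto-optimality of the lex++-optimum, zero for everyone), that good is moved to the poorer agent, leaving utilities unchanged but strictly improving the cardinality tiebreak; if every good in $A_j$ has $v_j(g)=1$, a good valued by $i$ is transferred, and the chain $v_i(A_i)/w_i < v_i(A_j\setminus\{g\})/w_j \le (|A_j|-1)/w_j$ together with the PO-derived inequality $v_i(A_j)\le v_j(A_j)$ shows both post-transfer utilities strictly exceed the old minimum. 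Your sketch is missing this case split and the size tiebreak, and without them the argument fails on the example above.

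On computability there is a second, smaller gap: the paper does \emph{not} claim the lex++-optimal allocation is directly computable; it gives a separate matching-based algorithm. That algorithm first disposes of goods valued at zero by all agents (handing them to the agent with minimum normalized utility, which is exactly how WEFX$_0$ with respect to zero goods is restored algorithmically), then removes agents left unmatched by a maximum matching (showing they can be ignored without breaking WEFX), and then searches over the polynomial candidate level set $V=\{j/w_i : i\in N,\ j\in[m+1]\}$ using bipartite matchings with $\tau_i$ copies per agent and slack nodes, maintaining loop invariants that certify a leximin-like post-condition implying WEFX and PO. Your proposal of ``saturating agents in increasing order of $1/w_i$'' is too coarse: with arbitrary weights the order in which agents' normalized levels can be raised is not uniform, some agents may be unable to receive any positively valued good at all, and the zero-good preprocessing is load-bearing for correctness, not an optimization.
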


However, while an allocation satisfying WEFX exists in this restricted setting, there are not many of them, as illustrated with the following result. 

\begin{theorem}\label{1 allocation: WEFX}
    Under binary additive valuations with $n=2$, 
    \begin{itemize}
        \item when $m \leq 2$, every instance has at least $2$ \emph{WEFX} allocations, and there exists an instance with at most $2$ \emph{WEFX} allocations;
        \item when $m \geq 3$, every instance has at least $1$ \emph{WEFX} allocations, and there exists an instance with at most $1$ \emph{WEFX} allocation.
    \end{itemize}
\end{theorem}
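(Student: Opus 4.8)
The plan is to handle the four claims (the lower and upper bounds for $m \le 2$ and for $m \ge 3$) separately, exploiting the simple structure of binary additive valuations with two agents. Throughout, I would classify each good $g$ by the pair $(v_1(g), v_2(g)) \in \{0,1\}^2$ into one of four types, and use the reduction that, since valuations are additive, WEFX of $i$ toward $j$ is equivalent to the single inequality $\frac{v_i(A_i)}{w_i} \ge \frac{v_i(A_j) - \min_{g \in A_j} v_i(g)}{w_j}$ (vacuous when $A_j = \emptyset$); that is, only the least-valued good of the envied bundle matters. This collapses every WEFX check to a one-line test.

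For the lower bounds: when $m \ge 3$, the bound of at least one WEFX allocation is immediate from Theorem \ref{thm:WEFX,n =2, PO + B,A}, which already guarantees a WEFX (indeed WEFX and PO) allocation under binary additive valuations. When $m \le 2$, I would prove the stronger bound of two by a short exhaustive analysis. For $m = 1$ both allocations of the single good are WEFX, since the envied bundle is either empty or drops to value $0$ after removing its single good. For $m = 2$ I would case on the type multiset of the two goods (only a handful of cases up to relabeling agents and goods) and exhibit two distinct WEFX allocations in each; the representative tight case is one good of type $(1,0)$ and one of type $(0,1)$, where the two allocations giving each agent a single good are WEFX while the two ``all-to-one'' allocations fail.

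For the upper bounds, the $m \le 2$ witnesses are immediate: any one-good instance has exactly two WEFX allocations, and for $m = 2$ the instance with one good of type $(1,0)$ and one of type $(0,1)$ has exactly two (any allocation handing both goods to one agent fails WEFX for the other). The crux is the $m \ge 3$ upper bound, for which I would construct an explicit instance with a unique WEFX allocation: set $v_1(g_j) = 1$ for all $j \in [m]$, set $v_2(g_1) = 1$ and $v_2(g_j) = 0$ for $j \ge 2$, and take $w_1 = m$, $w_2 = 1$. To prove uniqueness I would split on whether $g_1 \in A_1$ or $g_1 \in A_2$. If $g_1 \in A_1$, agent $2$'s condition forces $A_1 = \{g_1\}$ (otherwise dropping a non-$g_1$ good from $A_1$ still leaves agent $2$ a value of $1$ while $v_2(A_2) = 0$), and this lone candidate then fails agent $1$'s condition since $\frac{1}{w_1} < \frac{m-2}{w_2}$; so this case yields no WEFX allocation. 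If $g_1 \in A_2$, writing $k = |A_1|$, only agent $1$'s condition binds, namely $\frac{k}{w_1} \ge \frac{m-k-1}{w_2}$, which with these weights holds only for $k = m-1$, forcing $A_1 = \{g_2,\dots,g_m\}$ and $A_2 = \{g_1\}$. Hence exactly one WEFX allocation exists.

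The main obstacle I anticipate is the $m \ge 3$ uniqueness argument. Obtaining a single allocation, rather than an entire symmetry class, requires the weights to be skewed enough to collapse the feasible region; care is needed because identically valued goods would otherwise produce $\binom{m-1}{k}$ equivalent allocations for each feasible count $k$, and the construction is chosen precisely so that the unique feasible count $k = m-1$ pins down a single allocation. The $m \le 2$ lower bound is routine but needs each type configuration checked; I would organize it using the agent/good relabeling symmetry to minimize the casework.
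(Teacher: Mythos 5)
Your proposal is correct and takes essentially the same approach as the paper: the $m\geq 3$ lower bound is delegated to Theorem~\ref{thm:WEFX,n =2, PO + B,A}, the $m\leq 2$ bounds reduce to the trivial one-good-per-agent allocations, and your uniqueness instance (agent~$1$ values every good, agent~$2$ values only $g_1$, with agent~$2$'s weight much smaller) is exactly the paper's construction up to replacing its weights $w_1 = 1-\epsilon$, $w_2 = \epsilon$ with the concrete choice $w_1 = m$, $w_2 = 1$. Your casework organized by the location of $g_1$ (rather than by $|A_2|$, as in the paper) is a cosmetic variation, and if anything your explicit weight ratio is tidier than the paper's unquantified~$\epsilon$.
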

A follow-up question is: does the existence of WEFX allocation hold for valuation functions that are more general than binary additive functions?
To address this, we consider two more general valuation classes commonly studied in the literature---namely \emph{binary submodular} and \emph{restricted additive} valuations.

A valuation profile $(v_1,\dots,v_n)$ is said to be \emph{binary submodular} (or \emph{matroid-rank}) if for all $i \in N, g \in G$ and $ S, T \subseteq G$, $v_i(g | S) = 0$ or $v_i(g | S) = 1$ and $v_i(S) + v_i(T) \geq v_i(S \cup T) + v_i(S \cap T)$   \cite{BabaioffEzFe21-dichotomous,BarmanVe21,BenabbouChIg21,GokoIgKa22,MontanariScSu24,SuksompongTe23}; 
and \emph{restricted additive} (or \emph{generalized binary}) if $v_1,\dots,v_n$ are additive and there exists a function $h:G\rightarrow\mathbb{R}_{\ge 0}$ such that $v_i(g)\in \{0,h(g)\}$ for all $i\in N$ and $g\in G$ \cite{AkramiReSe22,CamachoFePe23,elkind2025temporalfd}.
We show that considering the above-defined  generalizations of valuation functions beyond binary additive functions would yield negative results when $n=2$, and not only for WEFX, but for a weaker version of WEFX, called \emph{weak WEFX (WWEFX)}.
This weaker relaxation has also been studied for WEF1 \cite{chakraborty2021weighted,ChakrabortySeSu22}.

Our results essentially close the (WEFX) existential gap of existing works that show a WEFX allocation may not exist for two agents with general, additive valuations \cite{hajiaghayi2023wef}, and highlights that binary additive valuations may possibly be the best that one can hope for in terms of achieving WEFX (or even for WWEFX).
We first formally define WWEFX.
\begin{definition}[WWEFX]
    An allocation $\mathcal{A} = (A_1,\dots,A_n)$ is \emph{weak weighted envy-free up to any good (WWEFX)} if for all agents $i,j\in N$ and all $g \in A_j$, it holds that $\frac{v_i(A_i)}{w_i} \geq \frac{v_i(A_j \setminus \{g\})}{w_j}$ or $\frac{v_i(A_i \cup \{g\})}{w_i} \geq \frac{v_i(A_j)}{w_j}$.
\end{definition}

Then, our (negative) results are as follows.
\begin{proposition}\label{thm: WEFX, b,s}
    Under binary submodular valuations, a \emph{WWEFX} allocation may not exist, even when $n=2$.
\end{proposition}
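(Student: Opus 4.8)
The plan is to exhibit a single concrete instance with two agents and binary submodular valuations, together with unequal weights $w_1 \ne w_2$, and then argue by exhaustive case analysis over all partitions of the goods that none of them is WWEFX. A crucial design constraint is that the valuations must be \emph{genuinely non-additive}: by the positive result for binary additive valuations (Theorem~\ref{thm:WEFX,n =2, PO + B,A}) a WEFX---hence WWEFX---allocation always exists under binary additive valuations with $n=2$, so any counterexample is forced to use the submodular (rank-truncation / substitute-goods) structure. I would therefore build the two valuations from matroid-rank functions in which several goods are mutual substitutes, or in which the rank is truncated below the number of valued goods.

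Writing the two WWEFX clauses for agent $i$ towards agent $j$ and a good $g\in A_j$ as $\frac{v_i(A_i)}{w_i}\ge\frac{v_i(A_j\setminus\{g\})}{w_j}$ (the drop-clause) and $\frac{v_i(A_i\cup\{g\})}{w_i}\ge\frac{v_i(A_j)}{w_j}$ (the catch-up clause), the first step is to rule out the two ``dictatorship'' allocations. Giving every good to the heavier agent leaves the lighter agent with a bundle whose scaled value is compared against the full envied bundle; because a single good contributes only a unit marginal while the whole bundle has large rank, the weight ratio $w_2/w_1$ can be placed in a window where both clauses fail, and symmetrically for the all-to-the-light-agent allocation. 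I would choose the ranks and the ratio so that both extreme allocations are simultaneously eliminated.

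The substantive step is to defeat the intermediate (in particular balanced) allocations, where the ``up to any good'' slack is most generous: if each agent attains its full rank, the drop-clause is easily satisfied because removing one good lowers the envied value by at most one unit while the owner already holds comparable value. To prevent this I would use the substitute/truncation structure so that whichever agent holds the contested goods keeps essentially full value even after any single good is deleted (defeating the drop-clause for the other agent), while the weight imbalance keeps the envier's scaled value strictly below the envied scaled value even after adding that good (defeating the catch-up clause). Iterating this over every way of splitting the contested goods---and exploiting the fact that the scarcity of contested goods prevents both agents from being simultaneously satisfied---should show that every remaining partition violates WWEFX.

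The main obstacle is exactly this last step: the intermediate allocations are where both WWEFX relaxations (the any-good deletion \emph{and} the disjunctive catch-up clause) are most forgiving, so the weights and the matroid structure must be tuned so that the drop-clause and the catch-up clause fail \emph{simultaneously} in every partition, not merely in the natural ones. Ensuring that the finite case analysis is genuinely exhaustive---covering wasteful allocations that hand an agent goods it does not value, as well as all placements of the substitute goods---is the delicate part of the argument.
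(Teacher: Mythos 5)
Your proposal is a construction plan, not a proof, and the gap is exactly where you flag it yourself: the statement is an existence-of-counterexample claim, so its entire content is a concrete instance together with a verified, exhaustive case analysis, and you supply neither. Phrases like ``I would choose the ranks and the ratio so that both extreme allocations are simultaneously eliminated'' and ``should show that every remaining partition violates WWEFX'' defer precisely the steps that could fail: it is not a priori clear that a single weight window kills both dictatorship allocations \emph{and} every intermediate split simultaneously, since tightening the ratio to defeat one clause for one agent loosens the mirrored clause for the other. Without exhibiting the valuations and weights and checking the inequalities, nothing has been established.

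It is worth seeing how the paper makes your feared ``delicate exhaustiveness'' step trivial. Its instance has $m=7$ \emph{identical} goods, weights $w_1=2$, $w_2=5$, and valuations $v_1(S)=|S|$ (additive, a free-matroid rank --- note only one agent needs the non-additive structure, not both) and $v_2(S)=\min\{2,|S|\}$ (a truncated/uniform-matroid rank, your ``rank truncation'' idea). Because the goods are identical, every allocation is determined up to symmetry by the single integer $|A_1|$, so the case analysis collapses to two lines: if $|A_1|\le 1$, then $v_1(A_1)\le 1$, $v_1(A_1\cup\{g\})\le 2$, while $v_1(A_2\setminus\{g\})\ge 5$ and $v_1(A_2)\ge 6$, so both the drop-clause and the catch-up clause for agent~$1$ fall below the threshold $w_1/w_2=0.4$; if $|A_1|\ge 2$, then $v_2(A_1\setminus\{g\})\ge 1$ while $v_2(A_2\cup\{g\})\le v_2$'s cap of $2$, so both clauses for agent~$2$ are at most $2<w_2/w_1=2.5$. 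Your worry about ``intermediate allocations where the slack is most generous'' and about tracking placements of substitute goods evaporates under this identical-goods design --- the cap on $v_2$ means agent~$2$'s envy can never be repaired once $|A_1|\ge 2$, and the weight gap means agent~$1$'s envy can never be repaired once $|A_1|\le 1$. To repair your proposal you must commit to such an instance and carry out the (then finite and short) verification; as written, the argument does not yet prove the proposition.
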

   
\begin{proposition}\label{WEFX, restricted additive}
    Under restricted additive valuations, a \emph{WWEFX} allocation may not exist, even when $n=2$.
\end{proposition}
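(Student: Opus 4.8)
The plan is to exhibit an explicit two-agent instance with restricted additive valuations---specified by a common value function $h$, an approval set for each agent, and a pair of \emph{unequal} weights $w_1 \neq w_2$---and then verify directly that \emph{every} partition of the goods fails WWEFX. Unequal weights are essential here: when $w_1 = w_2$ the notion collapses to EFX, which always exists for two agents, so the construction must exploit the weight asymmetry. Since restricted additivity lets goods carry distinct values $h(g)$ while still forcing each agent to value every good at either $0$ or its full value $h(g)$, I would build the instance from a handful of goods of a few distinct value levels together with carefully chosen approval sets, rather than from identical or unit-valued goods.

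Two elementary observations make the case analysis tractable, and I would isolate them first. Writing $a = v_i(A_i)$, $b = v_i(A_j\setminus\{g\})$ and $c = v_i(g)$ for a candidate witness $(i,j,g)$ with $g \in A_j$, the ordered pair $(i,j)$ violates WWEFX at $g$ exactly when $a\,w_j < b\,w_i$ and $(a+c)\,w_j < (b+c)\,w_i$. First, both inequalities are easiest to satisfy when $c = v_i(g)$ is minimized (the right-hand side of the second is the fixed quantity $v_i(A_j)\,w_i$), so it suffices to test the good of $A_j$ that $i$ values least; in particular a bundle consisting of a single good, or of goods all valued $0$ by $i$, can never be WWEFX-envied, since removing the least-valued good leaves $b = 0$ and makes the first inequality impossible. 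Second, if the envying agent is the heavier one, i.e. $w_i \ge w_j$, then $c\,(w_i - w_j) \ge 0$, so $a\,w_j - b\,w_i < 0 \le c\,(w_i-w_j)$ shows that the first inequality already forces the second; hence for the heavier agent a WWEFX violation coincides with an ordinary WEFX violation. Together these reduce each partition to checking one designated good against one designated (usually heavier) envier.

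With these in hand I would enumerate all partitions, grouped by which agent holds a bundle of size at least two (only such bundles can be envied), and for each name a witnessing agent and good and check the one or two inequalities above. The delicate cases---and the main obstacle---are the \emph{near-balanced} partitions and, above all, any partition in which one agent holds a single highly valued good: such a bundle cannot be envied by the first observation, while its holder is typically content, so it tends to be WWEFX. The parameters must therefore be tuned so that \emph{simultaneously} (i) no single good is dominant for either agent relative to the weights, which destroys the singleton escapes, and (ii) the weighted-proportional split is unattainable even up to one good, which destroys the balanced splits. The difficulty is that these two requirements pull the value levels in opposite directions---mirror-image balanced partitions impose opposing constraints on the same value---so the hard part is choosing the weight ratio together with the distinct $h$-values and the asymmetric approval sets so that all constraints hold at once. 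This is precisely where binary additive valuations fail (there the analogous allocation is always WWEFX, indeed WEFX), which is why the construction genuinely needs the extra freedom of restricted additivity; a similar tension must be resolved in the binary submodular case of the preceding proposition.
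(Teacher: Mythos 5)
Your two structural lemmas are correct and would indeed streamline the verification: for a witness pair $(i,j)$ it suffices to test the good of $A_j$ that $i$ values least, singletons can never be WWEFX-envied, and for the heavier envier ($w_i \ge w_j$) the drop-violation already implies the add-violation, so WWEFX envy coincides with WEFX envy. These match how the paper's own case analysis is organized. The problem is that your proposal stops exactly where the proof begins. The proposition asserts the existence of a counterexample, so the entire mathematical content is the explicit instance together with the exhaustive check, and you never produce it: you describe the tension between ``no dominant single good'' and ``no near-balanced split works,'' correctly identify resolving it as ``the hard part,'' and then leave it unresolved. As it stands this is a proof plan with the key object missing, not a proof.

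For comparison, the paper's instance is small and resolves your tension with a device your reductions do not supply. Take $m=4$, weights $w_1 = 0.5+\epsilon$, $w_2 = 0.5-\epsilon$ with $0<\epsilon<1/14$ (so $w_1/w_2 > 1$ and $w_2/w_1 > 3/4$), and valuations $v_1(g_1)=v_1(g_2)=2$, $v_1(g_3)=v_1(g_4)=0$; $v_2(g_1)=v_2(g_2)=2$, $v_2(g_3)=1$, $v_2(g_4)=0$ (restricted additive via $h=(2,2,1,0)$). If agent~$1$ holds both $g_1,g_2$, the lighter agent~$2$ has both ratios at most $1/2$ and $3/4$, below the threshold $w_2/w_1>3/4$. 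If $g_1\in A_2$ and $|A_2|\ge 2$, the heavier agent~$1$ has a plain WEFX violation (both ratios at most $1$, below $w_1/w_2>1$), which by your second observation suffices. The delicate case you flagged---agent~$2$ holding the singleton $\{g_1\}$---is killed not by envy \emph{toward} the singleton (impossible, as you note) but by its lighter \emph{holder}: with witness good $g_4$ one gets $v_2(A_1\setminus\{g_4\})=3$ and $v_2(A_2\cup\{g_4\})=2$, so both ratios equal $2/3 < 3/4 < w_2/w_1$. The crucial trick is the good $g_4$ valued zero by the envier: adding it gains nothing, so the ``add'' escape of WWEFX becomes vacuous for the lighter agent---precisely the case your heavier-envier observation cannot handle, and the device your sketch would need to discover to close the argument.
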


To end off this section, we show that a constant-factor approximation to WEFX when $n=2$ exists.
Formally, for $\alpha \in [0,1]$, we say that an allocation $\mathcal{A}$ is $\alpha$-WEFX if for all agents $i,j \in N$ and all $g \in A_j$,
    $\frac{v_i(A_i)}{w_i} \geq \alpha \cdot \frac{v_i(A_j \setminus \{g\})}{w_j}$.

\citet{suksompong2025weightedreview} raised as an open question the existence of such a constant $\alpha$, even for the case of $n=2$.
Here, we show the existence of a $\frac{1}{4}$-WEFX allocation, thereby improving upon the 
$\frac{w}{2\sqrt[3]{m}}$-WEFX result by \citet{hajiaghayi2023wef}, where $w$ is the larger weight of the two agents.
Our result is as follows.

\begin{theorem}\label{1/4-WEFX}
    When $n = 2$, there always exists a \emph{$\frac{1}{4}$-WEFX} allocation.
\end{theorem}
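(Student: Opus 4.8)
The plan is to reduce to three cases governed by a size threshold on individual goods, exploiting one crucial observation: if an agent's bundle is a single good, then the \emph{other} agent's $\alpha$-WEFX constraint toward it holds automatically, since removing that good empties the bundle and the right-hand side collapses to $0$. I would first normalize. By relabelling assume $w_1 \le w_2$, and by scaling each (additive) valuation---which leaves $\alpha$-WEFX invariant since the scalar appears on both sides---assume $v_1(G) = v_2(G) = 1$. Write $p_i = w_i/(w_1+w_2)$ for the weighted proportional shares, and set the thresholds $\theta_1 = \frac{w_1}{4w_2+w_1}$ and $\theta_2 = \frac{w_2}{4w_1+w_2}$. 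I would also record the elementary fact that if $v_i(A_i) \ge p_i$ then agent $i$ does not weighted-envy agent $j$ at all (full WEF), because $\frac{v_i(A_i)}{w_i} \ge \frac{p_i}{w_i} = \frac{1}{w_1+w_2} = \frac{p_j}{w_j} \ge \frac{v_i(A_j)}{w_j}$.

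Case 1 ($v_1(g) \ge \theta_1$ for some good $g$): assign agent $1$ its favourite good $g^\ast$ and agent $2$ everything else. Agent $2$'s constraint is free by the singleton observation, and agent $1$'s $\tfrac14$-WEFX reduces, after bounding $v_1(A_2\setminus\{g\}) \le v_1(A_2) = 1 - v_1(g^\ast)$, to exactly $v_1(g^\ast)(4w_2+w_1) \ge w_1$, i.e.\ $v_1(g^\ast) \ge \theta_1$, which holds. Case 2 ($v_2(g)\ge\theta_2$ for some $g$) is the mirror image, handing agent $2$ its favourite good. Together these dispose of every instance containing a good that is ``large'' for someone.

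The remaining Case 3---every good satisfies $v_1(g) < \theta_1$ and $v_2(g) < \theta_2$---is where the real work lies, and the step I expect to be the main obstacle. Here I would build a weighted proportional partition in the divisible relaxation and round it. Sorting goods by the ratio $v_1(g)/v_2(g)$ and cutting where agent $1$'s prefix is worth exactly $p_1$, the sorting inequality (prefix ratio $\ge$ suffix ratio) forces $p_1 \cdot v_2(A_2) \ge p_2 \cdot v_2(A_1)$, hence $v_2(A_1) \le p_1$ and agent $2$'s part is worth $\ge p_2$; so both sides reach their proportional share with at most one good split at the boundary. I would then assign the split good entirely to agent $2$: agent $2$ still gets $\ge p_2$ and is fully WEF, while agent $1$ loses less than $\theta_1$ and so retains $v_1(A_1) > p_1 - \theta_1$. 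It remains to verify agent $1$'s $\tfrac14$-WEFX, which after substituting $\theta_1$ and using $p_i/w_i = 1/(w_1+w_2)$ collapses to $3w_2 \ge w_1+w_2$, i.e.\ $2w_2 \ge w_1$, always true under $w_1 \le w_2$.

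The delicate points I anticipate are (i) confirming that the divisible weighted-proportional partition can always be realised with a single fractional good, including the degenerate goods with $v_1(g)=0$ or $v_2(g)=0$ (placed at the two ends of the ratio order), and (ii) tracking the thresholds carefully through the final inequality chase. It is precisely the choice $\theta_1 = \frac{w_1}{4w_2+w_1}$ together with its mirror $\theta_2$ that makes all three cases close at the constant $\tfrac14$, so I would present these thresholds up front and verify each case against them.
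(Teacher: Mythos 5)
Your proposal is correct, and it takes a genuinely different route from the paper. The paper's proof is a potential-function argument: it takes a subset $A$ maximizing $f(A)=\min(v_1(A)-w_1,0)+\min(v_2(G\setminus A)-w_2,0)$ (ties broken by cardinality), proves exchange lemmas showing that moving certain goods across the cut would strictly improve $f$, and then closes a case analysis on which agents fall short of their weighted share, with the constant $\tfrac14$ emerging from a ``half the share, at most twice the complement'' lemma. Your argument instead splits on a good-size threshold: large goods ($v_1(g^\ast)\ge\theta_1$ with $\theta_1=\frac{w_1}{4w_2+w_1}$) are handled by a singleton allocation, exploiting that under the paper's WEFX$_0$-style definition a singleton bundle makes the other agent's constraint vacuous ($A_j\setminus\{g\}=\emptyset$); small goods are handled by rounding a fractional weighted-proportional division obtained by sorting on $v_1(g)/v_2(g)$, giving the split good to the heavier agent so the lighter agent absorbs a loss below $\theta_1$. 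I verified the closing algebra: the sufficient inequality $\frac{p_1-\theta_1}{w_1}\ge\frac{p_2+\theta_1}{4w_2}$ reduces exactly to $8w_2^2-2w_1w_2-w_1^2=(2w_2-w_1)(4w_2+w_1)\ge 0$, i.e.\ $w_1\le 2w_2$, which your normalization $w_1\le w_2$ guarantees---so shorting the lighter agent is the right (and necessary) choice. Two remarks: your Case~2 is actually redundant, since your Case~3 argument only uses the $v_1$-threshold on the single split good (agent~$2$ ends up at or above proportional and is fully WEF regardless of $\theta_2$), so Cases~1 and~3 with Case~3 defined as ``all $v_1(g)<\theta_1$'' already cover everything; and both your proof and the paper's implicitly assume $v_i(G)>0$ for the normalization, the degenerate case being trivial. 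What your approach buys is that it is constructive and polynomial-time (one ratio sort and a prefix cut), whereas the paper's argmax over $2^G$ is existential and not obviously efficient; it also makes transparent where the constant $\tfrac14$ comes from, namely the threshold balancing the singleton case against the rounding loss.
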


\section{EFX+} \label{sec:EFX+}
In the standard definition of EFX (see Definition \ref{def:EFX}), a good is dropped from the right-hand side (i.e., the `other agent's' bundle).
One could also consider a variant whereby the good is added on the left-hand side instead (i.e., the `evaluating agent's' bundle)---we call this \emph{EFX+}.
Under additive valuations, EFX+ is equivalent to EFX.
However, under general monotone valuations, EFX+ and EFX are incomparable (i.e., one does not necessarily imply the other).
Conceptual variants of this nature can be also found for envy-freeness properties studied in weighted fair division
\cite{chakraborty2021weighted,ChakrabortySeSu22,MontanariScSu24}.
It would thus be interesting to consider EFX+, which is defined as follows.

\begin{definition}[EFX+]
    An allocation $\mathcal{A} = (A_1,\dots,A_n)$ is \emph{EFX+} if for all agents $i,j \in N$ and all $g \in A_j$, $v_i(A_i \cup \{g\}) \geq v_i(A_j)$.
\end{definition}

We first give an example to illustrate the non-equivalence (and incomparability) of EFX and EFX+ under general monotone valuations. 
Consider an instance with $n=2$ agents and $m=3$ goods such that for both agents, $v(g_1) < v(g_2) < v(g_3) < v(g(\{g_2,g_3\}) < v(g(\{g_1,g_2\}) < v(g(\{g_1,g_3\})$. 
Then, the allocation that gives $g_1$ to an agent and $g_2,g_3$ to the other would satisfy EFX+ but not EFX; whereas the allocation that gives $g_3$ to an agent and $g_1,g_2$ to the other would satisfy EFX but not EFX+.

Next, we show that for two agents, EFX+ allocations exist, and that there is always a minimum of at least two EFX+ allocations.
To prove this, we employ the `cut-and-choose' protocol, where one agent partitions the set of goods into two `almost equal' bundles, and the other agent chooses their favourite bundle. The non-trivial part of this protocol involves showing that the first agent always has a valid cut. 
In a slight variation of \emph{leximin++} partial order \cite{plaut2020almost}, we introduce the notion of \emph{leximax}.
Our result is as follows.

\begin{theorem}\label{EFX+,n=2}
    When $n=2$, there are at least two \emph{EFX+} allocations.
\end{theorem}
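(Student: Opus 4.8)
The plan is to use the \emph{cut-and-choose} protocol: one agent (the \emph{cutter}) partitions $G$ into two bundles and the other (the \emph{chooser}) keeps a favourite bundle, leaving the rest to the cutter. The chooser is automatically EFX+ towards the cutter, since keeping a weakly more valuable bundle $Z$ gives $v(Z\cup\{g\})\geq v(Z)\geq v(\text{other bundle})$ by monotonicity. The whole difficulty is therefore to guarantee the cutter a partition for which the cutter is EFX+ no matter which of the two bundles is left to them --- a \emph{valid cut}. First I would record the easy direction: for a partition $(X,Y)$ with $v_i(X)\geq v_i(Y)$ in the cutter $i$'s eyes, the cutter holding the larger bundle $X$ is automatically EFX+ towards $Y$, because $v_i(X\cup\{g\})\geq v_i(X)\geq v_i(Y)$; and if $v_i(X)=v_i(Y)$ both directions are automatic. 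Hence the only non-trivial requirement is that when the cutter is left with the smaller bundle $Y$, every $g\in X$ satisfies $v_i(Y\cup\{g\})\geq v_i(X)$.

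To secure such a cut I would introduce the \emph{leximax} order, the promised ``slight variation of leximin++'' \cite{plaut2020almost}: among all partitions of $G$ into two bundles, select one minimising the larger bundle value $\max(v_i(X),v_i(Y))$, breaking ties by minimising the number of goods in that larger-valued bundle (dualising the cardinality tie-break of leximin++). A leximax-optimal partition exists since there are finitely many partitions. The key lemma --- and the main obstacle --- is that a leximax-optimal partition is a valid cut for the cutter. I would argue by contradiction: if $v_i(X)>v_i(Y)$ and some $g^{*}\in X$ violates $v_i(Y\cup\{g^{*}\})\geq v_i(X)$, then moving $g^{*}$ to $Y$ yields $(X\setminus\{g^{*}\},\,Y\cup\{g^{*}\})$ with $v_i(Y\cup\{g^{*}\})<v_i(X)$ and $v_i(X\setminus\{g^{*}\})\leq v_i(X)$. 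So the new maximum value is at most $v_i(X)$; if it is strictly smaller we contradict minimality of the maximum, and if it equals $v_i(X)$ then $X\setminus\{g^{*}\}$ remains the unique larger bundle but with one fewer good, contradicting the cardinality tie-break. Either way the leximax-optimal partition admits no violation, so it is a valid cut, and cut-and-choose with cutter $i$ always returns an EFX+ allocation. In particular this shows that a leximax-optimal partition for $v_i$ is EFX+ for agent $i$ under \emph{both} assignments of its two bundles.

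Finally, to upgrade existence to at least two EFX+ allocations, I would run cut-and-choose twice: with agent $1$ as cutter (using a leximax-optimal partition for $v_1$), obtaining an EFX+ allocation $\mathcal{A}$, and with agent $2$ as cutter, obtaining an EFX+ allocation $\mathcal{B}$. If $\mathcal{A}\neq\mathcal{B}$ we are done. If $\mathcal{A}=\mathcal{B}=(A_1,A_2)$, then the underlying partition $\{A_1,A_2\}$ is simultaneously leximax-optimal for $v_1$ and for $v_2$, so by the lemma it is EFX+ for agent $1$ in both assignments and EFX+ for agent $2$ in both assignments. Hence the swapped allocation $(A_2,A_1)$ is also EFX+, and it is distinct from $(A_1,A_2)$ whenever $A_1\neq A_2$ (which holds once $m\geq 1$). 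In all cases we obtain at least two distinct EFX+ allocations, as required.
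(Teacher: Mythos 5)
Your proof is correct and follows essentially the same route as the paper: a cut-and-choose protocol where the cutter uses a leximax-optimal partition (minimize the larger bundle's value, tie-break by its cardinality), with the same exchange argument showing the cut is valid for the cutter under both assignments. The only cosmetic difference is that the paper delegates the ``at least two'' counting step to the cut-and-choose argument of \citet{SUKSOMPONG2020606}, whereas you spell it out explicitly (run the protocol with each agent as cutter, and swap the bundles if the two outcomes coincide) --- a valid instantiation of the same idea.
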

We note that the use of \emph{leximax} can be used to show that EFX+ allocations also exists for any number of agents with identical valuations. While an EFX+ allocation always exists with two agents (or trivially when $m \leq n$), perhaps surprisingly (and in contrast to EFX), an EFX+ allocation may fail to exist with just three agents and four goods, as illustrated in the following result.
\begin{proposition}\label{EFX+,3 agents 4 goods}
    When $n=3$ and $m=4$, an \emph{EFX+} allocation may not exist.
\end{proposition}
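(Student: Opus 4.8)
The plan is to exhibit an explicit instance with three \emph{distinct}, monotone, strictly superadditive valuations over the goods $\{g_1,g_2,g_3,g_4\}$ and argue that no allocation is EFX+. Since $m=4$ and $n=3$, every allocation has bundle-size profile $(2,1,1)$, $(2,2,0)$, $(3,1,0)$, or $(4,0,0)$. I would first dispose of every profile containing an empty bundle: if $A_i=\emptyset$ and some agent $j$ holds $A_j$ with $|A_j|\ge 2$, then EFX+ for the pair $(i,j)$ demands $v_i(\{g\})=v_i(A_i\cup\{g\})\ge v_i(A_j)$ for every $g\in A_j$, which fails as soon as $v_i$ is strictly superadditive (since then $v_i(\{g\})<v_i(A_j)$). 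Because four goods split among two nonempty agents always leaves one agent with at least two goods, strict superadditivity rules out $(2,2,0)$, $(3,1,0)$, and $(4,0,0)$ simultaneously.

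It then remains to defeat every $(2,1,1)$ allocation, and here I would use a structural reduction: EFX+ toward a \emph{singleton} bundle $\{h\}$ always holds, since $v_i(A_i\cup\{h\})\ge v_i(\{h\})$ by monotonicity, and the unique doubleton-holder never EFX+-envies a singleton for the same reason. Hence the only possible EFX+ violation in a $(2,1,1)$ allocation is from a singleton-holder toward the doubleton-holder. Writing the held pair as $P=\{x,y\}$ and the singleton-holder's good as $z$, such a violation occurs exactly when $v(P)>\min(v(\{x,z\}),v(\{y,z\}))$, i.e.\ when the agent \emph{super-envies} $P$ from $z$. The design I would use assigns to the three agents the three \emph{complementary pair-classes} $\{\{g_1,g_2\},\{g_3,g_4\}\}$, $\{\{g_1,g_3\},\{g_2,g_4\}\}$, $\{\{g_1,g_4\},\{g_2,g_3\}\}$: agent $i$ values the two pairs in its class low and all remaining pairs high (with singletons lower still and triples/grand bundle high, to keep $v_i$ monotone and strictly superadditive). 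A short check shows that within every triple of goods the three pairs are `owned' by three \emph{distinct} agents, so each pair $P$ is super-envied by every agent except its owner, regardless of which outside good that agent holds. Since the doubleton $P$ is held by one agent while the two goods of its complement go to two \emph{distinct} agents, at least one of those two is not $P$'s owner and therefore super-envies $P$, breaking EFX+.

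The crux---and the step I expect to require the most care---is arranging this covering \emph{simultaneously} for all six pairs. The natural obstruction is that no single valuation can super-envy a pair from all relevant outside goods at once: within any triple the minimum-valued pair cannot strictly exceed another, and forcing a fixed pair to be minimal in every triple creates a cyclic chain of strict inequalities among the pair values (this is precisely why identical valuations admit an EFX+ allocation via leximax). Distributing the six `non-envy' responsibilities along the three complementary pair-classes, one class per agent, is exactly what breaks the cycle while keeping each agent's valuation realizable and monotone. The remaining verification---confirming monotonicity and that the owner of a pair is its \emph{only} non-super-envier in each of the four triples---is a finite, routine check that I would carry out explicitly. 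Finally, non-additivity is essential: under additive valuations EFX+ coincides with EFX, which exists for three agents, so it is the superadditive jump at bundles of size $\ge 2$ that both kills the empty-bundle profiles and supplies the super-envy needed for the $(2,1,1)$ profiles.
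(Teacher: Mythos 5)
Your proposal is correct, and it shares the paper's reduction: a strict cardinality-style monotonicity assumption (the paper posits $v(B_1)>v(B_2)$ whenever $|B_1|>|B_2|$; your strict superadditivity with positive singleton values serves the same purpose) kills every bundle-size profile except $(2,1,1)$, and the observation that singleton bundles are never EFX+-envied (by monotonicity, $v_i(A_i\cup\{h\})\geq v_i(\{h\})$) reduces the whole question to whether some singleton-holder ``super-envies'' the doubleton-holder. Where you genuinely diverge is the verification step. The paper fixes three explicit ordinal rankings over the six doubletons and then brute-forces all $36$ candidate $(2,1,1)$ allocations in a table, exhibiting a violating pair in each row. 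You instead partition the six pairs into the three complementary classes $\{\{g_1,g_2\},\{g_3,g_4\}\}$, $\{\{g_1,g_3\},\{g_2,g_4\}\}$, $\{\{g_1,g_4\},\{g_2,g_3\}\}$, assign one class per agent as that agent's ``low'' pairs, and prove a single uniform fact: within any triple of goods the three pairs lie in three distinct classes, so for every non-owner of $P=\{x,y\}$ and every outside good $z$, exactly one of $\{x,z\},\{y,z\}$ is that agent's own (low) class while $P$ is high, forcing $\min\bigl(v(\{x,z\}),v(\{y,z\})\bigr)<v(P)$; since the two singleton-holders are distinct agents and at most one of them owns $P$'s class, a violation always exists. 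This symmetry argument replaces the $36$-case enumeration and in fact explains why the paper's instance works: in the paper's rankings, each agent's two lowest-ranked pairs are precisely a complementary class, with distinct agents owning distinct classes. Your one deferred step---exhibiting numbers realizing singletons below low pairs below high pairs below triples below the grand bundle while keeping strict superadditivity---is genuinely routine (e.g., values $1,3,4,10,25$ respectively work), so I see no gap; if anything, your argument is the more illuminating one, at the modest cost of writing out the explicit instantiation the paper gets for free from its table.
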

\section{Conclusion}
In this work, we investigate the minimum number of EFX allocations in various restricted instances, and with respect to WEFX and EFX+. 
For EFX, we obtain tight bounds in the easy cases when $m \leq n+1$, and a lower bound of $n$ and upper bound of $n^2$ allocations when $m = n+2$. 
For the setting with $m = n+2$, we found instances where there are just $n$ allocations when $n \leq 4$ (see Remark 1 in the appendix). However, when $n = 5$, even after extensively generating random additive instances, our (empirical) upper bound on the number of EFX allocations is 14.
Thus, we present the following conjecture.
\begin{conjecture}
    With $n$ agents, there are at least $n$ EFX allocations.
\end{conjecture}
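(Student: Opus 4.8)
The plan is to prove the conjecture by generalizing the join-graph machinery used for $m=n+2$ (Theorem \ref{thm: counting: m = n+2}) to arbitrary $m$. The unifying principle behind all the tight lower bounds already established---$\frac{n!}{(n-m)!}$ for $m\le n$, and $n$ for $m\in\{n+1,n+2\}$---is that one can associate to each agent $i\in N$ an EFX allocation in which $i$ plays a distinguished structural role, and that distinct roles force distinct allocations. The goal is to make this correspondence work for every $m$ by defining, for each instance, a directed graph on the agent set whose edges inject into the set of EFX allocations (the analogue of Lemma \ref{lem:joingraph|E|}), and then to argue purely combinatorially that this graph has at least $n$ edges.

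First I would isolate the right notion of a \emph{distinguished} allocation for general $m$. For $m=n+2$ the distinguished allocations were those in which some agent receives three goods (a self-loop) or two agents each receive two goods subject to a prescribed favourite-good condition. The natural generalization is to fix an EFX allocation with a non-envied agent (obtainable via envy-cycle elimination, which preserves EFX) and to record which agent is non-envied together with the ``shape'' of the surplus bundle it could absorb. I would then define a generalized join-graph whose edges encode these configurations, and attempt to reprove the two structural facts that drive the $m=n+2$ argument: (i) every agent is incident to at least one edge (the analogue of Lemma \ref{m,n+2,lemma 1}); and (ii) each edge corresponds to a genuinely distinct EFX allocation. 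Given both, the final counting step---that every connected component has at least as many edges as vertices, hence $|E|\ge|N|=n$---should carry over essentially verbatim, since it uses only that there are no ``light'' components and that a source vertex has an outgoing edge.

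The hard part will be twofold, and I expect it to be the true obstacle rather than a technicality. For $m=n+2$ the surplus after giving every agent one good is exactly two goods, and this rigid ``two-by-two'' structure is precisely what makes the favourite-good edge condition both definable and verifiable; once the surplus $m-n$ exceeds two, bundle sizes can vary widely and the clean bijection between structural roles and EFX allocations breaks down, so both Lemma \ref{m,n+2,lemma 1} and Lemma \ref{lem:joingraph|E|} must be re-engineered rather than reused. More fundamentally, the conjecture presupposes that an EFX allocation exists at all, and since existence is open for general $m$, any complete proof would in particular have to produce the first EFX allocation for arbitrary $m$---so by Theorem \ref{small is hard} a proof along these lines would also resolve the EFX existence question itself. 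I would therefore first attempt the conditional statement (``if an EFX allocation exists, then at least $n$ exist''), aiming to reduce the conjecture to pure existence, and only then confront the existence barrier directly.
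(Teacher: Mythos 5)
The statement you were asked to prove is presented in the paper as an open \emph{conjecture}: the paper offers no proof, only the settled cases $m\le n+2$ and empirical evidence (instances with exactly $n$ EFX allocations for $n\le 4$ when $m=n+2$, with the search at $n=5$ stalling at an empirical minimum of $14$). Your submission is, accordingly, a research plan rather than a proof, and you are commendably explicit about this. Your diagnosis of the fundamental barrier is exactly right and matches the paper's own framing: since the conjecture asserts at least $n\ge 1$ EFX allocations for every instance, any proof would in particular resolve EFX existence, and by Theorem~\ref{small is hard} this is as hard as the general existence question even when $m$ exceeds $n$ by any unbounded function. Splitting off the conditional statement (``if one EFX allocation exists, then $n$ do'') is a sensible reduction, and it is plausibly the version the authors themselves have in mind.

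That said, one concrete step in your plan is mischaracterized, and it hides the main technical difficulty. You assert that the final counting lemma---every connected component of the join-graph satisfies $|E_x|\ge|N_x|$---``should carry over essentially verbatim, since it uses only that there are no light components and that a source vertex has an outgoing edge.'' In the paper's proof, the purely combinatorial content is only the easy direction (acyclicity of a deficient component forces a source and forbids self-loops); the heart of the argument is deriving a contradiction at the source, and this is done by explicitly constructing new EFX allocations: the two surplus goods are split into virtual goods $s$ and $l$, an envied agent is located, a path to a non-envied agent is traced in the envy graph, and a reallocation $\mathcal{B}$ is built that manufactures the missing incoming edge $(k,i)$. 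Every one of these moves exploits the rigid $m=n+2$ structure (each agent holds exactly one good outside the two doubled bundles, so the surplus is exactly $\{x_1,x_2,y_2\}$). For general $m$ the bundles absorbing the surplus have unconstrained sizes, the virtual-good trick no longer yields an EFX-preserving decomposition, and it is not even clear what edge condition makes the injection of Lemma~\ref{lem:joingraph|E|} hold. So the gap in your plan is not only the two lemmas you flag as needing re-engineering; it is also the counting step you claim is free. In short: you have correctly identified \emph{that} the problem is open and \emph{why}, but the roadmap understates where the $m=n+2$ proof actually spends its effort, and no part of it currently extends beyond that case.
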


We also showed the existence and polynomial time computability of WEFX and PO allocations for any number of agents under binary additive valuations, and a $\frac{1}{4}$-approximation to WEFX for the case of $n=2$ agents, thereby resolving open problems in the area.

We also note that the negative results in the two variants considered highlights that EFX is surprisingly brittle. For instance, in Proposition \ref{WEFX, restricted additive}, we observe that even in the case of $n=2$ agents with slightly differing entitlements, and under restricted additive valuations, a WWEFX allocation may fail to exist. In Proposition \ref{EFX+,3 agents 4 goods}, by slightly tweaking the definition of EFX to EFX+, an EFX+ allocation may not exist with just $n=3$ agents and $m=4$ goods.

\bibliography{abb,aaai25}

\newpage

\appendix
\onecolumn

\newpage 
\begin{center}
\Large
\textbf{Appendix}
\end{center}
\section{Omitted Proofs from Section 3}
\subsection{Proof of \Cref{prop:bobw}}
    We consider the `cut-and-choose' protocol where an agent $i$ partitions $G$ into two sets $G'$ and $G\setminus G'$ such that $|v_i(G') - v_i(G\setminus G')|$ is minimized and the other agent $2$ choose their preferred bundle. 
    Now, consider the (possibly identical) allocations $(G', G\setminus G')$ and $(B, G\setminus B)$ where the former allocation is induced by agent $1$ partitioning the goods and agent $2$ selecting their favourite bundle and vice versa for the latter allocation.
    Clearly, both allocations are EFX. 
    Consider an outcome $\mathbf{o}$ that selects each of these allocations with an equal probability. 
    We note that $|v_1(G') - v_1(G\setminus G')| \leq v_1(B) - v_1(G\setminus B) \implies  v_1(G\setminus G') - v_1(G')  \leq v_1(B) - v_1(G\setminus B) \iff v_1(G') + v_1(B) \geq v_1(G\setminus G') + v_1(G\setminus B) $. Thus, agent~$1$ does not envy agent~$2$ ex-ante. We can conduct a symmetric analysis for agent~$2$. 
    Hence, $\mathbf{o}$ is ex-ante EF and ex-post EFX .

\subsection{Proof of \Cref{small is hard}}
We first provide the main idea of the proof.
Suppose we know some $f(n)$ for which EFX existence is guaranteed for $m = n + \lfloor f(n) \rfloor$. Given some fixed $n,m$, if $m = n + \lfloor f(n) \rfloor$ then we are done; if $m < n + \lfloor f(n) \rfloor$ then we reduce this to the previous case by creating $n + \lfloor f(n) \rfloor - m$ many dummy goods valued at $0$ by every agent; if $m > n + \lfloor f(n) \rfloor$ then we give a reduction that adds $k$ agents and $k$ goods, and since $f \in \omega(1)$ there is always a natural $k$ such that $m + k \leq n + k + \lfloor f(n+k) \rfloor$ (or equivalently, $m \leq n + \lfloor f(n+k) \rfloor$ ).

We proceed to prove the result. 
Suppose an EFX allocation always exists when $m = n + f(n)$, where $n = \omega(1)$. For any monotonic function $f$, if $f = \omega(1)$, then for every $y$, there is a value of $x$ such that $f(x) \geq y$.

It is easy to see that the case where $m < n + f(n)$ can be easily reduced to the previous case by creating $n + f(n) - m$ many dummy goods valued at $0$ by every agent. Thus, we assume there is an instance $\mathcal{I}$ with $n$ agents (in $X$) and $m$ goods (in $G$) such that $m > n + f(n)$. 
Let $k$ be the smallest value such that $f(k+n) > m - n$. 
Consider another instance $\mathcal{I'}$ constructed from $\mathcal{I}$, but we introduce $k$ additional agents (let this additional set be $X'$) and $k$ additional goods (let this additional set be $G'$). 
In this new instance $\mathcal{I'}$, there are $k+n$ agents and $k+m < k+n+f(k+n)$ goods. Suppose agent $y \in X$. In this instance $\mathcal{I'}$, we construct the valuations function as follows.
\begin{equation*}
    v'_x(g) =
    \begin{cases}
        \infty & \text{if  } \, x \in  X' \cup \{y\} \,\,\land\,\, g\in G'\\
                0 & \text{if  } \, x \notin  X' \cup \{y\} \,\,\land\,\, g\in G'\\
        v_{y}(g) & \text{if  } \, x\in  X' \cup \{y\}\,\,\land\,\, g\in G\\
                v_x(g) & \text{if  } \, x \notin  X' \cup \{y\} \,\,\land\,\, g\in G\\
    \end{cases}
\end{equation*}

We show that an EFX allocation for instance $\mathcal{I'}$ can be used to construct an EFX allocation for $\mathcal{I}$. 
In other words, there is an EFX allocation for $\mathcal{I'}$ only if there is an EFX allocation for $\mathcal{I}$.

First, observe that in any EFX allocation for $\mathcal{I'}$, any agent that is allocated a good in $G'$ have to get only one good. 
As $|G'| < |X'\cup \{y\}|$, there must be an agent $ x \in X'\cup \{y\} $ that is not allocated a good in $G'$. 
If any agent is allocated a good in $G'$ along with an additional good, $x$ will envy that agent even after dropping the least-valued good in that agent's bundle.

Next, observe if there exists an EFX allocation for $\mathcal{I'}$, then there exists an EFX allocation where all agents in $X'$ receives all the goods in $G'$ and nothing else. 
Suppose that agent $y$ has a good in $G'$ and hence there is an agent $x \in X'$ that does not receive a good in $G'$. 
As both agents have identical valuations, swapping the bundle of goods they receive in an EFX allocation still yields an EFX allocation. 
Then, suppose that agent $x \neq y, x \in X$ receives a good in $G'$ and we have an agent $z \in X'$ does not. We know that agent $x$ is only allocated the good in $G'$, and hence her bundle value is $0$; thus, any swap cannot not decrease her valuation of her bundle. Meanwhile, if agent $z$ receives a good in $G'$ via a swap, her valuations of her bundle will increase. As such, swapping the bundle of goods allocated to $x$ and $z$ will still yields an EFX allocation. 

Hence, through a series of swaps, each agent in $X'$ will obtain a good in $G'$ and by our observation they must only be allocated that good. 
In other words, all agents in $X$ receive only goods in $G$ and any good in $G$ is only allocated to agents in $X$. 
Such an allocation can thus be translated to the original instance $\mathcal{I}$, giving us an EFX allocation for $\mathcal{I}$. 

\subsection{Proof of Lemma \ref{m,n+2,lemma 1}}
To find an EFX allocation whereby an particular agent has at least two goods, we first label such an agent as $n$ (without loss of generality) to position the agent last in the initial ordering. 
        We split our analysis into two cases, depending on the existence of a non-envied agent in the partial allocation $\mathcal{A}$ at State 1.
        Suppose there exists a non-envied agent $j \in N$.
        If $n = j$, then we can allocate all goods in $P$ to agent $n$, and if $n \neq j$, we allocate good $l$ to agent $n$, and good $s$ to agent $j$ 
        In both cases, we get an EFX allocation satisfying the criteria in the lemma: in the first case $n$ receives three items, whereas in the second case $j$ receives her most valued item in $A_n \cup A_j$.  
        Now, suppose there does not exist a non-envied agent, i.e., all agents are envied in $\mathcal{A}$. 
        This means that there is an agent $i$ that prefers $l$ to her own bundle $A_i$. 
        We find a path from $i$ to $n$ in the envy cycle with the following algorithm.

        \begin{algorithm}
        \caption{Returns an envy cycle where agent $n$ is unenvied.}
        \begin{algorithmic}[1] 
            \newcommand{\Step}[2]{\STATE #1 \hfill$\triangleright$ \textit{#2}}
            \STATE \textbf{Input:} set of agents $N$, set of goods $G$, valuation profile $\mathbf{v} = (v_1,\dots,v_n)$
            \STATE Initialize $k := i$
            \STATE Initialize $\texttt{Path} \leftarrow [i]$
            \WHILE{$k \neq n$}
                \STATE Let $j$ with the smallest indexed agent such that $j$ envies $k$
                \STATE $\texttt{Path} \leftarrow \texttt{Path} + [j]$
                \STATE $k \leftarrow j$
            \ENDWHILE
            \STATE \textbf{return} \texttt{Path}
        \end{algorithmic}
        \end{algorithm}
        
        We note that for any agent $i \in [n-1]$, if $i$ is envied by agent $j$, then $i < j$. Hence, if $k < n$, there is a $j > k$ such that $j$ envies $k$. As $k$ increases in each iteration of the \textbf{while} loop, the algorithm terminates and provides a path from $i$ to $n$ such that if $(k,j)$ is in the path $j$ envies $k$. 
        
        Let agent $x$ be the penultimate agent in the path. In other words, $(x,n)$ is in the path. As there are no other agents with a smaller index than $n$ that envies agent $x$, $n$ is the only agent that envies $x$. After performing envy cycle elimination according to the path found in algorithm 2, $n$ will receive the good originally in $A_x$. As agents can not decrease their utility from envy cycle elimination, no agent will prefer the good originally in $A_x$ over their own allocated goods. Hence, agent $n$ is non-envied and can receive the good $s$, yielding an EFX allocation where $n$ has two goods. Also note that agent $n$ prefers the good originally in $A_x$ to any of the goods in $P$.
        
        In all the cases, it is easy to see that all agents receives at least one good in the EFX allocation constructed.

\subsection{Proof of Lemma \ref{lem:joingraph|E|}}

We first note that each edge $(i,j) \in E$ in the join-graph corresponds to an EFX allocation whereby agents $i,j$ receives more than one good and all agents in $N \setminus \{i,j\}$ receives exactly one good. 
    We then show that each edge corresponds to a different EFX allocation, and therefore there are at least $|E|$ EFX allocations. 
    Take any two distinct edges $(i,j) , (i',j') \in E$.
    We split our analysis into two cases.
    \begin{description}
        \item[Case 1: $i = j'$ and $j = i'$.] 
        First note that $i \neq j$ as $(i,j)$ and $(i',j')$ are distinct edges.
        Suppose both $(i,j) $ and $ (j,i)$ correspond to the same EFX allocation $\mathcal{A}$.
        Let $A_i = \{x_1,x_2\}$ and $A_j = \{y_1,y_2\}$.
        As $\mathcal{A}$ is EFX, all agents in $N\setminus\{i,j\}$ prefers the goods the receive to any good in $\{x_1,x_2,y_1,y_2\}$. Without loss of generality, let $v_i(x_1) \geq v_i(x_2)$ and $v_j(y_1) \geq v_j(y_2)$. 

        Consider the allocation $A'$ where agents in $N \setminus \{i,j\}$ receives the same good as they did in $\mathcal{A}$, $i$ receives $\{x_1,y_2\}$ and $j$ receives $\{y_1,x_2\}$. 
        As $\mathcal{A}$ corresponds to both $(i,j)$ and $(j,i)$, $v_i(x_1) \geq v_i(y_1), v_i(x_1) \geq v_i(y_2)$ and $v_j(y_1) \geq v_j(x_1), v_j(y_1) \geq v_j(y_2)$. 
        $A'$ is distinct from $\mathcal{A}$ and is an EFX allocation that can correspond to both $(i,j)$ and $(j,i)$. 
        Thus $(i,j)$ and $(j,i)$ corresponds to two different EFX allocations.

        \item[Case 2: $i \neq j'$ or $i \neq j'$.]
        Suppose, without loss of generality, that $i \neq j'$. 
        As $(i,j)$ and $(i',j')$ are distinct edges, either $i \neq i'$ or $j \neq j'$. 
        In the first case $i \notin \{i',j'\}$ and in the second case $j' \notin \{i,j\}$. In the first case, in the EFX allocation corresponding to $(i,j)$, $i$ receives more than one good but in the EFX allocation corresponding to $(i',j')$, $i$ receives exactly one good. 
        In the second case, in the EFX allocation corresponding to $(i',j')$, $j'$ receives more than one good but in the EFX allocation corresponding to $(i,j)$, $j'$ receives exactly one good. 
        Hence, the EFX allocations corresponding to $(i,j)$ and $(i',j')$ are different. 
    \end{description}

\subsection{Proof of Theorem \ref{thm: counting: identical}}

Next, we prove the upper bound.
Construct an instance where agents have valuations over goods defined as follows: $v(g_j) = 1$ if $j\leq n- 1$ and $v(g_j) = 0$ otherwise. 
For all allocations, there exist an agent that receives no good from the set  $\{g_1, \dots, g_{n-1}\}$. 
Such an agent will have a value of $0$ for their own bundle (from our definition of the valuation function). 
Hence, in every EFX allocation, if an agent $i \in [n]$ receives more than one good, none of the goods can be in the set $\{g_1, \dots, g_{n-1}\}$. Otherwise, this agent will be envied (after dropping the least-valued good) by an agent that receives no good from the set  $\{g_1, \dots, g_{n-1}\}$. 
This means that in every EFX allocation, $n-1$ agents will each receive exactly one good from the set $\{g_1, \dots, g_{n-1}\}$, and one agent will receive all remaining goods. This yields $n!$ EFX allocations, corresponding to the number of permutations of the bundles. 

\section{Omitted Proofs from Section 4}
\subsection{Proof of Theorem \ref{thm:WEFX,n =2, PO + B,A}}
We first show the existence of WEFX and PO allocations with a welfare function.
Recall the \emph{leximin++} comparison operator introduced by \cite{plaut2020almost} that showed that a \emph{leximin++} optimal allocation is EFX under identical valuations. We introduce $\prec_{lex++}$ which slightly modifies the \emph{leximin++} partial order to account for the weights.

Formally, for any allocation $\mathcal{A} = (A_1,\dots,A_n)$ and agent $i \in N$, we denote $u_i(A_i) = v_i(A_i)/w_i$.
Also let $\pi_\mathcal{A}: N \rightarrow N$ be a permutation of the agents such that if $i<j$, then $u_{\pi_\mathcal{A}(i)}(\mathcal{A}) \leq u_{\pi_\mathcal{A}(j)}(\mathcal{A})$. Now, we define the partial order $\prec_{lex++}$ such that for any two allocations $\mathcal{A}$ and $\mathcal{B}$, we say that $\mathcal{A} \prec_{lex++} \mathcal{B}$ if either
\begin{enumerate}[(i)]
    \item there exists an agent $k \in N$ such that $u_{\pi_\mathcal{A}(k)}(\mathcal{A}) < u_{\pi_B(k)}(\mathcal{B})$ and for all $i < k$, $u_{\pi_\mathcal{A}(i)}(\mathcal{A}) = u_{\pi_B(i)}(\mathcal{B})$, or
    \item for all $i \in N$, $u_{\pi_A(i)}(\mathcal{A}) = u_{\pi_B(i)}(\mathcal{B})$, and there exists $k$ such that $|A_{\pi_\mathcal{A}(k)}|< |B_{\pi_\mathcal{B}(k)}|$ and for all $i < k$, $|A_{\pi_\mathcal{A}(i)}|= |B_{\pi_\mathcal{B}(i)}|$.
\end{enumerate} 
We say that an allocation $\mathcal{A}$ is $\prec_{lex++}$-optimal if it is a maximal element (over all allocations) according to $\prec_{lex++}$. Now, let $\mathcal{A}$ be a $\prec_{lex++}$-optimal allocation. As $\mathcal{A}$ is clearly PO, we focus on proving $\mathcal{A}$ is WEFX. 

First note that for all agents $j \in N$, if there exists a good $g \in A_j$ such that $v_j(g) = 0$, then as $\mathcal{A}$ is PO, all other agents must also have a value $0$ for good $g$. 
This also means that for every other agent $i \in N \setminus \{j\}$, 
\begin{equation} \label{eqn:wefx_po_1}
    v_i(A_j) \leq v_j(A_j).
\end{equation}
Suppose for a contradiction that $\mathcal{A}$ is not WEFX, i.e., there exists agents $i,j \in N$ such that
\begin{equation} \label{eqn:wefx_po_2}
    \frac{v_i(A_i)}{w_i} < \frac{v_i(A_j \setminus \{g\})}{w_j}
\end{equation}
for some $g \in A_j$.
We then split our analysis into two cases.
\begin{description}
    \item[Case 1: $\exists g \in A_j$ such that $v_j(A_j) = 0$.] 
    Recall that all agents have value $0$ for good $g$. 
    Consider the allocation $\mathcal{A}'$ such that $A'_i = A_i \cup \{g\}$, $A'_j = A_j \setminus \{g\}$, and $A'_k = A_k$ for all $k \in N \setminus\{i,j\}$. 
    Then we have that 
    \begin{equation*}
        \frac{v_j(A'_j)}{w_j} = \frac{v_j(A_j\setminus \{g\})}{w_j} = \frac{v_j(A_j)}{w_j} \geq \frac{v_i(A_j)}{w_j} > \frac{v_i(A_i)}{w_i},
    \end{equation*}
    where the inequalities follows from (\ref{eqn:wefx_po_1}) and (\ref{eqn:wefx_po_2}) respectively.
    Furthermore, $ \frac{v_i(A'_i)}{w_i} =  \frac{v_i(A_i)}{w_i}$ with $|A'_i| > |A_i|$, and thus $\mathcal{A} \prec_{lex++} \mathcal{A}'$, contradicting our choice of $\mathcal{A}$.
    \item[Case 2: $\forall g \in A_j$,  $v_j(A_j) = 1$.]
    From (\ref{eqn:wefx_po_2}), we get that 
    \begin{equation*}
        \frac{v_i(A_i)}{w_i} < \frac{v_i(A_j \setminus \{g\})}{w_j} \leq \frac{|A_j| - 1}{w_j}.
    \end{equation*}
    Furthermore, as $\frac{v_i(A_i)}{w_i} < \frac{v_i(A_j)}{w_j}$, then $v_i(A_j) > 0$ and there exists a good $g^* \in A_j$ such that $v_i(g^*) = 1$. 
    Consider the allocation $\mathcal{A}'$ such that $A'_i = A_i \cup \{g^*\}$, $A'_j = A_j \setminus \{g^*\}$ and $A'_k = A_k$ for all $k \in N \setminus\{i,j\}$. 
    Then we have that 
    \begin{equation*}
    \frac{v_j(A'_j)}{w_j} = \frac{v_j(A_j\setminus \{g^*\})}{w_j} = \frac{|A_j| -1}{w_j} \geq \frac{v_i(A_j)}{w_j} > \frac{v_i(A_i)}{w_i}.    
    \end{equation*}
    Furthermore, $\frac{v_i(A'_i)}{w_i} = \frac{v_i(A_i \cup g^*)}{w_i} = \frac{v_i(A_i) + 1}{w_i} > \frac{v_i(A_i)}{w_i}$, and thus $\mathcal{A} \prec_{lex++} \mathcal{A}'$, contradicting our choice of $\mathcal{A}$.
\end{description}

Next, we show that there exists an algorithm that returns a WEFX and PO allocation in polynomial-time.

We first define what it means for an agent to be \emph{WEFX envied} for an allocation $\mathcal{A}$. 
\begin{definition}
    An agent $i$ is \emph{WEFX envied} by an agent $j$ if and only if $\frac{v_j(A_j)}{w_j} <\frac{v_j(A_i \setminus \{g\})}{w_i} $ for some $g \in A_i$.
\end{definition}

Note that if for all agents $i,j \in N$, $i$ is not WEFX envied by $j$ then the allocation is WEFX.
We now state a necessary and sufficient condition for an allocation $\mathcal{A}$ to be \emph{PO} under binary additive valuations. An allocation $\mathcal{A}$ is \emph{PO} if and only if
\begin{equation} \label{eqn:wefx_po1}
\forall i,j \in N, \forall g \in A_i,  \quad v_i(g) = 0 \implies  v_j(g) = 0
\end{equation}
In other words, if a good $g$ is not valued at $0$ by all agents, then $g$ must be given to an agent with a positive value for $g$. 
If an allocation $\mathcal{A}$ satisfies (\ref{eqn:wefx_po1}), then it also satisfies the following 
\begin{equation} \label{eqn:wefx_po2}
\forall i,j \in N, \quad v_i(A_j) \leq v_j(A_j)
\end{equation}

We first show that it is sufficient to consider the case where all goods are not valued at $0$ by all agents, with the following lemma.
\begin{lemma}\label{lemma:wefx_po_1}
    Let $G_{0} \subseteq G$ be the set of goods such that all agents have $0$ value all goods in $G_0$. 
    If there is a partial allocation $\mathcal{A}$ over $G \setminus G_0$ that is \emph{WEFX} and satisfies (\ref{eqn:wefx_po1}), then there is a complete allocation over $G$ that is \emph{WEFX} and \emph{PO}, and can be found in polynomial-time.
\end{lemma}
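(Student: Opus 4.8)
The plan is to reduce the full problem of finding a complete WEFX and PO allocation to the restricted problem over the goods that at least one agent values positively, so that the matching-based algorithm (developed later) only needs to handle goods in $G \setminus G_0$. Since goods in $G_0$ contribute nothing to anybody's valuation, the intuition is that we can allocate them arbitrarily at the end without disturbing any WEFX inequality or any Pareto comparison. The bulk of the work is therefore to verify that appending the $G_0$ goods to a WEFX-and-(\ref{eqn:wefx_po1}) partial allocation preserves both properties, and that this can be done in polynomial time.

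First I would take the given partial allocation $\mathcal{A}$ over $G \setminus G_0$ and form a complete allocation $\mathcal{A}'$ by assigning every good in $G_0$ to some fixed agent (say agent $1$), leaving all other bundles unchanged; concretely $A'_1 = A_1 \cup G_0$ and $A'_k = A_k$ for $k \neq 1$. To check PO, I would observe that $\mathcal{A}'$ still satisfies condition (\ref{eqn:wefx_po1}): the only goods whose ownership changed are in $G_0$, and those are valued at $0$ by \emph{all} agents, so the implication $v_i(g) = 0 \implies v_j(g) = 0$ holds vacuously for them, while it continues to hold for all goods in $G \setminus G_0$ because those bundles are untouched. Since (\ref{eqn:wefx_po1}) is stated to be equivalent to PO under binary additive valuations, $\mathcal{A}'$ is PO.

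Next I would verify WEFX. For any pair of agents $i,j$ and any $g \in A'_j$, note that $v_i(A'_j) = v_i(A_j)$ because the only goods added anywhere lie in $G_0$ and have value $0$ to everyone; likewise $v_i(A'_i) = v_i(A_i)$. Hence the WEFX inequality
\begin{equation*}
\frac{v_i(A'_i)}{w_i} \geq \frac{v_i(A'_j \setminus \{g\})}{w_j}
\end{equation*}
reduces to the inequality that already held for $\mathcal{A}$ whenever $g \in G \setminus G_0$; and when $g \in G_0 \subseteq A'_1$, dropping $g$ does not change the value $v_i(A'_j \setminus \{g\}) = v_i(A_j)$, so the inequality again follows from the WEFX-ness of $\mathcal{A}$. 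Thus $\mathcal{A}'$ is WEFX. Finally, constructing $\mathcal{A}'$ from $\mathcal{A}$ takes time linear in $|G_0|$ plus the cost of identifying $G_0$, which requires only a single pass over all $(i,g)$ pairs, so the whole reduction is polynomial-time given the partial allocation.

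The main obstacle, though a mild one, is being careful about the edge case where $g \in G_0$ is the dropped good: one must confirm that removing a zero-valued good from agent $1$'s bundle cannot break any inequality, which holds precisely because zero-valued goods never contribute to any agent's valuation. The more substantive content—actually producing a WEFX partial allocation satisfying (\ref{eqn:wefx_po1}) over $G \setminus G_0$ in polynomial time—is deferred to the matching-based construction that follows, so this lemma itself is essentially a clean reduction argument rather than the crux of the theorem.
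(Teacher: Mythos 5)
There is a genuine gap, and it sits exactly at the step you flagged as a ``mild'' edge case. Recall that the paper's WEFX notion is the WEFX$_0$-style one that allows dropping \emph{any} good, including zero-valued ones. When you add a good $g \in G_0$ to agent $1$'s bundle, the WEFX condition for a pair $(i,1)$ instantiated at $g$ becomes $\frac{v_i(A_i)}{w_i} \geq \frac{v_i(A'_1 \setminus \{g\})}{w_1} = \frac{v_i(A_1)}{w_1}$, i.e.\ \emph{full} weighted envy-freeness of $i$ toward agent $1$. But the WEFX-ness of the partial allocation $\mathcal{A}$ only gives you $\frac{v_i(A_i)}{w_i} \geq \frac{v_i(A_1 \setminus \{g'\})}{w_1}$ for $g' \in A_1$, which is strictly weaker whenever every good in $A_1$ has positive value to $i$. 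So your claim that the inequality ``again follows from the WEFX-ness of $\mathcal{A}$'' is false, and handing $G_0$ to an \emph{arbitrary} fixed agent can destroy WEFX. Concretely: take $n=2$, $w_1 = 1$, $w_2 = 3$, binary additive valuations with $v_1(g_1)=v_1(g_2)=v_2(g_1)=v_2(g_2)=1$, and $G_0=\{g_3\}$. The partial allocation $A_1=\{g_1\}$, $A_2=\{g_2\}$ is WEFX (and satisfies (\ref{eqn:wefx_po1})), since $\frac{v_2(A_1\setminus\{g_1\})}{w_1}=0\leq \frac{1}{3}$. But after setting $A'_1=\{g_1,g_3\}$, dropping $g_3$ yields $\frac{v_2(A'_1\setminus\{g_3\})}{w_1} = 1 > \frac{1}{3} = \frac{v_2(A'_2)}{w_2}$, so your completed allocation is not WEFX.

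The paper's proof avoids this by \emph{choosing the recipient carefully}: it gives all of $G_0$ to the agent $i$ minimizing $u_i(\mathcal{A}) = v_i(A_i)/w_i$, and then shows that this agent is not weighted-envied at all, via the chain
\begin{equation*}
\frac{v_j(B_i)}{w_i} \leq \frac{v_i(B_i)}{w_i} = u_i(\mathcal{A}) \leq u_j(\mathcal{A}) = \frac{v_j(B_j)}{w_j},
\end{equation*}
where the first inequality uses condition (\ref{eqn:wefx_po2}) (itself a consequence of (\ref{eqn:wefx_po1}), which is the other ingredient your argument never invokes for the WEFX part) and the second uses the minimality of $u_i$. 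Since the recipient is fully weighted-unenvied, dropping a zero-valued good from her bundle cannot create a violation, and WEFX for all other pairs carries over exactly as in your argument. Your treatment of PO and of the polynomial-time claim is fine; the missing idea is that zero-valued goods cannot be placed arbitrarily under this definition of WEFX, but must go to a weighted-unenvied agent, whose existence is guaranteed precisely by the interplay of (\ref{eqn:wefx_po2}) with the minimum-weighted-utility choice.
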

\begin{proof}
    Let $u_i(\mathcal{A}) = v_i(A_i)/w_i$. Next, identify the agent $i$ with the minimum $u_i(\mathcal{A})$. 
    Give all goods in $G_0$ to $i$ and let this complete allocation be $\mathcal{B}$. 
    We note that $\mathcal{B}$ can be found in polynomial time. 
    As $\mathcal{B}$ also fulfils (\ref{eqn:wefx_po1}), $\mathcal{B}$ is PO. 
    Furthermore, for all $j \in N$, 
    \begin{equation*}
        v_j(B_i)/w_i \leq v_i(B_i)/w_i = u_i(\mathcal{A}) \leq u_j(\mathcal{A}) = v_j(A_j)/w_j = v_j(B_j)/w_j
    \end{equation*}
    where the leftmost inequality follows from (\ref{eqn:wefx_po2})) and the middle inequality follows by the choice of agent $i$.
    Thus, $i$ is not WEFX envied by any agent and $\mathcal{B}$ is WEFX.
\end{proof}
Lemma \ref{lemma:wefx_po_1} allows us to ignore zero-valued goods for all agents. 
Thus, for the rest of the proof, we assume that each good is valued by some agent. 
In other words, for an allocation $\mathcal{A}$ to satisfy (\ref{eqn:wefx_po2}), then

\begin{equation} \label{eqn:wefx_po3}
\forall i \in N, v_i(A_i) = |A_i|
\end{equation}

Construct the following bipartite graph $G_{match} = (N, G , E)$ where $(i,j) \in E$ if and only if $v_i(g_j) = 1$. Let $\mathcal{M}$ be the maximal matching of $G_{match}$. Let $N_0$ be the set of agent $i$ such that for all $j \in [m], (i,j) \notin \mathcal{M}$. In other words, if we view $\mathcal{M}$ as a partial allocation, $N_0$ is the set of the agents that receives no goods. We now show that we can `ignore' the agents in $N_0$ if the following condition holds. 

\begin{lemma}\label{lemma:wefx_po_2}
    Suppose each good in $G$ is valued by some agent. Let $\mathcal{I}'$ be the instance over the agents in $N\setminus N_0$ and all the goods $G$. If there is a \emph{WEFX} allocation $\mathcal{A}$ that satisfies (\ref{eqn:wefx_po1}) and for all $i \in N$, $v_i(\mathcal{A}_i) \geq 1$, then $\mathcal{A}$  is \emph{WEFX} and satisfies (\ref{eqn:wefx_po1}) for the original instance with set of agents $N$.
\end{lemma}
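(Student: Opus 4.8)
The plan is to extend $\mathcal{A}$ to an allocation on the full agent set $N$ by giving every agent in $N_0$ the empty bundle, and then verify that this extension satisfies both (\ref{eqn:wefx_po1}) and WEFX. I would first dispose of (\ref{eqn:wefx_po1}) and the efficiency side using the maximality of $\mathcal{M}$. I claim every good is valued by some agent in $N \setminus N_0$: if a good $g$ were valued only by agents of $N_0$ (and by hypothesis it is valued by at least one agent), then any matching partner of $g$ would have to value $g$ and hence lie in $N_0$, but agents of $N_0$ are unmatched, so $g$ is itself unmatched, and the edge joining an $N_0$-agent valuing $g$ to $g$ would be an augmenting edge, contradicting maximality. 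Combined with (\ref{eqn:wefx_po1}) for $\mathcal{I}'$ (whose contrapositive says that any good in $A_i$ valued by some agent of $N\setminus N_0$ is also valued by $i$), this yields $v_i(g)=1$ for every $i \in N\setminus N_0$ and $g\in A_i$, i.e.\ (\ref{eqn:wefx_po3}) holds. Since no good is then zero-valued by its owner, the premise of (\ref{eqn:wefx_po1}) is never met on $N$, so (\ref{eqn:wefx_po1}) holds vacuously for the original instance.

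It remains to check WEFX on $N$. Pairs $i,j \in N\setminus N_0$ are covered by the assumption that $\mathcal{A}$ is WEFX on $\mathcal{I}'$, and any pair whose potentially-envied agent $j$ lies in $N_0$ is trivial since $A_j = \emptyset$. The one nontrivial case is an agent $i \in N_0$ (so $A_i=\emptyset$ and $v_i(A_i)/w_i = 0$) against an agent $j \in N\setminus N_0$: WEFX here requires $v_i(A_j\setminus\{g\})=0$ for every $g\in A_j$, which under binary additivity amounts to showing that $i$ values at most one good of $A_j$.

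To establish this I would invoke the matching structure of $\mathcal{M}$ in $G_{match}$. Let $Z$ be the set of vertices reachable from the unmatched agents $N_0$ along $\mathcal{M}$-alternating paths. A standard argument shows that every neighbour of a $Z$-agent lies in $Z \cap G$ (otherwise the alternating path extends), so in particular every good valued by an $N_0$-agent lies in $Z\cap G$; moreover every good in $Z\cap G$ is matched (else there is an augmenting path), to a distinct agent of $Z\cap N\setminus N_0$, giving $|Z\cap G| = |Z\cap N\setminus N_0|$. Now each agent $a \in Z\cap N\setminus N_0$ values only goods of $Z\cap G$, and by (\ref{eqn:wefx_po3}) values every good of $A_a$, so $A_a \subseteq Z\cap G$; since these bundles are disjoint, nonempty (as $v_a(A_a)\ge 1$), and contained in a set of size exactly $|Z\cap N\setminus N_0|$, a counting argument forces each $A_a$ to be a singleton with $\bigcup_a A_a = Z\cap G$. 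Thus every good valued by an $N_0$-agent is the sole good of its owner's bundle, so no bundle $A_j$ can contain two goods valued by $i$, settling the nontrivial WEFX case. I expect the matching-reachability analysis together with this counting step to be the main obstacle; the remaining verifications are routine.
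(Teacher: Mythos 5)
Your proposal is correct, but it takes a genuinely different route from the paper's. The paper handles the one nontrivial case (an envier $j \in N_0$ against an owner $i \in N \setminus N_0$ with $v_j(A_i) > 0$ and $|A_i| \geq 2$) by a single exchange argument: take $g \in A_i$ with $v_j(g) = 1$ and move it to $j$; by (\ref{eqn:wefx_po3}) agent $i$ still positively values $A_i \setminus \{g\}$, so now $|N \setminus N_0| + 1$ agents each hold at least one good they value, and selecting one valued good per agent from these disjoint bundles yields a matching of size $|N \setminus N_0| + 1$ in $G_{match}$, contradicting the maximality of $\mathcal{M}$. You instead run a K\H{o}nig-style reachability analysis: letting $Z$ be the set of vertices reachable from $N_0$ by $\mathcal{M}$-alternating paths, you obtain via $\mathcal{M}$ a bijection between $Z \cap G$ and the $Z$-agents outside $N_0$, and your counting step then forces each such agent's bundle to be a singleton, with these singletons covering $Z \cap G$; hence any good valued by an $N_0$-agent is the sole good of its owner's bundle. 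Both arguments hinge on $\mathcal{M}$ having maximum cardinality. The paper's is shorter; yours buys a stronger structural conclusion and, usefully, makes explicit a step the paper glosses over --- that by maximality of $\mathcal{M}$ no good is valued only by agents of $N_0$, which is precisely what legitimizes deriving (\ref{eqn:wefx_po3}) from (\ref{eqn:wefx_po1}) inside the reduced instance $\mathcal{I}'$.

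One caveat: your intermediate reformulation --- that the case of $i \in N_0$ against $j \in N \setminus N_0$ ``amounts to showing that $i$ values at most one good of $A_j$'' --- is not correct under the paper's definition, which is the WEFX$_0$-style variant allowing even zero-valued goods to be dropped. If $A_j = \{g^*, g'\}$ with $v_i(g^*) = 1$ and $v_i(g') = 0$, then $v_i(A_j \setminus \{g'\}) = 1 > 0 = v_i(A_i)$, so WEFX fails even though $i$ values only one good of $A_j$. What is actually required is that $v_i(A_j) = 0$ or $|A_j| = 1$, and the structural fact you prove delivers exactly this (any good valued by an $N_0$-agent sits in a singleton bundle), so the slip is harmless --- but the final justification should invoke the singleton-ownership conclusion directly rather than the ``at most one valued good'' criterion.
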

\begin{proof}
    Clearly, $\mathcal{A}$ still satisfies (\ref{eqn:wefx_po1}) even if we include agents in $N_0$. 
    Thus, it suffices to show that $\mathcal{A}$ is WEFX.
    As all agents in $N_0$ receives no goods, they would not be WEFX envied. 
    Furthermore, if $i,j \in N \setminus N_0$, as $\mathcal{A}$ is WEFX and $i$ is not WEFX envied by $j$. For our final case where $i \in N \setminus N_0, j \in N_0$, if $v_j(A_i) > 0$, then $|A_i| = 1$ and $i$ is not WEFX envied by $j$.

    Suppose there was an agent $i \in N \setminus N_0, j \in N_0$ such that $v_j(A_i) > 0$ and $|A_i|$. 
    Then there is a good $g$ in $A_i$ such that $v_j(g) = 1$. 
    By (\ref{eqn:wefx_po3}), $v_i(A_i \setminus \{g\}) > 0$. Thus, by allocating $g$ to agent $j$ instead of $i$, as for all $i \in N$, $v_i(\mathcal{A}_i) \geq 1$, we have an allocation such that $|N \setminus N_0| + 1$ agents receive at least $1$ good they positively value. This implies that in $G_{match}$, and there is a matching of size $|N \setminus N_0| + 1$, thereby contradicting the maximality of $\mathcal{M}$.
\end{proof}

Lemma \ref{lemma:wefx_po_2} allows us to essentially `ignore' the agents in $N_0$ and allocate all the goods solely to agents in $N \setminus N_0$. 
Together with lemma \ref{lemma:wefx_po_1}, we can now restrict ourselves to the case where $G$ contains no zero-value goods and there is an allocation $\mathcal{A}'$ such that for all $i \in N$, $v_i(A'_i) \geq 1$. We then conclude our proof with the following lemma.

\begin{lemma}
    Suppose $G$ contains no zero-value goods and there is an allocation $\mathcal{A}'$ such that for all $i \in N$, $v_i(A'_i) \geq 1$. Then, there is a \emph{WEFX} allocation $\mathcal{A}$ that satisfies  (\ref{eqn:wefx_po1}) and for all $i \in N$, $v_i(\mathcal{A}_i) \geq 1$. Moreover, $\mathcal{A}$ can be computed in polynomial time.
\end{lemma}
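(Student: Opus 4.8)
The plan is to recast the problem as one of balancing the weighted bundle sizes and then to extract a WEFX allocation as a leximin-optimal point of a bipartite assignment structure. Since we are in the reduced regime where every good is valued by some agent and a feasible allocation giving each agent value at least $1$ exists, I first record the structural simplifications. Call an allocation \emph{feasible} if it is complete, assigns each good only to an agent who values it (so that (\ref{eqn:wefx_po1}), and hence PO, holds automatically), and gives every agent at least one good. A feasible allocation exists: take a matching saturating the agents (one valued good each, guaranteed by the assumed allocation $\mathcal{A}'$ via distinct representatives, since its bundles are disjoint) and hand every remaining good to an arbitrary agent who values it. Under feasibility we have $v_i(A_i)=|A_i|$ by (\ref{eqn:wefx_po3}) and $v_i(A_j)\le |A_j|$ by (\ref{eqn:wefx_po2}), so writing $c_i := v_i(A_i)/w_i = |A_i|/w_i$, the WEFX condition collapses to the following: for all $i,j$, we need $c_i \ge v_i(A_j)/w_j$ when $A_j$ contains a good that $i$ does not value, and $c_i \ge (|A_j|-1)/w_j$ when $i$ values every good in $A_j$.

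The core claim is that any feasible allocation that is leximin-optimal with respect to the vector $(c_1,\dots,c_n)$ is WEFX, which I would prove by a single-good exchange argument. Suppose $i$ WEFX-envies $j$. First, $|A_j|\ge 2$, since an agent holding a single good is never WEFX-envied (the relevant right-hand side is $v_i(A_j\setminus\{g\})=0$); moreover the envy forces $v_i(A_j)\ge 1$, so some $g\in A_j$ has $v_i(g)=1$. Moving $g$ from $j$ to $i$ keeps the allocation feasible: it preserves (\ref{eqn:wefx_po1}), leaves $j$ with at least one good, and $i$ values $g$. It remains to check that this move strictly improves the leximin value, contradicting optimality. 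In both violation cases the envy inequality yields $c_i < c_j - 1/w_j$: in the ``unvalued good'' case because $c_j=|A_j|/w_j\ge (v_i(A_j)+1)/w_j > c_i + 1/w_j$, and in the ``values-all'' case directly from $c_i<(|A_j|-1)/w_j = c_j - 1/w_j$. Hence after the transfer both altered loads $c_i'=c_i+1/w_i$ and $c_j'=c_j-1/w_j$ strictly exceed the old value $c_i$, while all other loads are unchanged. Comparing the sorted (ascending) load vectors, the new vector first differs from the old precisely at the copy of $c_i$ contributed by agent $i$, which is now replaced by a strictly larger entry, so the leximin value strictly increases. This contradiction establishes WEFX.

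For polynomial-time computation I would build the natural bipartite assignment network (source to each good with capacity $1$, each good to every agent who values it, each agent to the sink) and compute a leximin-optimal integral load vector over all complete feasible assignments. This is the classical lexicographically optimal / fair flow problem: repeatedly determine the maximum achievable value of $\min_i c_i$ via a (parametric) max-flow, freeze the agents pinned by the associated minimum cut, and recurse on the remainder; this terminates in $O(n)$ max-flow computations and returns a leximin-optimal feasible allocation, which by the core claim is WEFX and satisfies (\ref{eqn:wefx_po1}). Feasibility of the requisite flows, and in particular that the optimum has every load at least $1/w_i$ (so that each agent receives a good), follows from the assumed existence of $\mathcal{A}'$.

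The main obstacle I anticipate is the exchange step: one must handle the ``drop any good'' semantics of WEFX precisely, separating the case where $A_j$ contains a good the envier does not value (where dropping that good gives no relief) from the case where the envier values all of $A_j$, and in both confirming the strict gap $c_j - c_i > 1/w_j$ that turns the single transfer into a genuine leximin improvement rather than a lateral swap. A secondary technical point is ensuring that the assignment/matching feasibility constraints never obstruct WEFX; this is exactly where the observation that single-good bundles are immune to WEFX-envy, together with the assumed all-agents-served allocation $\mathcal{A}'$, is used.
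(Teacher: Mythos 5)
Your proposal is correct and takes essentially the same approach as the paper: the paper's Algorithms 3 and 4 compute precisely a weighted-leximin-optimal load profile over bipartite matchings (its post-conditions 1 and 2 encode exactly the leximin-optimality you invoke, realized by iterating over the threshold set $V^+=\{j/w_i \mid i\in N, j\in[m+1]\}$), and its final WEFX verification is the same single-good transfer argument you give, deriving $c_i < c_j - 1/w_j$ from a WEFX violation and contradicting optimality. Your invocation of classical lexicographically optimal (parametric) flows is just a more standard packaging of the paper's hand-rolled matching iteration, so there is no genuinely different route and no gap.
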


Let $V = \{\frac{j}{w_i} \mid i \in N, j \in [m+1]\}$. Then, pick an $\varepsilon$ such that for all $i \in N$, $\lceil \varepsilon w_i \rceil = 1$ and $\varepsilon$ is smaller than all elements in $V$. Let $V^+ = V \cup \{g\}$. For $x \in V^+$, let $succ(x)$ be the smallest value $y \in V^+$ such that $succ(x) < y$. 

For map $M$, let $Add(M, (k,v))$ be the operation that adds the key-value pair $(k,v)$ to $M$, let $contain(M,k)$ return True if the key $k$ is in $M$ and false otherwise, and $M_k$ be the value associated with the key $k$ in $M$.

Then, our algorithm is as follows (Algorithm \ref{alg:wefx1}).
\begin{algorithm}
\caption{Returns a WEFX and PO allocation.}
\label{alg:wefx1}
\begin{algorithmic}[1] 
\STATE \textbf{Input:} Instance $I$, $\varepsilon$ and $V^+$
\STATE Initialize an empty hashmap, $M \leftarrow \emptyset $
\STATE Initialize the value of $k$: $k \leftarrow \varepsilon$
\STATE \textbf{Loop Invariant:}
\begin{enumerate}
    \item There is an allocation $\mathcal{A}$ such that for all $i \in N$, $v_i(A) \geq M_i \cdot w_i$ if $contain(M,i) = $ True, and $v_i(A_i) \geq k \cdot w_i$ if $contain(M,i) = $ False 
    \item Suppose $(i, v) \in M$, let $N_{s}$ be the set of agents $j \neq i$ such that $contain(M,j) = $ True and $M_j \leq v$. 
    Then, there is no allocation $\mathcal{A}'$ such that for all $j \in N_{s}$, $v_j(A'_j) \geq M_j \cdot w_j$ and for all agents in $N \setminus N_{s} > v \cdot w_j$.
 \end{enumerate}

\WHILE{there exists an agent $i \in N$ such that $contain(M,i) =$ False}
\STATE  $M^+ \leftarrow \texttt{FindMinimal}(\mathcal{I}, M, V^+, k)$ \quad (refer to the subroutine Algorithm \ref{alg:wefx2})
\FOR{$i \in M^+$}
    \STATE $Add(M, (i,k))$
\ENDFOR
\STATE $k \leftarrow succ(k)$
\ENDWHILE
\RETURN $\mathcal{A}$ (derived from $M$)
\end{algorithmic}
\end{algorithm}

\begin{algorithm}
\caption{\texttt{FindMinimal} Subroutine}
\label{alg:wefx2}
\begin{algorithmic}[1] 
\STATE \textbf{Input:} Instance $\mathcal{I}$, $M$, $V^+$ and $k$
\STATE \textbf{Output:} A set of agents $S$ such that 
\begin{enumerate}
    \item There is an allocation $\mathcal{A}$ such that for all agents $j \in N$, $v_j(A_j) \geq M_j \cdot w_j$ if $contain(M,j) = true$,  $v_j(A_j) \geq k \cdot w_j$ if $j \in S$ and   $v_j(A_j) > k \cdot w_j$ otherwise
    \item  For all agents $i \in S$, there is no allocation $\mathcal{A}$ such that $v_i(A_i) > k \cdot w_i$ and for all agents $j \in N \setminus \{i\}$, $v_j(A_j) \geq M_j \cdot w_j$ if $contain(M,j) = True$,  $v_j(A_j) \geq k \cdot w_j$ if $j \in S$ and   $v_j(A_j) > k \cdot w_j$ otherwise
\end{enumerate}
\STATE \textbf{Precondition:}
\begin{enumerate}
    \item There is an allocation $\mathcal{A}$ such that for all $i \in N$, if $contain(M,i) = true$, then $v_i(A) \geq M_i \cdot w_i$ and if $contain(M,i) = false$, then $v_i(A) \geq k \cdot w_i$
    \item Suppose $(i, v) \in M$, let $N_{s}$ be the set of agents $j \neq i$ such that $Contain(M,j) = true$ and $M_j \leq v$. Then there is no allocation $\mathcal{A}$ such that for all $j \in N_{s}$, $v_j(A) \geq M_j \cdot w_j$ and for all agents in $N \setminus N_{s} > v \cdot w_j$.\end{enumerate}
\STATE Consider the bipartite graph and $G_{s^*} = (N_{aug}, G_{aug}, E)$ for some constant $s^*$. 
\FOR{$i \in N$}
    \STATE Introduce to $N_{aug}$ the nodes $\{i_1, \dots , i_{\tau_i}\}$ where $\tau_i = M_i \cdot w_i$ if $contain(M,i) = $ True and  $\tau_i = succ(k) \cdot w_i$ if $contain(M,i) = $ False.
\ENDFOR
\FOR{$g \in G$}
\STATE Introduce a node in $G_{aug}$, and nodes $S = \{s_1, \dots, s_{size}\}$.
\ENDFOR
\FOR{$i_j \in N_{aug}$ and $g \in G$}
    \IF{$v_i(g) \geq 1$}
        \STATE Add $(i_j, g)$ to $E$ 
    \ENDIF
\ENDFOR
\FOR{$s \in S$}
    \IF{$contain(M,i) = $ False and $j  = \tau_i$ and $\lceil k \cdot w_i \rceil <\lceil succ(k) \cdot w_i \rceil$}
    \STATE Add $(i_j, g)$ to $E$
    \ENDIF
\ENDFOR
\STATE Find the smallest value of $s^*$ such that there is a matching $M$ of size $|N_{aug}|$.  Then, $S$ contains the agents $i$ such that $\exists s \in S, (i_{\tau_i}, s) \in M$
\RETURN $S$
\end{algorithmic}
\end{algorithm}
We first show that Algorithm \ref{alg:wefx1} terminates in polynomial time if Algorithm  \ref{alg:wefx2} terminates in polynomial time; and that the loop invariant is initially True, and will remain True if the output of Algorithm~\ref{alg:wefx2} fulfils criteria set out in Algorithm~\ref{alg:wefx2}. 
Intuitively, Algorithm~\ref{alg:wefx1} will identify how many goods to allocate to each agent, and the matching tells us which allocation achieves this.

We then show that Algorithm \ref{alg:wefx2} terminates in polynomial time and the criteria set out in Algorithm \ref{alg:wefx2} is satisfied. 
Finally, we show that how to derive a WEFX allocation $\mathcal{A}$ that satisfies (\ref{eqn:wefx_po1}) and that for all $i \in N$, $v_i(\mathcal{A}_i) \geq 1$.

First, we note that $|V^+| = \mathcal{O}(nm)$. 
Furthermore, there is a $k \in V^+$ such that $M$ contains all the agents. 
Pick an agent $i$ and let $k = \frac{m+1}{w_i}$. Observe that $k \in V^+$
Then, if $contain(M,i) = $ False, by our first loop invariant, there is an allocation where $i$ receives $m+1$ goods, which gives us a contradiction. 
Thus, the \textbf{while} loop must run at most $|V^+|$ times. Next, we initially have that $k = \varepsilon$. Since $\varepsilon \cdot w_i \leq 1$ and there is an allocation $\mathcal{A}'$ such that for all $i \in N, v_i(A'_i) \geq 1$ (by premise of the lemma), we have that \emph{loop invariant 1} is true. 
As $M$ is initially empty, \emph{loop invariant 2} is trivially true. 
Condition 1 of the output of Algorithm \ref{alg:wefx2} guarantees that \emph{loop invariant 1} remains true and condition 2 of the output of Algorithm \ref{alg:wefx2} guarantees that  \emph{loop invariant 2} remain true.

Next, we note that by the construction of $V^+$, for all $i \in N$, $\lceil k \cdot w_i \rceil$ and $\lceil succ(c) \cdot w_i \rceil$ differ by at most $1$ good. 
Thus, a matching of size $|N_{aug}|$ in $G_{size}$ corresponds to a partial allocation where for all $i \in N$, $v_i(A_i) = M_i \cdot w_i$ if $contain(M,i) = true$,  
$v_i(A_i) = k \cdot w_i$ if $\exists s \in S (i_{\tau_i}) \in S$ and $v_i(A_i) = succ(k) \cdot w_i$ if $\exists s \in S (i_{\tau_i}) \in S$ otherwise. 
Hence, as $S$ corresponds to matching of size $|N_{aug}|$ whereby, \emph{condition 1 of the output of algorithm \ref{alg:wefx2}} is satisfied. Furthermore, as $s^*$ is the minimum possible value, for all agents $i \in S$, there is no allocation $\mathcal{A}$ such that $v_i(A_i) > k \cdot w_i$ and for all agents $j \in N \setminus \{i\}$, $v_j(A_j) \geq M_j \cdot w_j$ if $contain(M,j) = $ True,  $v_j(A_j) \geq k \cdot w_j$ if $j \in S$ and   $v_j(A_j) > k \cdot w_j$ otherwise. 
Otherwise, a matching $M$ derived from $A_j$ will have size $|N_{aug}|$ with a lower value of $s^*$. 
\emph{Condition 2 of the output of algorithm \ref{alg:wefx2}}. 
By $\emph{precondition 1}$, there is an allocation $\mathcal{A}$ such that $v_i(A_i) = M_i \cdot w_i$ if $contain(M,i) = $ True,  $v_i(A_i) = k \cdot w_i$ otherwise and thus if $s^* \geq |N|$, there is a matching of size $|N_{aug}|$. Thus, the smallest value of $s^*$ is $\mathcal{O}(n)$ and as we can find maximal matching in polynomial time, and Algorithm \ref{alg:wefx2} runs in polynomial time.  

Lastly, after Algorithm \ref{alg:wefx1} terminates, these are the post-conditions.
\begin{enumerate}
    \item There is an allocation $\mathcal{A}$ such that for all $i \in N$, $v_i(A) \geq M_i \cdot w_i$
    \item For $(i, v) \in M$, let $N_{s}$ be the set of agents $j \neq i$ such that $M_j \leq v$. Then there is no allocation $\mathcal{A}$ such that for all $j \in N_{s}$, $v_j(A_j) \geq M_j \cdot w_j$ and for all agents in $N \setminus N_{s} > v \cdot w_j$.
 \end{enumerate}

 Post-condition 2 has the following implication:
    For $(i, v) \in M$, then there is no allocation $\mathcal{A}$ such that for all $j \in N \setminus \{i\}$, $v_j(A_j) \geq M_j \cdot w_j$ and $v_i(A_i) > M_i \cdot w_i$.

We first note that for all agents $M_i \cdot w_i$ is a whole number. Otherwise, if there is an allocation such that for all $j \in N$ $v_j(A) \geq M_j \cdot w_j$, then there is an allocation such that for all $j \in N\setminus \{i\}$ $v_j(A) \geq M_j \cdot w_j$ and $v_j(A) > M_j \cdot w_j$, thus contradicting implication of post-condition 2.

Now, we can construct $\mathcal{A}$ from $M$ again through a matching. $G_A = (N_{aug}, G, E)$. For each $i \in N$, introduce to $N_{aug}$ the nodes $\{i_1, \dots , i_{M_i \cdot w_i}\}$. For $i_j \in N_{aug}$, for $g \in G$ , then $(i_j, g) \in E$. Then, there is a maximal matching in $G_{A}$ (by post-condition 1) that corresponds to an allocation $\mathcal{A}$ such that $v_i(A_i) = M_i \cdot w_i$.  

We note that $\mathcal{A}$ is a complete allocation. Suppose, there exists $g \in G$ that was not allocated in $\mathcal{A}$. As there are no zero-valued goods, there is an agent $j$ such that $v_j(g) = 1$. By giving $g$ to agent $j$, we can derive an allocation that contradicts the implication of post-condition 2.

As in $G_A$ there is only a edge between $i_j$ and $g$ if $v_i(g) = 1$, then allocation $\mathcal{A}$ which comes from a matching fulfils condition \ref{eqn:wefx_po_1}. Furthermore, as for all $i\in N$, $M_i \geq \varepsilon > 0$, $v_i(A_i) = M_i \cdot w_i \geq 1$. 
Suppose $\mathcal{A}$ is not WEFX. Then there exists a agent $i,j$ and a $g \in A_j$ such that $\frac{v_i(A_i)}{w_i} < \frac{v_i(A_j \setminus \{g\}}{w_j} \leq \frac{|A_j| - 1}{w_j} < \frac{|A_j|}{w_j}$. Thus, $M_i < M_j$. Furthermore, by giving $g$ to agent $j$, we have an allocation whereby $\frac{v_i(A_i \cup \{g\})}{w_i} >  \frac{v_i(A_i)}{w_i} \geq M_i$ and $\frac{v_j(A_j \setminus \{g\})}{w_i} >  \frac{|A_j| - 1}{w_j} \geq M_j$ (where the second last inequality is due to the fact that for all $g \in A_j$, $v_j(A_j)  = 1$). Thus, consider $N_{s}$ for the agent $i$. $j \notin N_{s}$. Then by giving $g$ to agent $i$, there is an allocation $\mathcal{A}$ such that for all $j \in N_{s}$, $v_j(A_j) \geq M_j \cdot w_j$ and for all agents in $N \setminus N_{s} > v \cdot w_j$. This contradicts the implication of post-condition 2.

\subsection{Proof of Proposition \ref{1 allocation: WEFX}}
    For the case when $m \leq 2$, we can follow the same argument as in Theorem \ref{thm: counting: n<=m}.
    Thus, we focus on the case when $m \geq 3$.
    The lower bound is guaranteed by Theorem \ref{thm:WEFX,n =2, PO + B,A}. 
    For the upper bound, let the weights be $w_1 = 1 - \epsilon$ and $w_2 = \epsilon$ and valuation functions be defined as follows: $v_1(g_j) = 1$ for all $j \in [m]$, and $v_2(g_j) = 1$ if $j = 1$, $v_2(g_j) = 0$ otherwise. 
    Consider some allocation $\mathcal{A} = (A_1,A_2)$.
    If $|A_2| \geq 2$, then $v_1(A_2 \setminus \{g\}) \geq 1$ and agent~$1$ envies agent~$2$ even after dropping the least-valued good (in agent~$1$'s view) from agent~$2$'s bundle, and hence $\mathcal{A}$ does not satisfy WEFX.
    If $|A_2| \leq 1$. If $g_1 \in A_1$, as $m \geq 3$, $v_2(A_2) = 0 $ and $v_2(A_1 \setminus \{g\}) = 1$ and agent~$2$ envies agent~$1$ even after dropping the least-valued good (in agent~$2$'s view) from agent~$1$'s bundle, and hence $\mathcal{A}$ does not satisfy WEFX.
    Hence, the only WEFX allocation is $(G \setminus \{g_1\}, \{g_1\})$.
    \subsection{Proof of Proposition \ref{thm: WEFX, b,s}}
    Consider an instance with $n=2$ agents and $m=7$ identical goods and weights $w_1 = 2$ and $w_2 = 5$.  
    Let agents' valuations over goods be defined as follows: $v_1(S) = |S|$ and $v_2(S) = \min\{2,|S|\}$. As $v_1$ is additive and $v_2$ is budget-additive, both valuation functions are submodular. 
    
    In order for an allocation $A=(A_1, A_2)$ to satisfy WWEFX, for all $g \in A_2$, either $\frac{v_1(A_1)}{v_1(A_2 \setminus \{g\})} \geq 0.4$ or $\frac{v_1(A_1 \cup \{g\})}{v_1(A_2)} \geq 0.4$; and for all $g\in A_1$, either $\frac{v_2(A_2)}{v_2(A_1\setminus \{g \})} \geq 2.5$ or $\frac{v_2(A_2 \cup \{g \})}{v_2(A_1)} \geq 2.5$. 
    
    Consider any allocation $A=(A_1, A_2)$. We split our analysis into two cases.
    
    If $|A_1| \leq 1$, agent~$1$ envies agent~$2$ even after dropping the least-valued good (in agent~$1$'s view) from agent~$2$'s bundle, and the allocation is not WWEFX. To see this, $|A_1| = v_1(A_1) \leq 1$, $v_1(A_1 \cup \{g\}) \leq 2$, $v_1(A_2) \geq 7$  and $v_1(A_2 \setminus \{g\}) \geq 6$. Hence, $\frac{v_1(A_1)}{v_1(A_2 \setminus \{g\})} \leq 1/6 < 0.4$ and $\frac{v_1(A_1 \cup \{g\})}{v_1(A_2)} \leq 2/7 < 0.4$, violating the conditions we established earlier in order for an allocation to satisfy WWEFX.

    If $|A_1| \geq 2$, agent~$2$ envies agent~$1$ even after dropping the least-valued good (in agent~$2$'s view) from agent~$1$'s bundle, and the allocation is not WWEFX. To see this, $|v_2(A_2 \cup \{g\}) \geq v_2(A_2) \geq 2$ and $v_2(A_1) \geq v_2(A_1 \setminus \{g\}) \geq 1$. Hence, $\frac{v_2(A_2)}{v_2(A_1\setminus \{g \})} \leq 2 < 2.5$ and $\frac{v_2(A_2 \cup \{g \})}{v_2(A_1)} \leq 2 < 2.5$, again violating the conditions we established earlier in order for an allocation to satisfy WWEFX.
    
    \subsection{Proof of Proposition \ref{WEFX, restricted additive}}
    Consider an instance with $n=2$ agents and $m=4$ goods and weights $w_1 = 0.5 + \epsilon$ and $w_2 = 0.5 - \epsilon $. 
    Note that for $0<\epsilon < 1/14 $, $\frac{0.5 + \epsilon}{0.5 - \epsilon} > 1$ and $\frac{0.5 - \epsilon}{0.5 + \epsilon} > 3/4$. 
    Let agents' valuations over goods be defined as follows: $v_1(g_1) = v_1(g_2) = 2, v_1(g_3)=v_1(g_4) = 0$ and $v_2(g_1)=v_2(g_2) = 2, v_2(g_3) = 1, v_2(g_4) = 0$.
    
    Consider some allocation $\mathcal{A} = (A_1,A_2)$.
    If agent~$1$ receives both $g_1$ and $g_2$, then agent~$2$ will envy agent~$1$ even after dropping the least-valued good (in agent~$2$'s view) from agent~$1$'s bundle, and the allocation is not WWEFX.
    To see this, $v_2(A_1) \geq 4$, $v_2(A_1 \setminus \{g\}) \geq 2$, $v_2(A_2) \leq 1$, and $v_2(A_2 \cup \{g\}) \leq 3$.  Hence, $\frac{v_2(A_2)}{v_2(A_1\setminus \{g \})} \leq 1/2 $ and $\frac{v_2(A_2 \cup \{g \})}{v_2(A_1)} \leq 3/4$ and $\mathcal{A}$ satisfy WWEFX.
    
    Thus, we assume agent~$2$ gets at least one of $g_1$ and $g_2$.
    Without loss of generality, suppose agent~$2$ is allocated $g_1$. 
    If agent~$2$ receives only $g_1$, then agent~$2$ will envy agent~$1$ even after dropping the least-valued good (in agent~$2$'s view) from agent~$1$'s bundle, and the allocation is not WWEFX.
    To see this, $v_2(A_1) = v_2(A_1 \setminus \{g\}) = 3$ and $v_2(A_2\cup \{g\}) = v_2(A_2) = 2$ and $\frac{v_2(A_2)}{v_2(A_1\setminus \{g \})} = \frac{v_2(A_2 \cup \{g \})}{v_2(A_1)} = 2/3$. 
    If agent~$2$ is allocated $g_1$ and $|A_2| \geq 2$, then agent~$1$ will envy agent~$2$ even after dropping the least-valued good (in agent~$1$'s view) from agent~$2$'s bundle, and the allocation is not WWEFX.
    To see this, $v_1(A_2) \geq v_1(A_2 \setminus \{g\}) \geq 2$ and $v_1(A_1 \cup \{g\}) \geq v_1(A_1) \leq 2$.
    Hence, $\frac{v_1(A_1)}{v_1(A_2 \setminus \{g\})} \leq 1$ and $\frac{v_1(A_1 \cup \{g\})}{v_1(A_2)} \leq 1$ and $\mathcal{A}$ satisfy WWEFX.
\subsection{Proof of Theorem \ref{1/4-WEFX}}

Let agent~$1$ have weight $w_1$ and agent~$2$ have weight $w_2$. Normalize the weights and the agents' valuation functions (i.e., $w_1 + w_2 = 1$ and $v_1(g) = v_2(g) = 1$). Let $A \subseteq G$ be the bundle of goods such that the function $f(A) = \min (v_1(A) - w_1, 0) + \min (v_2(G \setminus A) - w_2, 0)$ is maximized. If there are ties, we pick the subset with the greatest cardinality.  If $f(A) = 0$, then $v_1(A) \geq w_1$ and $v_2(G \setminus A) \geq w_2$, and the allocation is WEFX. Otherwise, $v_1(A) < w_1$ and $v_2(G \setminus A) < w_2$ or $v_1(A) < w_1$, $v_2(G \setminus A) \geq w_2$ or $v_1(A) \geq w_1$, $v_2(G \setminus A) < w_2$. By renaming the agents, we can see that the latter two cases are identical, and thus we can solely consider the case $v_1(A) < w_1$, $v_2(G \setminus A) \geq w_2$.

We first prove several lemmas that will be used in both cases before delving into the two cases.
\begin{lemma}\label{aWEFX,lemma}
    (1) If there is a $g_1 \in G\setminus A$ such that $v_1(g_1) > v_2(g_1)$, then $v_1(A) + v_1(g_1) \geq w_1$. (2) If there is a $g_2 \in A$ such that $v_2(g_2) > v_1(g_2)$, then $ v_1(G \setminus A) + v_1(g_2) \geq w_2$.
\end{lemma}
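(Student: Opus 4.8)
The plan is to derive both parts from the maximality (and the cardinality tie-break) of $A$ together with the standing hypothesis of the regime in which this lemma is invoked, namely that agent~$1$ is under-satisfied, $v_1(A) < w_1$ (this holds in both cases the lemma feeds into, since after the renaming step agent~$1$ is always the short agent). Throughout I would use additivity, the normalization $v_1(G)=v_2(G)=1$ and $w_1+w_2=1$, and abbreviate $a := v_1(A)-w_1$ and $b := v_2(G\setminus A)-w_2$, so that $f(A)=\min(a,0)+\min(b,0)$.

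For part~(1) I would argue by contradiction: assume $v_1(A)+v_1(g_1)<w_1$ for some $g_1\in G\setminus A$ with $v_1(g_1)>v_2(g_1)$, and compare $A$ with the competitor $A'=A\cup\{g_1\}$, whose $f$-summands are $\min(a+v_1(g_1),0)$ and $\min(b-v_2(g_1),0)$. The assumption gives $a+v_1(g_1)<0$, hence $a<0$, so the first summand increases by exactly $v_1(g_1)$; for the second summand I would apply the elementary inequality $\min(x-c,0)\ge\min(x,0)-c$ (valid for $c\ge 0$) with $c=v_2(g_1)$, so it decreases by at most $v_2(g_1)$. Summing, $f(A')-f(A)\ge v_1(g_1)-v_2(g_1)>0$, contradicting the maximality of $A$; therefore $v_1(A)+v_1(g_1)\ge w_1$.

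For part~(2) the crucial point — and exactly where the naive mirror-image argument fails — is that the stated inequality is written in agent~$1$'s valuation, $v_1(G\setminus A)+v_1(g_2)\ge w_2$, and is \emph{not} the symmetric analogue of part~(1) (which would read $v_2(G\setminus A)+v_2(g_2)\ge w_2$). Using $v_1(G\setminus A)=1-v_1(A)$ and $w_2=1-w_1$, I would first rewrite the target as the equivalent statement
\begin{equation*}
v_1(G\setminus A)+v_1(g_2)\ge w_2 \quad\Longleftrightarrow\quad v_1(A)-v_1(g_2)\le w_1 ,
\end{equation*}
that is, $v_1(A\setminus\{g_2\})\le w_1$. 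Since we are in the regime $v_1(A)<w_1$ and $v_1(g_2)\ge 0$, this is immediate: $v_1(A\setminus\{g_2\})=v_1(A)-v_1(g_2)\le v_1(A)<w_1$. (Equivalently, the exchange $A'=A\setminus\{g_2\}$ would force $v_1(A)>w_1$ under the negation of the target, directly clashing with $v_1(A)<w_1$.)

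The main obstacle is conceptual rather than computational: one must avoid reflexively proving the symmetric $v_2$-statement and instead notice that, after rewriting via $v_1(G)=1$, part~(2) is a statement purely about agent~$1$'s value of $A$, which is controlled by the case hypothesis $v_1(A)<w_1$ and not by an exchange on $f$. This reduction is genuinely necessary: without $v_1(A)<w_1$ the $v_1$-version can fail (a maximizer with $f(A)=0$ and $v_1(A)>w_1$ can violate it), so the statement is only valid in the regime where it is applied. The sole routine care is the min-term bookkeeping in part~(1), where $\min(x-c,0)\ge\min(x,0)-c$ must be invoked in the correct direction.
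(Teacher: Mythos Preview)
Your argument for part~(1) is essentially the paper's: both assume $v_1(A)+v_1(g_1)<w_1$, move $g_1$ into $A$, and obtain $f(A\cup\{g_1\})-f(A)\ge v_1(g_1)-v_2(g_1)>0$ via exactly the min-bookkeeping you describe.

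The divergence is in part~(2). The paper dispatches it in one line, ``by renaming the agents, statement~2 is identical to statement~1'', i.e.\ it runs the same exchange argument with the roles of the two agents swapped. That symmetry actually yields $v_2(G\setminus A)+v_2(g_2)\ge w_2$; the $v_1$'s in the printed conclusion are typos, and indeed the downstream Case~1 analysis (and the companion Lemmas that follow) use the $v_2$-version. You instead took the printed $v_1$-statement at face value, rewrote it via $v_1(G)=1$ and $w_1+w_2=1$ as $v_1(A\setminus\{g_2\})\le w_1$, and derived it directly from the ambient hypothesis $v_1(A)<w_1$. That argument is entirely correct for this reading---and, as you observe, it neither uses $v_2(g_2)>v_1(g_2)$ nor survives without the case hypothesis---but it proves a different inequality than the one the paper intends and later needs. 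So both routes are valid; the paper's buys the symmetric $v_2$-bound required downstream, while yours cleanly handles the literal (mis)statement under the standing assumption.
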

\begin{proof}
     By renaming the agents, statement 2 is identical to statement 1, and thus it suffices to show statement 1. Suppose, $v_1(A) + v_1(g_1) < w_1$. We note that for all $g_2 \in G$, $\min\{v_2(G \setminus A) -w_2, 0\} - \min\{v_2(G \setminus (A \cup \{g_1\})) -w_2, 0\} <v_2(g_1)$. Then,
    \begin{align*}
        &f(A \cup \{g_1\}) - f(A) \\&= \min\{v_1(A \cup \{g_1\}) -w_1, 0\} - \min\{v_1(A \cup \{g_1\}) -w_1, 0\} \,\, +\\&  \min\{v_2(G \setminus (A \cup \{g_1\})\} -w_2, 0) - \min\{v_2(G \setminus A) -w_2, 0\}\\ 
        &\geq v_1(g_1) + \min\{v_2(G \setminus (A \cup \{g_1\}) -w_2, 0\} - \min\{v_2(G \setminus A) -w_2, 0\} \\&
 \text{ (as $v_1(A) + v_1(g_1) <w_1$ by assumption)} \\ 
 &\geq v_1(g_1) - v_2(g_1)  \\
        &> 0 \text{ (as } v_1(g_1) > v_2(g_2) \text{)}\\
    \end{align*}
Thus, contradicting our choice of $\mathcal{A}$.
\end{proof}
\begin{lemma}\label{aWEFX,lemma1}
    (1) If there is a $g_1 \in G\setminus A$ such that $v_1(g_1) > v_2(g_1)$, then $\min\{w_1 -v_1(A), v_1(g_1)\} \leq w_2$. (2) If there is a $g_2 \in A$ such that $v_2(g_2) > v_1(g_2)$, then $\min\{w_2 -v_2(G \setminus A), v_2(g_2)\} \leq w_1$.
\end{lemma}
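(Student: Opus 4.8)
The plan is to prove statement (1) and get statement (2) for free by symmetry. The potential $f(A)=\min\{v_1(A)-w_1,0\}+\min\{v_2(G\setminus A)-w_2,0\}$ is invariant under simultaneously swapping the two agents and replacing $A$ by $G\setminus A$, so $G\setminus A$ is the maximizer for the swapped instance, and statement (1) applied there reads exactly as statement (2). Hence I focus on (1), assuming $g_1\in G\setminus A$ with $v_1(g_1)>v_2(g_1)$. If $v_1(A)\ge w_1$ then $w_1-v_1(A)\le 0\le w_2$ and the claim is immediate, so I may assume $v_1(A)<w_1$; in that case $\min\{w_1-v_1(A),\,v_1(g_1)\}\le w_1-v_1(A)$, and it suffices to prove $w_1-v_1(A)\le w_2$. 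Note this argument uses only the maximality of $A$ and never assumes $v_2(G\setminus A)\ge w_2$, so the lemma will hold in both cases of the main proof.

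The key idea is to test the maximality of $A$ against $A'=A\cup\{g_1\}$. First, Lemma~\ref{aWEFX,lemma}(1) gives $v_1(A)+v_1(g_1)\ge w_1$, so by additivity $v_1(A')\ge w_1$ and agent~$1$'s deficit term vanishes at $A'$, while at $A$ it equals $v_1(A)-w_1$ since $v_1(A)<w_1$. Writing out $f(A')\le f(A)$ and cancelling therefore yields
\[
\min\{v_2(G\setminus A')-w_2,\,0\}-\min\{v_2(G\setminus A)-w_2,\,0\}\ \le\ v_1(A)-w_1 .
\]
The final step is to bound the left-hand side from below by $-w_2$: since valuations are non-negative, $v_2(G\setminus A')\ge 0$ forces $\min\{v_2(G\setminus A')-w_2,0\}\ge -w_2$, while $-\min\{v_2(G\setminus A)-w_2,0\}\ge 0$. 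Combining these gives $-w_2\le v_1(A)-w_1$, i.e.\ $w_1-v_1(A)\le w_2$, which is exactly what remained.

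The only delicate point is the careful handling of the clamped deficit terms $\min\{\cdot,0\}$, which behave piecewise depending on the sign of $v_2(G\setminus A)-w_2$. The crucial observation that tames them is that agent~$2$'s deficit is always at least $-w_2$ (because her bundle has non-negative value), so removing the single good $g_1$ from $G\setminus A$ can depress her deficit term by at most $w_2$. I expect this bookkeeping---together with checking that Lemma~\ref{aWEFX,lemma}(1) applies unconditionally so that agent~$1$'s term genuinely vanishes at $A'$---to be the main (though modest) obstacle; everything else is a direct rearrangement of the maximality inequality using non-negativity and additivity of the valuations.
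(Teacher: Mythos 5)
Your proof is correct, but it decomposes the argument differently from the paper. The paper's proof is self-contained and proceeds by contradiction: assuming $\min\{w_1-v_1(A),v_1(g_1)\}>w_2$, it expands $f(A\cup\{g_1\})$ via the identity $\min\{v_1(A)+v_1(g_1)-w_1,\,0\}=v_1(A)-w_1+\min\{w_1-v_1(A),\,v_1(g_1)\}$, applies the same lower bound $\min\{v_2(G\setminus(A\cup\{g_1\}))-w_2,\,0\}\geq -w_2$ that you use, and concludes $f(A\cup\{g_1\})>f(A)$, contradicting the choice of $A$---it never invokes Lemma~\ref{aWEFX,lemma}. You instead import Lemma~\ref{aWEFX,lemma}(1) to make agent~$1$'s deficit term vanish at $A'=A\cup\{g_1\}$, which eliminates the $\min$ with $v_1(g_1)$ altogether and lets you read off, directly from $f(A')\leq f(A)$, the strictly stronger conclusion $w_1-v_1(A)\leq w_2$, from which the stated lemma follows immediately. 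Both proofs share the essential mechanism (perturb to $A\cup\{g_1\}$ and tame agent~$2$'s clamped deficit using non-negativity of valuations), so the trade-off is one of organization: the paper's version stands alone and needs only maximality plus the min-shift identity, while yours is shorter and yields a cleaner, stronger inequality at the cost of a dependence on Lemma~\ref{aWEFX,lemma}---a harmless dependence, since, as you correctly verify, that lemma holds unconditionally given only maximality of $A$ and the existence of such a $g_1$. Your symmetry reduction of (2) to (1) via the involution swapping the agents and replacing $A$ by $G\setminus A$ is exactly the paper's ``renaming'' step, merely spelled out, and your remark that nothing assumes $v_2(G\setminus A)\geq w_2$ correctly matches how the lemma is deployed in both cases of Theorem~\ref{1/4-WEFX}.
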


\begin{proof}
    By renaming the agents, statement 2 is identical to statement 1, and thus it suffices to show statement 1. Suppose that $\min\{w_1 -v_1(A), v_1(g_1)\} > w_2$. Then,
    \begin{align*}
        &f(A \cup \{g_1\}) \\&= \min\{v_1(A \cup \{g_1\}) -w_1, 0\} + \min\{v_2(G \setminus (A \cup \{g_1\})\} -w_2, 0) \\ 
        &\geq \min\{v_1(A) + v_1(g_1) -w_1, 0\} - w_2 \\ 
        &= v_1(A) -w_1 + \min\{w_1 - v_1(A),v_1(g_1)\} - w_2 \\
        &> v_1(A) - w_1 \text{ (as } \min\{w_1 - v_1(A),v_1(g_1)\} > w_2 \text{ by assumption)}\\
        &\geq \min\{v_1(A) -w_1, 0\} + \min\{v_2(G \setminus A) -w_2, 0\}\\
        &= f(A).
    \end{align*}
    Thus, $f(A \cup \{g_1\}) > f(A)$, contradicting our choice of $A$.
\end{proof}

\begin{lemma}\label{aWEFX,lemma2}
     For an allocation $B$, (1) if $w_1 - v_1(B)  \leq w_2$ and $v_1(B) \geq w_1/2$, then $\frac{v_1(B)}{w_1} \geq \frac{1}{4}\frac{v_1(G \setminus B)}{w_2}$. (2) if $w_2 - v_2(G \setminus B)  \leq w_1$  and $v_2(G \setminus A) \geq w_2/2$, then $\frac{v_2(G \setminus B)}{w_2} \geq \frac{1}{4} \frac{v_2(B)}{w_1}$.
\end{lemma}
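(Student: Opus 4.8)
The plan is to treat Lemma~\ref{aWEFX,lemma2} as a purely algebraic fact: unlike Lemmas~\ref{aWEFX,lemma} and~\ref{aWEFX,lemma1}, it makes no reference to the $f$-maximality of $A$, only to the two stated hypotheses on the bundle $B$, so no combinatorial or extremal argument is needed. As in those earlier lemmas, statement~(2) is obtained from statement~(1) by renaming the two agents (swapping the roles of $w_1,w_2$ and of $B$ and $G\setminus B$), so I would prove only~(1).

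First I would introduce the shorthand $x = v_1(B)$. By the normalization $v_1(G)=1$ together with additivity, $v_1(G\setminus B) = 1-x$, and since $w_1+w_2=1$ we have $w_2 = 1-w_1$; note also $w_1\in(0,1)$, so $4-3w_1>1>0$. The two hypotheses then read $x \ge w_1/2$ and (rewriting $w_1 - v_1(B)\le w_2$) $x \ge 2w_1-1$. The target inequality $\frac{x}{w_1}\ge \frac14\cdot\frac{1-x}{w_2}$ can be cleared of denominators: multiplying through by $4w_1(1-w_1)>0$ shows it is equivalent to $x(4-3w_1)\ge w_1$, i.e.\ to
\[
 x \ \ge\ \frac{w_1}{4-3w_1}.
\]
Thus it suffices to prove $\max\{\,w_1/2,\ 2w_1-1\,\}\ \ge\ \frac{w_1}{4-3w_1}$, i.e.\ that at least one of the two available lower bounds on $x$ already dominates the required threshold.

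The remaining work is a case split at $w_1 = 2/3$. When $w_1\le 2/3$ I would use $x\ge w_1/2$ and observe that $\frac{w_1}{2} \ge \frac{w_1}{4-3w_1}$ is equivalent (dividing by $w_1>0$) to $4-3w_1\ge 2$, which is exactly $w_1\le 2/3$. When $w_1>2/3$ I would instead use $x\ge 2w_1-1$ (which is strictly positive there) and clear denominators: $2w_1-1\ge \frac{w_1}{4-3w_1}$ reduces to $3w_1^2-5w_1+2\le 0$, which factors as $(3w_1-2)(w_1-1)\le 0$ and holds on $(2/3,1)$ since the first factor is positive and the second negative. In either regime $x$ meets the threshold, which gives the desired inequality.

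The only point requiring care—the main obstacle—is that neither hypothesis alone suffices across the whole range of $w_1$: the bound $x\ge w_1/2$ is too weak for $w_1$ near $1$, while $x\ge 2w_1-1$ is vacuous for $w_1$ near $1/2$. The correct move is precisely to combine them, letting each cover one side of the crossover at $w_1=2/3$. Everything else is routine algebra, with the sign conditions ($w_1\in(0,1)$ and $4-3w_1>0$) recorded once so that the cross-multiplications preserve direction.
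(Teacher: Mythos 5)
Your proof is correct, and it reaches the conclusion by a genuinely different route from the paper's. The paper's own argument for statement (1) is a two-line joint use of the hypotheses, with no case analysis: the first hypothesis bounds the denominator side via $v_1(G\setminus B) = 1 - v_1(B) = w_2 + \bigl(w_1 - v_1(B)\bigr) \le 2w_2$, the second gives $v_1(B)/w_1 \ge 1/2$, and combining yields $\frac{v_1(B)/w_1}{v_1(G\setminus B)/w_2} \ge \frac{1/2}{2} = \frac{1}{4}$, so the constant $4$ is transparently the product of one factor of $2$ from each hypothesis. You instead clear denominators to the exact reformulation $x \ge \frac{w_1}{4-3w_1}$ (your algebra checks out: the sign conditions, the equivalence $\frac{w_1}{2} \ge \frac{w_1}{4-3w_1} \iff w_1 \le \frac{2}{3}$, and the factorization $(3w_1-2)(w_1-1)\le 0$ on $(2/3,1)$ are all correct, and additivity together with the normalizations $v_1(G)=1$ and $w_1+w_2=1$ is indeed available from the surrounding proof of Theorem \ref{1/4-WEFX}), then deploy the two hypotheses one at a time, each covering one side of the crossover at $w_1 = 2/3$. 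Your route is longer but buys two things: since your reformulation is an equivalence rather than a chain of one-sided estimates, it makes visible that the step is tight---at $w_1 = 2/3$ both lower bounds on $x$ equal the threshold $1/3$, and $x = 1/3$ attains the ratio exactly $\frac{1}{4}$---so the lemma's constant cannot be improved; it also sidesteps the formally awkward division by $v_1(G\setminus B)$ in the paper's ratio when that quantity is $0$. What the paper's route buys is brevity and the structural point that the hypotheses need not be played off against each other via a maximum and a case split: used jointly (one on the numerator, one on the denominator) they suffice uniformly in $w_1$, so the obstacle you describe as the main difficulty dissolves under the multiplicative combination.
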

\begin{proof}
    Again by renaming the agents, statement 2 is identical to statement 1, and thus it suffices to show statement 1. We first note that $v_1(G \setminus B) = 1 - v_1(B) = w_2 + w_1 - v_1(b) \leq 2 w_2$. Thus, $\frac{v_1(B) / w_1}{v_1(G \setminus B) / w_2} \geq \frac{0.5}{2} = \frac{1}{4}$.
\end{proof}

With the lemmas above, we will delve into the two cases mentioned before.
\newline\newline
\textbf{Case 1:} $v_1(A) < w_1$ and $v_2(G\setminus A) < w_2$.

We first show that there is a $g_1 \in G\setminus A$ such that $v_1(g_1) > v_2(g_1)$. Otherwise, if for all $g \in A$, $v_1(g) \geq v_2(g)$, then $1 = v_2(G) = v_2(G \setminus A) + v_2(A) \leq v_2(G \setminus A) + v_1(A) < w_1 + w_2 = 1$ and we arrive at a contradiction. Through a similar logic, we can show that there is a $g_2 \in A$ such that $v_2(g_2) > v_1(g_2)$. Furthermore, by lemma \ref{aWEFX,lemma}, $v_1(A) + v_1(g_1) \geq w_1$ and  $v_2(G \setminus A) + v_1(g_2) \geq w_2$

If $v_1(A) \geq w_1/2$ and $v_2(A) \geq w_2/2$, then lemma \ref{aWEFX,lemma1} and lemma \ref{aWEFX,lemma2} together implies the allocation $(A, G\setminus A)$ is $\frac{1}{4}$-WEFX. Otherwise, without loss of generality, suppose that $v_1(A) < w_1/2$. As $v_1(A) + v_1(g_1) \geq w_1$, we know that $v_1(g_1) > w_1/2$. Thus, consider the allocation $(\{g_1\}, G \setminus \{g_1\})$. As agent~$1$ receives only 1 item, agent~$2$ would not envy agent~$1$ bundle after removing the item from agent~$1$ bundle. Furthermore, $w_1 - v_1(\{g\}) < w_1 - v_1(A) \text{ as $v_1(A) \leq v_1(\{g\}$) } \leq w_2$. Thus, by lemma \ref{aWEFX,lemma2}, $\frac{v_1(g)}{w_1} \geq \frac{1}{4}\frac{v_1(G \setminus \{g\})}{w_2}$. Thus, the allocation is $\frac{1}{4}$-WEFX.
\newline\newline
\textbf{Case 2:} $v_1(A) < w_1$ and $v_2(G \setminus A) \geq w_2$.

 First, if there is a $g_1 \in G\setminus A$ such that $v_1(g_1) > v_2(g_1)$, then through a similar reasoning as the the previous case, either the allocation $(A, G\setminus A)$ or $(\{g_1\}, G \setminus \{g_1\})$ is $\frac{1}{4}$-WEFX. Otherwise, for all $g \in G \setminus A$, $v_1(g) \leq v_2(g).$  If $|G \setminus A| = 1$, then the allocation $(A, G\setminus A)$ is trivially WEFX (and thus $\frac{1}{4}$-WEFX). Otherwise, $|G \setminus A| \geq 2$. We note that for all $g \in G \setminus A$, $v_2(G \setminus (A \cup \{g\})) < w_2$. Otherwise, $f(A \cup \{g\}) \geq f(A)$ and $|A \cup \{g\}| > |A|$, thus contradicting our choice of $A$. This leads us to 2 observations.
\begin{enumerate}
    \item For all $g \in G \setminus A$, $v_1(A \cup \{g\}) \geq w_1$. As for all $g \in G \setminus A$, $v_1(g) \leq v_2(g)$ and $v_2(G \setminus (A \cup \{g\})) < w_2)$, this implies that $v_1(G \setminus (A \cup \{g\})) < w_2$. Thus, $v_1(A \cup \{g\}) = 1 - v_1(G \setminus (A \cup \{g\})) = w_1 + w_2 -  v_1(G \setminus (A \cup \{g\})) \geq w_1$. 
    \item There exists a $g^* \in G\setminus A$ such that $v_2(g) < w_2$. This is because $|G \setminus A| \geq 2$ and for all $g \in G \setminus A$, $v_2(G \setminus (A \cup \{g\})) < w_2$.
\end{enumerate}

From observation 2, as $v_1(A \cup \{g^*\}) \geq w_1$, either $v_1(A) \geq w_1/2$ or $v_1(g^*) \geq w_1/2$. If, $v_1(A)\geq w_1/2$, we have that
$w_1 - v_1(A) < v_1(g^*) < w_2$ and by lemma \ref{aWEFX,lemma2}, the allocation $(A,G\setminus A)$ is $\frac{1}{4}$-WEFX. If $v_1(g^*)\geq w_1/2$, we have that $w_1 - v_1(g^*) \leq w_1/2 \leq v_1(g^*) < w_2$ and by lemma \ref{aWEFX,lemma2}, the allocation $(\{g^*\},G\setminus (\{g^*\})$ is $\frac{1}{4}$-WEFX.
\section{Omitted Proofs from Section 5}
\subsection{Proof of Theorem \ref{EFX+,n=2}}

    We first show that the first agent always has a valid cut.
    For any agent $i \in N$ with valuation function $v$, and bundles $A_1,A_2$, we say that $A_1 \prec_{leximax} A_2$ if and only if $v(A_1) > v(A_2)$, or $v(A_1) = v(A_2)$ and $|A_1| > |A_2|$. 

    For all allocations $A = (A_1,A_2)$, assume without loss of generality that $v_1(A_1) \leq v_1(A_2)$. 
    Then, agent~$1$ will select the partition such that $v(A_2)$ is the maximal (by $\prec_{leximax}$) across all possible partitions. 
    Suppose that agent~$1$ envies agent~$2$ even after adding some good from agent~$2$'s bundle if agent~$2$ receives $A_2$, i.e., there exists a $g \in A_2$ such that $v_1(A_1 \cup \{g\}) < v_1(A_2)$. 
    Consider the partition  $(A_1\cup \{g\},A_2 \setminus \{g\})$. 
    For $v_1$, $A_2 \prec_{leximax}  A_1\cup \{g\}$ and $A_2 \prec_{leximax}  A_2 \setminus \{g\}$, giving us a contradiction as $v(A_2)$ is not maximal.

    The argument for the lower bound on the number of EFX allocations when $n =2$ \cite{SUKSOMPONG2020606} extends to this setting as both employ a `cut-and-choose' algorithm. 
    The upper bound \cite{SUKSOMPONG2020606} also applies as the example given was under additive valuations, where both EFX and EFX+ are equivalent. Our result follows.
    \subsection{Proof of Proposition \ref{EFX+,3 agents 4 goods}}
    Consider an instance with $n=3$ agents and $m=4$ goods, and for all agents and bundles of goods $B_1,B_2$, $v(B_1) > v(B_2)$ if $|B_1| > |B_2|$. 
    Hence, if an agent receives no goods in an allocation, they would envy any agent that receive two goods even after adding the least-valued good (from the other agent's bundle) to their bundle. 
    Thus, we can only consider allocations where two agents receive one good each, and the last agent receives two goods. 
    As no agents will envy the agents with one good after adding that good to their bundle, the critical part of the valuation function is the ordinal rankings of bundles with two goods. 
    
    For agent~$1$,  $v(\{g_3, g_4\})< v(\{g_1, g_2\})< v(\{g_2, g_3\})< v(\{g_1, g_4\}) < v(\{g_2, g_4\})< v(\{g_1, g_3\})$. For agent~$2$,  $v(\{g_1, g_4\})< v(\{g_2, g_3\})< v(\{g_1, g_2\})< v(\{g_2, g_4\}) < v(\{g_3, g_4\})< v(\{g_1, g_3\})$. For agent 3,  $v(\{g_1, g_3\})< v(\{g_2, g_4\})< v(\{g_1, g_4\})< v(\{g_2, g_3\}) < v(\{g_3, g_4\})< v(\{g_1, g_2\})$.
    
    We then iterate over all 36 allocations (enumerated in Table \ref{EFX+_envy_analysis} below) where all agents receive at least one good; and we see that for each allocation, there is at least one agent that envies another agent even after adding the least-valued good (from the other agent's bundle) to their own bundle.
    
\begin{table}[H]
\centering
\caption{EFX+ envy analysis}
\label{EFX+_envy_analysis}
\begin{tabular}{ccc}
\hline
\textbf{Allocation} & \textbf{Envious Agent} & \textbf{Envied Agent} \\
\hline
\(\{g_1, g_2\}\), \(\{g_3\}\), \(\{g_4\}\) & 2 & 1 \\
\(\{g_1, g_2\}\), \(\{g_4\}\), \(\{g_3\}\) & 2 & 1 \\
\(\{g_1, g_3\}\), \(\{g_2\}\), \(\{g_4\}\) & 2 & 1 \\
\(\{g_1, g_3\}\), \(\{g_4\}\), \(\{g_2\}\) & 2 & 1 \\
\(\{g_1, g_4\}\), \(\{g_2\}\), \(\{g_3\}\) & 3 & 1 \\
\(\{g_1, g_4\}\), \(\{g_3\}\), \(\{g_2\}\) & 3 & 1 \\
\(\{g_2, g_3\}\), \(\{g_1\}\), \(\{g_4\}\) & 3 & 1 \\
\(\{g_2, g_3\}\), \(\{g_4\}\), \(\{g_1\}\) & 3 & 1 \\
\(\{g_2, g_4\}\), \(\{g_1\}\), \(\{g_3\}\) & 2 & 1 \\
\(\{g_2, g_4\}\), \(\{g_3\}\), \(\{g_1\}\) & 2 & 1 \\
\(\{g_3, g_4\}\), \(\{g_1\}\), \(\{g_2\}\) & 2 & 1 \\
\(\{g_3, g_4\}\), \(\{g_2\}\), \(\{g_1\}\) & 2 & 1 \\
\hline
\(\{g_3\}\), \(\{g_1, g_2\}\), \(\{g_4\}\) & 3 & 2 \\
\(\{g_4\}\), \(\{g_1, g_2\}\), \(\{g_3\}\) & 3 & 2 \\
\(\{g_2\}\), \(\{g_1, g_3\}\), \(\{g_4\}\) & 1 & 2 \\
\(\{g_4\}\), \(\{g_1, g_3\}\), \(\{g_2\}\) & 1 & 2 \\
\(\{g_2\}\), \(\{g_1, g_4\}\), \(\{g_3\}\) & 1 & 2 \\
\(\{g_3\}\), \(\{g_1, g_4\}\), \(\{g_2\}\) & 1 & 2 \\
\(\{g_1\}\), \(\{g_2, g_3\}\), \(\{g_4\}\) & 1 & 2 \\
\(\{g_4\}\), \(\{g_2, g_3\}\), \(\{g_1\}\) & 1 & 2 \\
\(\{g_1\}\), \(\{g_2, g_4\}\), \(\{g_3\}\) & 1 & 2 \\
\(\{g_3\}\), \(\{g_2, g_4\}\), \(\{g_1\}\) & 1 & 2 \\
\(\{g_1\}\), \(\{g_3, g_4\}\), \(\{g_2\}\) & 3 & 2 \\
\(\{g_2\}\), \(\{g_3, g_4\}\), \(\{g_1\}\) & 3 & 2 \\
\hline
\(\{g_3\}\), \(\{g_4\}\), \(\{g_1, g_2\}\) & 2 & 3 \\
\(\{g_4\}\), \(\{g_3\}\), \(\{g_1, g_2\}\) & 2 & 3 \\
\(\{g_2\}\), \(\{g_4\}\), \(\{g_1, g_3\}\) & 1 & 3 \\
\(\{g_4\}\), \(\{g_2\}\), \(\{g_1, g_3\}\) & 1 & 3 \\
\(\{g_2\}\), \(\{g_3\}\), \(\{g_1, g_4\}\) & 1 & 3 \\
\(\{g_3\}\), \(\{g_2\}\), \(\{g_1, g_4\}\) & 1 & 3 \\
\(\{g_1\}\), \(\{g_4\}\), \(\{g_2, g_3\}\) & 1 & 3 \\
\(\{g_4\}\), \(\{g_1\}\), \(\{g_2, g_3\}\) & 1 & 3 \\
\(\{g_1\}\), \(\{g_3\}\), \(\{g_2, g_4\}\) & 1 & 3 \\
\(\{g_3\}\), \(\{g_1\}\), \(\{g_2, g_4\}\) & 1 & 3 \\
\(\{g_1\}\), \(\{g_2\}\), \(\{g_3, g_4\}\) & 2 & 3 \\
\(\{g_2\}\), \(\{g_1\}\), \(\{g_3, g_4\}\) & 2 & 3 \\
\hline
\end{tabular}
\end{table}
\newpage
\section{Omitted Proofs from Section 6}
\subsection{Remark 1}

For the following instances, one can easily verify that these instances have exactly $n$ EFX allocations by iterating all $m ^n$ possible allocations.

For $n = 2, m = 4$,
\begin{table}[h!]
\centering
\begin{tabular}{c||c|c|c|c}
agent & $g_1$ & $g_2$ &$g_3$ & $g_4$ \\
\hline\hline
1 & 1 & 0 & 0  & 0 \\
2 &  1 & 0 & 0  & 0  \\
\end{tabular}
\end{table}

For $n = 3, m = 5$,
\begin{table}[h!]
\centering
\begin{tabular}{c||c|c|c|c|c}
agent & $g_1$ & $g_2$ &$g_3$ &$g_4$ &$g_5$   \\
\hline\hline
1 & 281 & 472 & 47  & 660 & 36 \\
2 & 569 & 936 & 173 & 343 & 135 \\
3 & 522 & 641 & 52  & 793 & 571 \\
\end{tabular}
\end{table}

For $n = 4, m = 6$,
\begin{table}[h!]
\centering
\begin{tabular}{c||c|c|c|c|c|c}
agent & $g_1$ & $g_2$ &$g_3$ &$g_4$ &$g_5$ &$g_6$   \\
\hline\hline
1 & 95  & 114 & 196 & 171 & 871 & 667 \\
2 & 254 & 973 & 200 & 240 & 907 & 536 \\
3 & 3   & 910 & 444 & 627 & 730 & 693 \\
4 & 382 & 651 & 425 & 182 & 548 & 811 \\
\end{tabular}
\end{table}

\end{document}